\providecommand{\@LN}[2]{}
\newtheorem{theorem}{Theorem}
\newtheorem{remark}{Remark}
\newtheorem{lemma}{Lemma}
\newtheorem{proposition}{Proposition}
 \newenvironment{assumption}
   {\assumptionex}
   {\endassumptionex}
\newenvironment{assumptionp}
  {\assumptionexp}
  {\endassumptionexp}
\title{Extending Kernel Testing To General Designs}
\author[,1]{A. Ozier-Lafontaine}
\author[1,2]{P. Arsenteva}
\author[,2]{F. Picard\thanks{joint last authors}}
\author[$*$,1]{B. Michel\thanks{To whom correspondence should be addressed: \texttt{Bertrand.Michel@ec-nantes.fr}, \texttt{franck.picard@ens-lyon.fr}}}
\affil[1]{Nantes Universit\'e, Centrale Nantes,Laboratoire de Math\'ematiques Jean Leray, CNRS UMR 6629, F-44000, Nantes, France}
\affil[2]{Laboratory of Biology and Modelling of the Cell, Université de Lyon, Ecole Normale Supérieure de Lyon, CNRS, UMR5239, Université Claude Bernard Lyon 1, Lyon, France}
\begin{document}

\stackMath
\newcommand\reallywidehat[1]{%
\savestack{\tmpbox}{\stretchto{%
  \scaleto{%
    \scalerel*[\widthof{\ensuremath{#1}}]{\kern-.6pt\bigwedge\kern-.6pt}%
    {\rule[-\textheight/2]{1ex}{\textheight}}
  }{\textheight}%
}{0.5ex}}%
\stackon[1pt]{#1}{\tmpbox}%
}

\newcommand{\R}{\mathbb{R}} 
\newcommand{\Nat}{\mathbb{N}} 
\newcommand{\CC}{\mathbb{C}} 
\newcommand{\Max}[1]{\underset{#1}{\operatorname{max}}}
\newcommand{\argmax}[1]{\underset{#1}{\operatorname{argmax}}}
\newcommand{\argmin}[1]{\underset{#1}{\operatorname{argmin}}}
\newcommand{\mtrx}[1]{\mathcal{M}_{#1}(\R)}
\newcommand{\var}{\operatorname{Var}}
\newcommand{\diag}{\operatorname{Diag}}
\newcommand{\Span}{\operatorname{Span}}
\newcommand{\trhs}{\operatorname{tr}}
\newcommand{\trmtrx}{\operatorname{trace}}

\newcommand{\lch}{\odot} 
\newcommand{\identity}{\mathbf{I}}
\newcommand{\fullofones}{\mathbf{J}}

\newcommand{\proba}{\mathbb{P}} 
\newcommand{\hproba}{\mathbb{P}_\nobs} 
\newcommand{\E}{\mathbb{E}}
\newcommand{\EP}{\mathbb{E}_{\proba}}
\newcommand{\EPi}[1]{\mathbb{E}_{\proba_{#1}}}
\newcommand{\cov}{\operatorname{Cov}}

\newcommand{\sY}{\mathcal{Y}}
\newcommand{\sYtribe}{\mathfrak{Y}}

\newcommand{\fy}{y} 
\newcommand{\dy}{Y} 
\newcommand{\dY}{\mathbf{Y}}
\newcommand{\hdy}{\widehat{Y}}
\newcommand{\hdY}{\widehat{\mathbf{Y}}}

\newcommand{\nobs}{n}
\newcommand{\ydim}{m}

\newcommand{\gram}{\mathbf{K}_{\dY}}
\newcommand{\grami}[1]{\mathbf{K}_{\dY_{#1}}}
\newcommand{\kernel}{k}
\newcommand{\kerneldot}{k(\cdot,\cdot)}

\newcommand{\fmap}[1]{\phi(#1)} 
\newcommand{\fmapdot}{\phi(\cdot)} 
\newcommand{\fmapk}[1]{\kernel(#1,\cdot)}

\newcommand{\Hk}{\mathcal{H}}
\newcommand{\HSk}{\operatorname{HS}(\mathcal{H})}
\newcommand{\Hb}{g}

\newcommand{\eY}{\Phi(\dY)} 
\newcommand{\eYi}[1]{\Phi(\dY_{#1})} 
\newcommand{\ey}[1]{\phi(\dy_{#1})} 
\newcommand{\efy}{\phi(\fy)} 
\newcommand{\eYT}{\Phi_{\tmax}(\dY)} 
\newcommand{\eyT}[1]{\phi_{\tmax}(\dy_{#1})} 
\newcommand{\heY}{\widehat{\Phi(\dY)}}
\newcommand{\hey}[1]{\widehat{\phi(\dy_{#1})}}

\newcommand{\kme}{\mu} 
\newcommand{\hkme}{\widehat{\mu}} 
\newcommand{\kcov}{\Sigma_{\mathbb{P}}} 
\newcommand{\kcovi}[1]{{\Sigma}_{#1}} 
\newcommand{\hkcov}{\widehat{\Sigma}_{\mathbb{P}}} 
\newcommand{\hkcovi}[1]{\widehat{\Sigma}_{#1}} 
\newcommand{\hkcovkt}{\mathbf{K}_{\Sigma}} 
\newcommand{\hPcov}{\widehat{{\Pi}}_{\proba,\tmax}}

\newcommand{\eigveckcov}{f^{\proba}}
\newcommand{\eigvalkcov}{\lambda^{\proba}}
\newcommand{\heigveckcov}{\widehat{f}^{\proba}}
\newcommand{\heigvalkcov}{\widehat{\lambda}^{\proba}}
\newcommand{\heigveckcovkt}{u}

\newcommand{\sX}{\mathcal{X}}
\newcommand{\dX}{\mathbf{X}}
\newcommand{\dx}{x}
\newcommand{\dXcol}{X}
\newcommand{\XXX}{\XXinv \dX^\prime}
\newcommand{\XXinv}{(\dX^\prime \dX)^{-}}
\newcommand{\XXXprime}{\dX\XXinv}
\newcommand{\LXXXprime}{\dX\XXinv\Lmanova'}
\newcommand{\lxxxi}{W}
\newcommand{\lxxxj}{W_{\ast,j}}
\newcommand{\XXlim}{W}
\newcommand{\nx}{p}
\newcommand{\rangx}{\nx}

\newcommand{\Pxn}{\boldsymbol{\Pi}_\nobs}
\newcommand{\Pxm}[1]{\boldsymbol{\Pi}_{#1}}
\newcommand{\Pxw}{\boldsymbol{\Pi}_W}
\newcommand{\Px}{\mathbf{P}_{\dX}}
\newcommand{\Pxvalue}{\dX \XXX }
\newcommand{\Pxi}[1]{{\pi}_{#1}}
\newcommand{\Pxij}[2]{{\pi_{#1,#2}}}
\newcommand{\Pxp}{{\mathbf{P}_{\dX}^\perp}}
\newcommand{\Pxpi}[1]{\pi^\perp_{#1}}
\newcommand{\Pxpij}[2]{{\pi_{#1,#2}^\perp}}
\newcommand{\Pxev}{v}

\newcommand{\PxpIij}[2]{{\pi_{#1,#2}^{I,\perp}}}
\newcommand{\PxpI}{{\mathbf{P}_{\dX}^{I,\perp}}}

\newcommand{\Hkn}{\Hk^\nobs}
\newcommand{\flandmarks}{\varphi} 
\newcommand{\lX}{{\dX^I}}
\newcommand{\lx}[1]{\dx_{\flandmarks(#1)}}
\newcommand{\lproj}{Q_\lY}
\newcommand{\helY}{\widehat{\elY}}

\newcommand{\lrcov}{\widehat{\Sigma}^{I}}
\newcommand{\lrcovec}{\widehat{f}^{\lY}}
\newcommand{\lrcoval}{\widehat{\lambda}^{\lY}}
\newcommand{\lrcovals}{\widehat{\Lambda^{\lY}}}

\newcommand{\lrcovkt}{\mathbf{K}^\lY_{\error}}
\newcommand{\lrcoveckt}{u^{\lY}}
\newcommand{\lrcovecskt}{\mathbf{U}^{\lY}}

\newcommand{\hresleigval}[1]{\widehat{\lambda^{\lY}_{\errors,{#1}}}}
\newcommand{\hresleigvals}{\widehat{\Lambda^\lY_{\errors,1:\nanchors}}}
\newcommand{\hresleigvecskt}{\mathbf{U}^\lY_{\errors,1:\nanchors}}
\newcommand{\hresleigveckt}[1]{u^\lY_{\errors,#1}}

\newcommand{\arcov}{\widehat{\Sigma}^{\Anc}}
\newcommand{\arcovt}{\widehat{\Sigma}_{\tmax}^{\Anc}}
\newcommand{\arcovbis}{\widetilde{\Sigma}^{\Anc}}
\newcommand{\arcovec}[1]{\widehat{f^{\Anc}_{#1}}}
\newcommand{\arcoval}[1]{\widehat{\lambda^{\Anc}_{#1}}}
\newcommand{\prarcov}[1]{\Pi_{\arcovec{#1}}}

\newcommand{\arcovecs}{{\widehat{f^{\Anc}}}}
\newcommand{\arcovals}{{\widehat{\Lambda^{\Anc}}}}

\newcommand{\arcovkt}{\mathbf{K}^{\Anc}_{\error}}
\newcommand{\arcovecskt}{\mathbf{U}^{\Anc}}
\newcommand{\arcoveckt}{u^{\Anc}}

\newcommand{\params}{\mathbf{\Theta}}
\newcommand{\paramsuni}{\Theta}
\newcommand{\param}{\theta}

\newcommand{\errors}{\mathbf{E}}
\newcommand{\error}{e}

\newcommand{\hparams}{\mathbf{\widehat{\Theta}}}
\newcommand{\hparam}{\widehat{\theta}}
\newcommand{\paramsproj}{\mathbf{K}_{\hparams}}
\newcommand{\aparamsproj}{{\mathbf{K}_{\hparams}^{\Anc}}}

\newcommand{\residuals}{\mathbf{\widehat{E}}}
\newcommand{\residual}{\widehat{e}}

\newcommand{\lresidual}{\widehat{e}^{\lY}}
\newcommand{\lresiduals}{\widehat{E^{\lY}}}
\newcommand{\aresidual}{\widehat{e}^{\Anc}}
\newcommand{\aresiduals}{\widehat{E^{\Anc}}}

\newcommand{\ecov}{\Sigma}
\newcommand{\ecovbar}{{\Bar\Sigma}_n}
\newcommand{\ecovbarI}{{\Bar\Sigma}_I}
\newcommand{\ecovt}{\Sigma_\tmax}
\newcommand{\ecovanc}{\Sigma_\nanchors}
\newcommand{\ecovecs}{f}
\newcommand{\ecovec}{f}
\newcommand{\ecovals}{\boldsymbol{\Lambda}}
\newcommand{\ecoval}{\lambda}

\newcommand{\rcov}{\widehat{\ecov}}
\newcommand{\rcovridge}{\widehat{\ecov}_{\gamma}}
\newcommand{\rcovt}{\widehat{\ecov}_{\tmax}}

\newcommand{\rcovecs}{\widehat {f}}
\newcommand{\rcovals}{\widehat{\boldsymbol{\Lambda}}}
\newcommand{\rcovec}{\widehat{f}}
\newcommand{\rcoval}{\widehat{\lambda}}

\newcommand{\rcovkt}{\mathbf{K}_{\error}}
\newcommand{\rcovecskt}{\mathbf{U}}
\newcommand{\rcovecsktnorm}{\widetilde{\mathbf{U}}_\tmax}
\newcommand{\rcoveckt}{u}
\newcommand{\rcovecktnorm}{\widetilde{u}}
\newcommand{\rcovalkt}{\nu}

\newcommand{\THLcov}{\rcovt^{-1} \Hmanova}
\newcommand{\THLcovkt}{\prey \mathbf{D} \prey^\prime}
\newcommand{\prTHL}{\mathbf{D}}
\newcommand{\nTHL}{\kappa}

\newcommand{\THLcovec}{\widehat{g}}
\newcommand{\THLcoveckt}{v}
\newcommand{\THLcovecskt}{\mathbf{V}}
\newcommand{\THLcoval}{\xi}
\newcommand{\THLcovals}{{\boldsymbol{\Lambda}}_{\xi}}
\newcommand{\THLcovalsnorm}{\widetilde{\boldsymbol{\Lambda}}_{\xi}}

\newcommand{\prrcov}[1]{\Pi_{\rcovec_{#1}}}
\newcommand{\pranc}[1]{\Pi_{\anc_{#1}}}
\newcommand{\precov}[1]{\Pi_{\ecovec_{#1}}}

\newcommand{\ecovtci}[1]{\Sigma_{{#1},\tmax^c}}
\newcommand{\ecovancci}[1]{\Sigma_{{#1},\nanchors^c}}

\newcommand{\linecov}{\mathbf{S}_\tmax}
\newcommand{\linrcov}{\widehat{\mathbf{S}}_\tmax}

\newcommand{\rcovny}{\mathbf{A}}
\newcommand{\rcovanc}{\widehat{\Sigma}^{\Anc}}
\newcommand{\rcovanckt}{\mathbf{}}

\newcommand{\linparams}{{\boldsymbol{\beta}}}
\newcommand{\linparam}{{\beta}}
\newcommand{\hlinparams}{\widehat{\boldsymbol{\beta}}}
\newcommand{\hlinparam}{\widehat{\beta}}

\newcommand{\linerrors}{\boldsymbol{\varepsilon}}
\newcommand{\linerror}{{\varepsilon}}
\newcommand{\linresiduals}{\widehat{\boldsymbol{\varepsilon}}}
\newcommand{\linresidual}{\widehat{\varepsilon}}

\newcommand{\hlSepsilonT}{\widehat{\Sigma}_{\error,\tmax}^{\lY}}

\newcommand{\Pepsilon}{{\Pi}_{\error}}
\newcommand{\PepsilonT}{{\Pi}_{\error,\tmax}}
\newcommand{\hPepsilon}{\widehat{{\Pi}}_{\errors}}
\newcommand{\hPepsilonT}{\widehat{{\Pi}}_{\errors,\tmax}}
\newcommand{\Peigvec}[1]{\Pi_{\eigvec_{#1}}}
\newcommand{\hPeigvec}[1]{\Pi_{\heigvec_{#1}}}

\newcommand{\wkcov}{\Sigma_{W}}
\newcommand{\wkcovridge}{\Sigma_{W,\gamma}}
\newcommand{\hwkcovridge}{\widehat{\Sigma}_{W,\gamma}}
\newcommand{\wkcovt}{\Sigma_{W,\tmax}}
\newcommand{\hwkcov}{\widehat{\Sigma}_{W}}
\newcommand{\hwkcovkt}{\mathbf{K}_{W}}
\newcommand{\hwkcovt}{\widehat{\Sigma}_{W,\tmax}}

\newcommand{\weigvec}{f}
\newcommand{\weigval}{\lambda}
\newcommand{\hweigvec}{\widehat{f}}
\newcommand{\hweigveckt}[1]{u_{W,#1}}
\newcommand{\hweigval}{\widehat{\lambda}}
\newcommand{\hweigvecskt}{\mathbf{U}_{W}}
\newcommand{\hweigvecstkt}{\mathbf{U}_{W,\tmax}}
\newcommand{\hweigvecs}{\widehat{f}_W}
\newcommand{\hweigvecst}{\widehat{f}_{W,\tmax}}
\newcommand{\hweigvals}{\boldsymbol{\Lambda}_{W}}
\newcommand{\hweigvalst}{\boldsymbol{\Lambda}_{W,\tmax}}

\newcommand{\bkcov}{\Sigma_{B}}
\newcommand{\hbkcov}{\widehat{\Sigma}_{B}}

\newcommand{\Sw}{\Sigma_{W}}
\newcommand{\Sb}{\Sigma_{B}}
\newcommand{\hSw}{\widehat{\Sigma}_{W}}
\newcommand{\hSb}{\widehat{\Sigma}_{B}}

\newcommand{\mmd}{\operatorname{MMD}}
\newcommand{\hmmd}{\widehat{\operatorname{MMD}}}
\newcommand{\hmmdt}{\widehat{\operatorname{MMD}}_{\tmax}}

\newcommand{\kfda}{\operatorname{D}^2}
\newcommand{\hkfdaridge}{\widehat{\operatorname{D}_{\gamma}}^2}
\newcommand{\hkfdat}{\widehat{\operatorname{D}_{\tmax}}^2}


\newcommand{\statmanova}{\widehat{\mathcal{F}}_{\tmax}}
\newcommand{\statmanovany}{\widehat{\mathcal{F}}_{\tmax}^{\operatorname{NY}}}
\newcommand{\tildemanova}{\widetilde{\mathcal{F}}_{\tmax}}

\newcommand{\statmanovakt}{\tilde{\mathcal{G}}}

\newcommand{\Hmanova}{\widehat{H}_{\Lmanova}}
\newcommand{\Lmanova}{\mathbf{L}}
\newcommand{\lmanova}{L}
\newcommand{\linHmanova}{\widehat{\mathbf{H}_{\linLmanova,\tmax}}}
\newcommand{\linLmanova}{\mathbf{L}}
\newcommand{\nLmanova}{d}

\newcommand{\dcook}{\mathcal{D}_{Cook}}

\newcommand{\tmax}{T}
\newcommand{\boundX}{M_{x}}
\newcommand{\boundk}{M_{k}}
\newcommand{\boundepsilon}{M_{\error}}
\newcommand{\boundrcov}{\zeta(\nobs,\xi)}
\newcommand{\boundlrcov}{\zeta(\nlandmarks,\xi)}
\newcommand{\boundarcov}{\zeta(\nobs,\nlandmarks,\xi)}


\newcommand{\A}[1]{$\mathtt{A_#1}$}
\newcommand{\Ahomoscedasticity}{\A{1}}
\newcommand{\Aboundedkernel}{\A{2}}
\newcommand{\Aboundeddesign}{\A{3}}
\newcommand{\Aboundedbasisofdesign}{\A{4}}
\newcommand{\Aconvergentdesign}{\A{5}}

\newcommand{\B}[1]{$\mathtt{B_#1}$}

\newcommand{\lY}{\mathbf{Z}}
\newcommand{\ly}{Z}

\newcommand{\nlandmarks}{q}
\newcommand{\Hkl}{\mathcal{H}_{\lY}}

\newcommand{\elY}{\Phi(\lY)} 
\newcommand{\elYi}[1]{\Phi(\lY_{#1})} 
\newcommand{\ely}[1]{\phi(\ly_{#1})} 
\newcommand{\hely}[1]{\widehat{\phi(\ly_{#1})}}

\newcommand{\Anc}{\operatorname{a}}
\newcommand{\anc}{\hat a}
 \newcommand{\ancMat}{\boldsymbol{\hat a}}
\newcommand{\Hka}{\mathcal{H}_{a}}
\newcommand{\nanchors}{m}
\newcommand{\Panc}{{\Pi}_{\Anc}}
\newcommand{\Ancm}{\operatorname{A}_\nanchors}
\newcommand{\Ancmi}[1]{\operatorname{A}_{\nanchors,#1}}

\newcommand{\hlkme}{\widehat{\mu}^{\lY}}
\newcommand{\hlkmei}[1]{\widehat{\mu}^{\lY_{#1}}}
\newcommand{\hlkcov}{\widehat{\Sigma}^{\lY}}
\newcommand{\hlkcovi}[1]{\widehat{\Sigma}^{\lY_{#1}}}
\newcommand{\hlkcovkt}{\mathbf{K}^{\lY}}
\newcommand{\hleigval}{\widehat{\lambda}^{\lY}}
\newcommand{\hleigvals}{\widehat{\boldsymbol{\Lambda}_{\nanchors}^{\lY}}}
\newcommand{\hleigveckt}{u^{\lY}}
\newcommand{\hleigvecskt}{\mathbf{U}_{\nanchors}^{\lY}}
\newcommand{\Pxlw}{\boldsymbol{\Pi}_W^{\nlandmarks}}
\newcommand{\hlwkcov}{\widehat{\Sigma}_W^{\lY}}
\newcommand{\hlwkcovkt}{\mathbf{K}_W^{\lY}}
\newcommand{\hlweigval}[1]{\widehat{\lambda}_{W,#1}^{\lY}}
\newcommand{\hlweigvals}{\widehat{\boldsymbol{\Lambda}}_{W,\nanchors}^{\lY}}
\newcommand{\hlweigveckt}[1]{u_{W,#1}^{\lY}}
\newcommand{\hlweigvecskt}{\mathbf{U}_{W,\nanchors}^{\lY}}

\newcommand{\fmapanc}[1]{\phi_{\Anc}(#1)}
\newcommand{\eYanc}{\Phi_{\Anc}(\dY)}
\newcommand{\eyanc}[1]{\phi_{\Anc}(\dy_{#1})}
\newcommand{\eYianc}[1]{\Phi_{\Anc}(\dY_{#1})}

\newcommand{\gramly}{\mathbf{K}_{\lY}}
\newcommand{\gramlyi}[1]{\mathbf{K}_{\lY_{#1}}}
\newcommand{\gramyz}{\mathbf{K}_{\dY,\lY}}
\newcommand{\gramyiz}[1]{\mathbf{K}_{\dY_{#1},\lY}}
\newcommand{\gramyzi}[1]{\mathbf{K}_{\dY,\lY_{#1}}}
\newcommand{\gramziy}[1]{\mathbf{K}_{\lY_{#1},\dY}}
\newcommand{\gramyyi}[1]{\mathbf{K}_{\dY,\dY_{#1}}}
\newcommand{\gramyiy}[1]{\mathbf{K}_{\dY_{#1},\dY}}
\newcommand{\gramyizi}[1]{\mathbf{K}_{\dY_{#1},\lY_{#1}}}
\newcommand{\gramzy}{\mathbf{K}_{\lY,\dY}}
\newcommand{\gramzyi}[1]{\mathbf{K}_{\lY,\dY_{#1}}}
\newcommand{\gramziyi}[1]{\mathbf{K}_{\lY{#1},\dY_{#1}}}

\newcommand{\hwkcovanc}{\widehat{\Sigma}_{W}^{\Anc}}
\newcommand{\hwkcovancktint}{\widetilde{\mathbf{K}}_{W}^{\Anc}}
\newcommand{\hwkcovanckt}{\mathbf{K}_{W}^{\Anc}}
\newcommand{\hweigvalanc}{\widehat{\lambda}^{\Anc}}
\newcommand{\hweigvecanc}{\widehat{f}^{\Anc}}
\newcommand{\hweigvecanckt}{u^{\Anc}}
\newcommand{\hweigvecsanckt}{\mathbf{U}^{\Anc}}
\newcommand{\hweigvecsanc}{\operatorname{f}^{\Anc}}

\newcommand{\mmdanc}{\widehat{\operatorname{MMD}}_{\Anc}^2}
\newcommand{\kfdatanc}{\widehat{\operatorname{D}^{\Anc}_{\tmax}}^2}

\newcommand{\probas}{\mathcal{P}(\sY)}
\newcommand{\probasc}{\widetilde{\mathcal{P}}(\sY)}
\newcommand{\probasci}[1]{\widetilde{\mathcal{P}}_{#1}(\sY)}

\newcommand{\sffdot}{T(\circledcirc)}
\newcommand{\sffjdot}{T(\circledcirc,\circledcirc)}
\newcommand{\sff}{T}
\newcommand{\sfmdot}{m(\circledcirc)}
\newcommand{\sfm}{m}
\newcommand{\sfcovdot}{S(\circledcirc)}
\newcommand{\sfcov}{S}
\newcommand{\sfwkcov}{\Sigma_{W}}
\newcommand{\sfbkcov}{\Sigma_{B}}
\newcommand{\sfjdot}[1]{{#1}(\circledcirc,\circledcirc)}

\newcommand{\eps}{\epsilon}
\newcommand{\limeps}{\underset{\substack{\eps\to0\\\eps>0}}{\lim}}

\newcommand{\gd}[1]{{#1}^\prime\big(\proba,\probaq\big)}
\newcommand{\gdi}[2]{{#1}^\prime\big(\proba_{#2},\probaq_{#2}\big)}
\newcommand{\gdq}[2]{{#1}^{\prime}\big(\proba,#2\big)}
\newcommand{\gdpq}[3]{{#1}^{\prime}\big(#2,#3\big)}
\newcommand{\egd}[1]{{#1}^{\prime}\big(\probay,\probaly\big)}
\newcommand{\egdi}[2]{{#1}^{\prime}\big(\probayi{#2},\probalyi{#2}\big)}
\newcommand{\sgd}[1]{{#1}^{S}\big(\dY,\lY\big)}

\newcommand{\pgdyl}[1]{{#1}^\prime\big((\joint),(\diracfy,\proba_2)\big)}
\newcommand{\pgdyr}[1]{{#1}^\prime\big((\joint),(\proba_1,\diracfy)\big)}
\newcommand{\pgd}[1]{{#1}^\prime\big((\joint),(\jointq)\big)}
\newcommand{\epgd}[1]{{#1}^\prime\big((\jointy),(\jointly)\big)}

\newcommand{\ifff}[1]{\operatorname{I}_{#1}(\proba,\fy)}
\newcommand{\ifpy}[3]{\operatorname{I}_{#1}(#2,#3)}
\newcommand{\eif}[1]{\operatorname{I}_{#1}^{E}(\proba,\dY,\fly)}
\newcommand{\sif}[1]{\operatorname{I}_{#1}^{S}(\dY,\fly)}

\newcommand{\pifyl}[1]{\operatorname{I}_{#1,\proba_{2}}(\proba_{1},\fy)}
\newcommand{\pifyr}[1]{\operatorname{I}_{#1,\proba_1}(\proba_{2},\fy)}

\newcommand{\fly}{z} 
\newcommand{\yly}{\dY\backslash\lY} 
\newcommand{\yfly}{\dY\backslash\{\fly\}} 

\newcommand{\probaq}{\mathbb{Q}}
\newcommand{\contam}{\mathbb{P}^{(\probaq,\eps)}}
\newcommand{\contami}[1]{\mathbb{P}_{#1}^{(\probaq_{#1},\eps)}}
\newcommand{\diracfy}{\delta_\fy}
\newcommand{\contamy}{\mathbb{P}^{(\diracfy,\eps)}}
\newcommand{\probay}{\mathbb{P}_{\nobs}}
\newcommand{\probaly}{\mathbb{P}_{\nobs,\nlandmarks}}
\newcommand{\deltaly}{\delta_{\fly}}
\newcommand{\probayly}{\mathbb{P}_{\yly}}
\newcommand{\probayfly}{\mathbb{P}_{\yfly}}
\newcommand{\probayi}[1]{\proba_{\nobs_{#1}}}
\newcommand{\probalyi}[1]{\proba_{\nobs_{#1},\nlandmarks_{#1}}}

\newcommand{\joint}{\proba_1,\proba_2}
\newcommand{\jointq}{\probaq_1,\probaq_2}
\newcommand{\jointy}{\probayi{1},\probayi{2}}
\newcommand{\jointly}{\probalyi{1},\probalyi{2}}
\newcommand{\contamjoint}{(\joint)^{(\jointq),\eps}}

\newcommand{\sfkme}{\mu}
\newcommand{\sfkcov}{\Sigma}
\newcommand{\sfkmedot}{\mu(\circledcirc)}
\newcommand{\sfkcovdot}{\Sigma(\circledcirc)}
\newcommand{\sfop}{A}
\newcommand{\sfopdot}{A(\circledcirc)}
\newcommand{\opeigvec}{f^{\sfop(\proba)}}
\newcommand{\opeigval}{\lambda^{\sfop(\proba)}}
\newcommand{\sfeigvecdot}[1]{f_{#1}^{\sfop}(\circledcirc)}
\newcommand{\sfeigvaldot}[1]{\lambda_{#1}^{\sfop}(\circledcirc)}
\newcommand{\sfeigvec}[1]{f_{#1}^{\sfop}}
\newcommand{\sfeigval}[1]{\lambda_{#1}^{\sfop}}
\newcommand{\fgd}[1]{{#1}^{\prime}}
\newcommand{\sfnorm}{N}
\newcommand{\sfnormdot}{N(\circledcirc)}

\newcommand{\sfweigvecdot}[1]{f_{#1}^{\sfwkcov}(\circledcirc,\circledcirc)}
\newcommand{\sfweigvaldot}[1]{\lambda_{#1}^{\sfwkcov}(\circledcirc,\circledcirc)}
\newcommand{\sfweigvec}[1]{f_{#1}^{\sfwkcov}}
\newcommand{\sfweigval}[1]{\lambda_{#1}^{\sfwkcov}}

\newcommand{\heYanc}{\widehat{\eYanc}}
\newcommand{\heyanc}[1]{\widehat{\eyanc{#1}}}
\newcommand{\hancEpsilon}{\widehat{\varepsilon^{\Anc}}}
\newcommand{\hancepsilon}[1]{\widehat{\varepsilon^{\Anc}_{#1}}}

\newcommand{\hresanceigval}[1]{\widehat{\lambda^{\Anc}_{\errors,{#1}}}}
\newcommand{\hresanceigveckt}[1]{u^{\Anc}_{\errors,#1}}

\newcommand{\hip}[1]{\left \langle {#1} \right \rangle _{\Hk}}
\newcommand{\hn}[1]{\left \| {#1} \right \| _{\Hk}}

\maketitle

\begin{abstract}
    Kernel-based testing has revolutionized the field of non-parametric tests through the embedding of distributions in an RKHS. This strategy has proven to be powerful and flexible, yet its applicability has been limited to the standard two-sample case, while practical situations often involve more complex experimental designs. To extend kernel testing to any design, we propose a linear model in the RKHS that allows for the decomposition of mean embeddings into additive functional effects. We then introduce a truncated kernel Hotelling-Lawley statistic to test for the effects of the model, demonstrating that its asymptotic distribution is chi-square, which remains valid with its Nystr\"om approximation. We discuss a homoscedasticity assumption that, although absent in the standard two-sample case, is necessary for general designs. Finally, we illustrate our framework using a single-cell RNA sequencing dataset and provide kernel-based generalizations of classical diagnostic and exploration tools to broaden the scope of kernel testing in any experimental design.
\end{abstract}

\section{Introduction}

Statistical hypothesis testing has witnessed significant advancements in recent years, largely due to the integration of kernel methods with non-parametric testing. Indeed, it has now become common practice to develop testing procedures that involve embedding probabilistic distributions into Reproducing Kernel Hilbert Spaces (RKHS) to assess the equality of distributions \cite{gretton_kernel_2012}, dependencies 
\cite{gretton_measuring_2005}, causality \cite{Zhang_Causal}, and essentially any probabilistic feature that may distinguish two populations. Kernel testing has also facilitated the development of tests for complex data types, such as graphs and sequences, which surpass traditional non-parametric procedures like the Wilcoxon or Kolmogorov tests \cite{mann_test_1947,smirnov_estimation_1939}. Typically, kernel testing involves embedding data into a feature space using a kernel, with the statistical test based on the distribution of the embedded data, such as means and variances. A significant milestone in the field was the Maximum Mean Discrepancy (MMD) test, which measures the distance between mean embeddings of two conditions \cite{gretton_kernel_2006}. The MMD statistic asymptotically follows a mixture of chi-square distributions, that can be approximated using permutations. 

Parallel to the work of Gretton et. al \cite{gretton_kernel_2012}, a studentized version of the MMD was proposed to account for the variance of the embeddings directly in the test statistic \cite{harchaoui_testing_2007}, which constitutes a kernelized version of the Hotelling-Lawley trace test \cite{olive_robust_2017}. This strategy results in a metric that can be interpreted as the distance between mean embeddings projected onto the kernel-Fisher discriminant axis. In other terms, the normalized version of the MMD reduces to the norm of a Kernel Fisher Discriminant Analysis classifier used for non-linear hypothesis testing. The original version is based on ridge-based regularization of the residual covariance operator \cite{harchaoui_testing_2007}. This approach has recently been proven optimal with respect to the test's ability to best separate two distributions within a class of alternatives \cite{hagrass_spectral_2022}. These recent theoretical developments underscore the theoretical benefits of properly account for a noise model within the RKHS to gain power in the testing procedure through studentization. However, in practice, the ridge-based regularization test lacks interpretability and the asymptotic distribution is complex to compute. This motivated the development of the truncation-based test that is based on the eigen-decomposition of the residual covariance operator into $T$ principal components \cite{harchaoui_regularized_2009}. This results in a statistic that asymptotically follows a $\chi^2(T)$ distribution, which greatly increases its applicability since it does not require permutation-based strategies, at the cost of a $\mathcal{O}(n^3)$ complexity. This complexity can be mitigated through Nystr\"om-type approximations. Despite methodological and theoretical developments in line with the MMD framework, the application of kernel testing in practice has been limited to the simple scenario of two-sample testing, whereas practical situations almost always involve more complex experimental designs. If kernel testing is to become a standard in various application areas, we need to offer extensions that are both statistically sound and efficiently implemented.

In this work, we propose a unified framework that enables kernel testing for any experimental design. We introduce a linear model in the RKHS that decomposes the mean embedding of observed distributions into linear functional effects. Testing is then performed using the kernelized Hotelling-Lawley statistic to test linear combinations of effects. Indeed, the truncated studentized-MMD can be easily interpreted as a truncated Hotelling-Lawley test based on data embeddings. Moreover, this connection between kernel testing and the standard multivariate framework allows us to reinterpret the standard MMD procedure as a linear model within the RKHS, linking the mean embeddings to any experimental design. Interestingly, according to our results, the comparison of linear combinations of any effect (contrast testing) cannot be conducted without specific conditions on the residual covariance operator—conditions that do not arise in the simpler two-sample test. These conditions relate to the standard homoscedasticity hypothesis required in linear models, which states that the variance of the noise must be independent of the expectation provided by the model. We reinterpret this hypothesis for covariance operators in the RKHS and propose a restricted heteroscedasticity hypothesis that states that the covariance operator of the noise is homogeneous up to its first $T$ eigen-components. Under this hypothesis, we provide the asymptotic distribution of the test for linear combinations of effects (contrast tests), demonstrating that it follows a $\chi^2$ distribution that is straightforward to compute. We also verify theoretically that this distribution remains valid when using a Nystr\"om approximation, which enhances the practical applicability of kernel testing. We present simulations to empirically corroborate our theoretical developments. To advance the full operability of kernel testing, we offer a comprehensive framework that benefits from powerful representations and diagnostic plots, enabling the complete application of kernel testing in any experimental situation. Moreover, all these quantities are implemented in our package \textit{ktest}.

To illustrate our framework we fully analyze a dataset from single-cell transcriptomic data. Thanks to the combination of single-cell isolation techniques and massive parallel sequencing, it is now possible to create high-dimensional molecular portraits of cell populations \cite{macosko_highly_2015,zheng_massively_2017}. These advances have resulted in the production of complex, high-dimensional data, revolutionizing our understanding of the complexity of living tissues. The field of single-cell data science presents new methodological challenges, key among them being the statistical comparison of single-cell RNA sequencing (scRNA-Seq) datasets between conditions or tissues. This step is crucial for distinguishing biological from technical variabilities and for asserting meaningful expression differences. Comparative analysis of single-cell datasets, regardless of their type, is an essential component of single-cell data science, providing biological insights and opening therapeutic perspectives through the identification of biomarkers and therapeutic targets. The effectiveness of kernel testing in this context has recently been demonstrated \cite{ozier-lafontaine_kernel-based_2024}; however, the practical application of kernel testing in the field hinges on the ability to conduct a comprehensive analysis tailored to any experimental design.

\section{Testing linear combinations of functional effects in the feature space}

\subsection{A linear model in the feature space}
\label{sec:KernelLinearModel}

We consider \(\dY = (\dy_1,\dots,\dy_\nobs)\), a set of \(\nobs\) observations from a measurable space \(\sY\), jointly recorded with some explanatory variables encoded in the design matrix \(\dX = (\dx_1,\dots, \dx_\nobs)^\prime \in \mtrx{\nobs,\nx}\), where for \(i\in\{1,\dots,\nobs\}\), \(\dx_i = (\dx_i^{1}, \dots,\dx_i^{p})^\prime\in\R^\nx\) is the experimental design that is supposed to be fixed. To generalize kernel testing to any experimental design we start by considering a featurization of the response variable. For this purpose we introduce \(\kerneldot:\sY\times \sY \to \R\), a positive definite kernel associated with a separable Reproducing Kernel Hilbert Space (RKHS) \((\Hk,\|\|_{\Hk})\) called the feature space, and the feature map \(\fmap\cdot:\dy\in\sY\mapsto\kernel(\dy,\cdot)\in\Hk\). Then we propose a linear model in the feature space, such that for \(i\in\{ 1, \dots, \nobs \}\), we have:
$\ey{i} = \dx_i^{1}  \param_{1} + \dots +\dx_i^{p}  \param_{p}  + \error_i$, where  \(\E(\error_i)=0_\Hk\) and \(\mathbb{V}(\error_i)= \Sigma_i\), the residual covariance operator. 
This model corresponds to the decomposition of the mean embedding of the distribution of the response variable into linear combination of mean embeddings \(\params=(\theta_1, \hdots, \theta_{p})' \in \mathcal{H}^p\) that correspond to different factors. A matrix formulation of this model can also be considered, with \(\eY=(\ey{1},\dots,\ey{\nobs})'\in \mathcal{H}^n\), \(\errors=(\error_1,\dots,\error_\nobs)'\in \Hkn\), which provides the linear model in the feature space that resumes to:
\begin{equation}
\label{eq:linear model on embeddings}
\begin{split}
\eY = \dX \params+ \errors, 
\end{split}
\end{equation}
where we use a natural extension of the matrix product between \(\dX \in \mtrx{\nobs,\nx}\) and \(\params \in \Hk^{\nx} \). Functional parameters can then be easily inferred using a least squares estimator such that 
\begin{equation}
\label{eq: empirical model parameters}
\begin{split}
 \hparams =\XXX  \eY  \in \mathcal{H}^p,
\end{split}
\end{equation}
where \(\XXinv\) is a generalized inverse of \(\dX^\prime\dX\), see Appendix~\ref{append:ls} for a proof. Then the predicted embeddings according to the model are defined as \(\heY = \Px \eY\), where the orthogonal projection \(\Px = \Pxvalue \in\mtrx{\nobs}\) on \(\operatorname{Im}(\dX) \subset \R^\nobs\)   is also seen as an operator on  \(\Hk^\nobs\). The residuals of the model are defined by \(\residuals =(\residual_1,\dots,\residual_\nobs)  =\Pxp \eY  = \Pxp \errors  \in \Hk^\nobs\) where \(\Pxp = \identity_\nobs - \Px\) is the orthogonal projection on \(\operatorname{Im}(\dX)^\perp\).

\paragraph{One-way and two-way designs.} We provide two examples of linear models in the feature space that are most commonly encountered in practice. The simplest example is the one-way design model with $U\geq2$ levels of sizes $\nobs_1,\dots,\nobs_U$, such that $\nobs = \sum_{u = 1}^{U}\nobs_u$. For $u\in\{ 1, \dots, U \}$ and $\ell\in\{ 1, \dots, \nobs_u \}$, $Y_{u,\ell}$ stands for the $\ell^{th}$ observation of level $u$. The model is such that : $\mathbb{E}(\ey{u,\ell}) = \alpha_u$, where $\params = (\alpha_{1},\dots,\alpha_{U})$ are the model functional parameters. The standard two-sample test corresponds to the model with $U=2$ levels, which can easily be generalized to multiple comparisons. A natural extension would be to consider a two-way design, for example, to introduce blocking factors. In this setting, the first effect takes $U$ distinct values, and the second one takes $V$ distinct values. For $u,v\in \{ 1, \dots, U \} \times \{ 1, \dots, V \}$, we observe $\ell = 1, \hdots, \nobs_{u,v}$ observations associated with effects $(u,v)$. We propose to model these effects with the following additive linear model in the feature space:  $\mathbb{E}(\ey{u,v,\ell}) = \alpha_{u} + \beta_{v}$. We can then design contrast matrices to test if there is an influence of the first or the second effect, as seen in  Section~\ref{sec:Appli-Reversion}.

\paragraph{Conditional Mean Embedding.} For simplicity, our model is proposed for a deterministic design, but it can also be adapted for a random design. In this latter setting, Model \eqref{eq:linear model on embeddings} can be easily reinterpreted in terms of conditional mean embedding \cite{song2009hilbert}. Specifically, our model assumes that the conditional mean embedding \(\mathcal{U}_{Y|X}\) is the linear operator mapping \(x \in \R^p\) to \(x' \params \in \Hk\), by considering the identity embedding for \(x \in \R^p\).

\subsection{Hypothesis testing with the truncated kernel Hotelling-Lawley statistic} 

We propose to perform hypothesis testing on the model functional parameters with a standard contrast testing approach. Let $\Lmanova \in \R^{\nLmanova \times \nx}$ a surjective matrix. As for $\dX$, $\Lmanova$ can be seen both as a linear operator from $\R^p$ to $\R^d$, or as a  linear operator from $ \Hk^{\nx}$ to  $\Hk^{d}$. We use the contrast matrix $\Lmanova$ to test if a combination of $\params$ is null or not and we formulate the null and alternative hypotheses as : $H_0 : \Lmanova \params = 0$ vs. $H_1 : \Lmanova \params \neq 0$. In the setting of the multivariate linear model, several popular statistical tests exist for this type of hypothesis. We propose to extend the Hotelling-Lawley test to the linear model in the feature space \eqref{eq:linear model on embeddings}, which requires considering general operators defined on Hilbert spaces. In this section we introduce the statistic without providing all the justifications, a rigorous definition of the kernelized Hotelling-Lawley statistic is proposed in Appendices \ref{sec:appNotations} and \ref{App:KernelLinearmodel}, and more specifically in  Section \ref{sec:KHKwelldefined}. We first introduce the so-called test operator (Hilbert-Schmidt) associated with \(\Lmanova\):
\begin{equation}
\label{eq:HL}
 \Hmanova = (\Lmanova\hparams)^{\star} (\Lmanova\XXinv\Lmanova^\prime)^{-1} (\Lmanova\hparams),
\end{equation}
which generalizes the "between sum of squares" from ANOVA and the "cross products matrix" from MANOVA models to more complex designs in the RKHS setting of this paper. Next, we define the residual covariance operator as
$
\rcov = \nobs^{-1} \sum_{i = 1}^{\nobs} \residual_i \otimes \residual_i . 
$
The spectral theorem applies to this operator: 
$
\rcov =  \sum_{t = 1}^{r} \rcoval_t  \big( \rcovec_t \otimes \rcovec_t \big)
$, 
where $(\rcovec_t)_{t\geq 0}$ is an orthonormal basis of $\Hk$, where $\rcoval_t >0  $ and $r$ is the rank of $\rcov$.  We then introduce a generalized inverse for $\rcov$ as
$
 \rcov^{-} =   \sum_{t = 1}^{r}\rcoval_t^{-1}\big( \rcovec_t \otimes \rcovec_t \big).
$ 
We finally introduce the truncated kernelized Hotelling-Lawley (TKHL) statistic as 
\begin{equation*}
\begin{split}
\statmanova = \nobs^{-1} \trhs\left( \THLcov\right),
\end{split}
\end{equation*}
where $
\rcovt^{-1} :=  \sum_{t = 1}^{\tmax \wedge r}\rcoval_t^{-1} \big( \rcovec_t \otimes \rcovec_t \big) 
$ is defined from  $\rcov^{-}$  by a spectral truncation of depth $\tmax$. An alternative ridge-based regularization has been proposed for $\rcov^-$, in the spirit of kernel FDA \cite{harchaoui_testing_2007}. The spectral truncation we propose has the advantages of $i)$ providing a tractable chi-square asymptotic distribution under the null, as we show in Theorem~\ref{theorem: Asymptotic distribution of the truncated Hotelling-Lawley test statistic}, and $ii)$  providing a natural framework for data and model exploration (Section~\ref{sec:Appli-Reversion}). Computing this statistic is based on a kernel trick detailed in the Appendix (Section~\ref{App:KernelTricks}).

\paragraph{One-way and two-way designs.} In the case of the one-way design, the following "pairwise" contrast matrix allows to test $H_0:  \alpha_{1} = \alpha_{2} = \dots = \alpha_{U}$ versus $H_1:\exists i,i^\prime, \alpha_{u} \neq \alpha_{u^\prime}$:
\begin{equation*}
\label{eq: Lmanova one versus all}
\begin{split}
\Lmanova_{\alpha} = 
\left (\begin{matrix} 
1      & -1     &  0     & \hdots & \hdots & 0 \\
0      &  1     & -1     & 0      & \hdots & 0 \\ 
\vdots &        & \ddots &        &        & \vdots \\ 
\vdots &        &        & \ddots &        & \vdots \\
0      & \hdots & \hdots & 0      & 1 		 & -1 
\end{matrix} \right)
\end{split}
\end{equation*}
When $U=2$, this test precisely reproduces the two-sample test. In this setting, the truncated Hotelling-Lawley statistic is equal to the truncated KFDA statistic \cite{harchaoui_testing_2007,harchaoui_regularized_2009}. Our framework provides a straighforward generalization to the comparison of $U$ levels thanks to the truncated Hotteling-Lawley test. In the case of the two-way design, the contrast matrices $\begin{bmatrix}\Lmanova_{\alpha} & \mathbb{0}_{\beta} \end{bmatrix}\in \mtrx{U-1,U+V}$ and $\begin{bmatrix}\mathbb{0}_{\alpha} & \Lmanova_{\beta} \end{bmatrix}\in \mtrx{V-1,U+V}$ are used to test for the effects of the first and the second factor respectively, with their corresponding contrast matrices from the one-way design $\Lmanova_{\alpha}\in \mtrx{U-1,U}$ and $\Lmanova_{\beta}\in \mtrx{V-1,V}$, and zero matrices $\mathbb{0}_{\alpha}\in \mtrx{V-1,U}$ and $\mathbb{0}_{\beta}\in \mtrx{U-1,V}$. This allows to test for the influence of each factor on the featurized response variable independently of the other factor. For example, in the presence of a blocking factor, this framework allows to determine whether it has a significant effect and thus should be considered when evaluating the treatment factor. Our approach enables the application of reasoning similar to that used in standard linear model procedures, thereby greatly extending the scope of kernel testing. We provide an example in Section ~\ref{sec:Appli-Reversion}.

\section{Asymptotic distribution of the TKHL statistic}
\label{sec:asymptotic_TKHL}

We derive the asymptotic distribution of the TKHL statistic, demonstrating that it follows a chi-square distribution under the null, thereby generalizing the known results for the HL trace statistic to the RKHS framework. Additionally, we introduce a homoscedasticity hypothesis for the error terms, as is typical in vectorial models, and provide a functional generalization within the RKHS by assuming restricted homoscedasticity to the first \(\tmax\) eigen-directions of the residual covariance operator.
\begin{assumption}\label{Ahomoscedasticity}
The covariances $\ecov_i$ of the errors are all identical on their $\tmax$ first directions: there exists a symmetric positive  operator $\ecovt =  \sum_{t = 1}^{\tmax}\ecoval_t \ecovec_t^{\otimes 2}$ such that 
$\ecov_i = \ecovt + \ecovtci{i}$ for all $i\in\{ 1, \dots, \nobs \}$, where $\ecovtci{i}= \sum_{t >\tmax} \ecoval_{i,t} \ecovec_{i,t}^{\otimes 2}$ is a symmetric positive operator.
\end{assumption}
\begin{assumption}\label{Asimple}
The eigenvalues of $\ecoval_t $ of $\ecovt$ are all simple, and there exists $\mu_{\tmax+1} >0$ such that $ \ecoval_{i,\tmax+1} \leq \mu_{\tmax+1} <  \ecoval_{\tmax}$ for any $i\in\{ 1, \dots, \nobs \}$.
\end{assumption} 
Under Assumptions~\ref{Ahomoscedasticity} and~\ref{Asimple} and for all $i \in \{1, \hdots, n\}$, the eigenvalues of  $\ecov_i$  satisfy $  \ecoval_1  > \dots > \ecoval_\tmax > \ecoval_{\tmax +1}   \geq \ecoval_{i,\tmax+1}$.   Next assumption require the kernel function to be bounded:
\begin{assumption}\label{Aboundedkernel}
The kernel is bounded : $\left \| k(\cdot, \cdot) \right \| _{\infty} = M_k < + \infty$.
\end{assumption} 
Finally, we  make the three assumptions on the design $\dX = \dX(\nobs) \in \mtrx{\nobs,\nx}$. In the following we will omit the dependency on $n$ since it is clear that it depends on $n$.
\begin{assumption}\label{Aboundeddesign} There exist $\boundX \in \R$ such that for all $\nobs \geq 1$, the design matrix $\dX$ is such that $\left \| \dX \right \| _{\infty} := \sup_{i,j} | x_{i,j}| \leq \boundX$.
\end{assumption}  
\begin{assumption}\label{ADiagDesign} The maximal coefficient on the  diagonal of $\Px$ satisfies $ \max_{i = 1 \dots n}  \left| (\Px)_{i,i} \right| \rightarrow 0 $ as $n$ tends to infinity.
\end{assumption}  
\begin{assumption}\label{Aconvergentdesign} The sampling gives a convergent design : $ \nobs^{-1}  \dX^\prime \dX   \to  \XXlim^{-1}  \in \R^{\nx \times \nx}$ as $n$ tends to infinity.
\end{assumption} 
Assumptions are discussed after the following Theorem which demonstrates that the asymptotic distribution of the TKHL statistic is not only distribution-free but also tractable.  
\begin{theorem}
\label{theorem: Asymptotic distribution of the truncated Hotelling-Lawley test statistic}
Under the linear model \eqref{eq:linear model on embeddings} in the feature space $\Hk$, assume that Assumptions \ref{Ahomoscedasticity} to 
\ref{Aconvergentdesign} are satisfied. If $H_0$ is true  then $
\nobs  \statmanova  \underset{\nobs\to \infty}{\overset{\mathcal{D}}{\longrightarrow}} \chi^2_{\nLmanova \tmax}$.
\end{theorem}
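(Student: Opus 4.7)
The plan is to reduce $n\statmanova$ to a $dT$-dimensional quadratic form and show that it is asymptotically a sum of $dT$ independent squared standard Gaussians. I would proceed in two steps: an \emph{oracle} computation using the true truncated covariance $\ecovt$, followed by a \emph{perturbation} argument showing that the substitution of $\rcovt^{-1}$ for $\ecovt^{-1}$ is asymptotically negligible.

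Under $H_0$, linearity gives $\Lmanova\hparams = A\errors$ with $A:=\Lmanova(\dX'\dX)^-\dX'$, and the spectral decomposition of $\rcovt^{-1}$ yields
\begin{equation*}
n\statmanova \;=\; \sum_{t=1}^{T} \rcoval_t^{-1}\, U_t'\, B^{-1}\, U_t, \qquad U_t = \bigl(\langle(\Lmanova\hparams)_k,\rcovec_t\rangle\bigr)_{k=1}^{d}, \qquad B=\Lmanova(\dX'\dX)^-\Lmanova'.
\end{equation*}
For the oracle version with $\widetilde U_t = (\langle(\Lmanova\hparams)_k,\ecovec_t\rangle)_k = A\eta_{\cdot,t}$ and $\eta_{i,t}:=\langle\error_i,\ecovec_t\rangle$, Assumption~\ref{Ahomoscedasticity} gives $\E[\eta_{i,t}]=0$, $\var(\eta_{i,t})=\ecoval_t$, and $\cov(\eta_{i,t},\eta_{i,s})=0$ for $s\neq t\leq T$. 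Using $AA' = B$ (valid for a reflexive g-inverse, consistent with invertibility of $B$), the stacked vector $V=(\ecoval_t^{-1/2}B^{-1/2}\widetilde U_t)_{t=1}^T\in\R^{dT}$ is centered with identity covariance. A multivariate Lindeberg--Feller CLT for triangular arrays then gives $V\overset{\mathcal{D}}{\to}\mathcal{N}(0,I_{dT})$, hence $\sum_t \ecoval_t^{-1}\widetilde U_t'B^{-1}\widetilde U_t \overset{\mathcal{D}}{\to}\chi^2_{dT}$. The Lindeberg condition holds because Assumptions~\ref{Aboundedkernel} and~\ref{Aboundeddesign} yield a uniform bound on $\|\error_i\|_\Hk$, while $\widetilde A:=B^{-1/2}A$ is easily seen to be an orthogonal projector with range contained in $\mathrm{Im}(\dX)$, so $\widetilde A'\widetilde A \preceq \Px$ and $\max_i (\widetilde A'\widetilde A)_{i,i}\leq \max_i(\Px)_{i,i}\to 0$ by Assumption~\ref{ADiagDesign}.

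It remains to show $\sum_{t=1}^T [\rcoval_t^{-1} U_t' B^{-1} U_t - \ecoval_t^{-1}\widetilde U_t' B^{-1}\widetilde U_t] = o_\proba(1)$. I would first establish operator-norm concentration $\|\rcov - \ecovbar\|_{\mathrm{op}}\overset{\proba}{\to} 0$ with $\ecovbar = n^{-1}\sum_i\ecov_i$: from $\residual_i=\sum_j(\Pxp)_{i,j}\error_j$ one computes $\E[\rcov] = n^{-1}\sum_j(\Pxp)_{j,j}\ecov_j$, which differs from $\ecovbar$ by an operator of norm $\lesssim M_k\max_i(\Px)_{i,i}\to 0$ via Assumptions~\ref{Aboundedkernel} and~\ref{ADiagDesign}, while concentration around the mean follows from Bernstein-type bounds for bounded Hilbert--Schmidt operators. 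By Assumption~\ref{Ahomoscedasticity}, $\ecovbar$ shares its top-$T$ eigenstructure with $\ecovt$, and Assumption~\ref{Asimple} (simple eigenvalues with uniform gap $\ecoval_T - \mu_{T+1}>0$) supplies the spectral separation needed for a Davis--Kahan perturbation bound, giving $\rcoval_t\overset{\proba}{\to}\ecoval_t$ and $\rcovec_t\to\pm\ecovec_t$ for each $t\leq T$. A Slutsky-type argument then concludes the proof.

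\textbf{Main obstacle.} The delicate step is propagating the eigenvector perturbation into the quadratic form: $\Lmanova\hparams$ is itself random and of order $n^{-1/2}$ in $\Hk^d$, while $B^{-1}$ is of order $n$, so a crude bound on $\langle(\Lmanova\hparams)_k,\rcovec_t-\ecovec_t\rangle$ does not suffice. I would control this cross-term via a resolvent-based representation of $\rcovec_t-\ecovec_t$ together with a joint CLT argument, exploiting the fact that $\widetilde A'\widetilde A\preceq\Px$ has trace $d$ independently of $n$. Assumption~\ref{Asimple}'s uniform eigengap is what prevents $\rcovt^{-1}$ from becoming ill-conditioned and is essential to obtain the chi-square limit.
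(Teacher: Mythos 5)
Your architecture coincides with the paper's: replace $\rcovt^{-1}$ by the oracle $\ecovt^{-1}$, identify the oracle statistic with a finite-dimensional Hotelling--Lawley trace for the model projected on $\Span(\ecovec_1,\dots,\ecovec_\tmax)$, and control the substitution by concentration of $\rcov$ plus spectral perturbation. Your oracle step is correct and somewhat more self-contained than the paper's: where the paper identifies $\trhs(\ecovt^{-1}\Hmanova)$ with the classical HL statistic in $\R^\tmax$ and cites Theorem~12.8 of Olive, you reprove the limit by Lindeberg--Feller, and your observation that $A'B^{-1}A=\mathbf{D}$ is an orthogonal projector dominated by $\Px$ is exactly the role Assumption~\ref{ADiagDesign} plays in that classical result.

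Two points in the perturbation step need repair. First, a Bernstein inequality for sums of independent Hilbert--Schmidt operators does not apply to $\rcov=\nobs^{-1}\sum_i\residual_i\otimes\residual_i$ as written: the residuals are not independent, since each $\residual_i=\sum_j\Pxpij{i}{j}\error_j$ involves all the errors. You must either apply a bounded-differences (McDiarmid) argument to $\errors\mapsto\|\rcov-\E\rcov\|_{\HSk}$, tracking the effect of replacing a single $\error_{i_0}$ (this is the paper's route), or first write $\rcov=\nobs^{-1}\sum_i\error_i\otimes\error_i-\nobs^{-1}\errors^\star\Px\errors$ and note that the second term is $O_\proba(\nobs^{-1})$ in $\HSk$ because $\E\|\Px\errors\|_{\Hk^\nobs}^2\leq\nx\boundepsilon^2$, so that only the independent sum needs a concentration inequality. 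As stated, this step is not justified.

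Second, your ``main obstacle'' is the genuinely delicate point, but its resolution is far simpler than a resolvent expansion with a joint CLT. The entire substitution error is $n(\statmanova-\tildemanova)=\trhs\big((\rcovt^{-1}-\ecovt^{-1})\Hmanova\big)$, bounded by $\|\rcovt^{-1}-\ecovt^{-1}\|_{\HSk}\,\|\Hmanova\|_{\HSk}$, and the concentration-plus-perturbation step gives $\|\rcovt^{-1}-\ecovt^{-1}\|_{\HSk}=O_\proba(\nobs^{-1/2})$. The missing ingredient is the stochastic order of $\Hmanova$ under $H_0$: since $\Hmanova=(\mathbf{D}\errors)^\star(\mathbf{D}\errors)$ with $\mathbf{D}$ an orthogonal projector of fixed rank $\nLmanova$, one has $\E\|\Hmanova\|_{\HSk}\leq\E\|\mathbf{D}\errors\|_{\Hk^\nobs}^2=\sum_{i,j}d_{i,j}^2\,\E\|\error_j\|_{\Hk}^2\leq\nLmanova\boundepsilon^2$, so $\|\Hmanova\|_{\HSk}=O_\proba(1)$ rather than the deterministic $O(\nobs)$ bound. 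The product is then $O_\proba(\nobs^{-1/2})$ and Slutsky concludes; in your coordinates this is precisely the statement $\|\Lmanova\hparams\|_{\Hk^{\nLmanova}}=O_\proba(\nobs^{-1/2})$, which makes the cross term $\widetilde U_t'B^{-1}(U_t-\widetilde U_t)$ of order $\nobs^{-1/2}\cdot\nobs\cdot\nobs^{-1}$. You were right to flag that the crude deterministic bound does not suffice -- the paper's own summary only records $\|\nobs^{-1}\Hmanova\|_{\HSk}\leq\boundepsilon^2$, which yields $\statmanova-\tildemanova=o_\proba(1)$ but not the $o_\proba(\nobs^{-1})$ the theorem requires -- but the $O_\proba(1)$ moment bound above is all that is needed to close it.
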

Our result generalizes the asymptotic convergence of the HL trace statistic, commonly used in multivariate linear models (see, for instance, Theorem 12.8 from \cite{olive_robust_2017}), to cases where data lies in a RKHS. In this functional framework, establishing the asymptotic properties is more intricate since the $T$ directions used to define the TKHL are stochastic.

The proof of the Theorem is given in Appendix~\ref{App:Proof-Theo-chi2}, see in particular Section~\ref{App:Proof-Theo-chi2:main} for a summary of the proof. The proof relies on the convergence in probability of $\rcov$ towards $\ecov$, which we establish thanks to a an approach inspired from \cite{shawe-taylor_eigenspectrum_2005} and \cite{zwald_convergence_2005}. This approach combines a perturbation bound with a bounded difference theorem. In our framework we must control the residual covariance operator, rather than a standard covariance operator as is typically done in kernel PCA.

Regarding the assumptions of the theorem, Assumption \ref{Ahomoscedasticity} relates to the classical homoscedasticity hypothesis required in linear models, generalized here to covariance operators with a restriction on the first $T$ directions. This hypothesis is typically verified visually using diagnostic plots, which we generalize for the feature space (Section~\ref{sec:Appli-Reversion}). In scenarios where testing involves the equality of $U$ distributions with a characteristic kernel, the null hypothesis entails the equality of the $U$ covariance operators. In such cases, the homoscedasticity assumption may be omitted as specified in Theorem~\ref{theorem: Asymptotic distribution of the truncated Hotelling-Lawley test statistic}. However, when extending this approach to general designs, maintaining the homoscedasticity assumption is crucial and cannot simply be substituted with a {\it full homoscedasticity} assumption, which demands equality of all covariances across all directions in $\Hk$. Given that the covariance operator happens to be itself a mean embedding that may characterize the distribution in certain contexts (see for instance Proposition 1 in \cite{bach_information_2022} and also \cite{bonnier_kernelized_2024}), a full homoscedasticity assumption would impose a stronger condition than the null hypothesis itself. Note that Assumption \ref{Ahomoscedasticity} is only required under the null. Assumption \ref{Asimple} is restrictive but essential as our statistic is studied direction-wise, which requires a strictly positive gap around each eigenvalue  \cite{blanchard_statistical_2007, zwald_convergence_2005}. Assumption \ref{Aboundedkernel} is a standard hypothesis for deriving statistical results on kernel methods and is typically satisfied by Gaussian or Laplacian kernels. Assumption \ref{Aboundeddesign} is used in addition to Assumption~\ref{Aboundedkernel} to obtain a bound on the errors and to show that their fourth moment is finite (see Lemma \ref{lemma: Borne sur la norme des residus}). Note that the fourth moment of the errors are also assumed to be finite in the standard HL test (see Theorem 12.8 from \cite{olive_robust_2017}). Finally, Assumptions \ref{ADiagDesign} and \ref{Aconvergentdesign} are also needed to obtain the asymptotic distribution of the standard HL statistic (see for instance Theorem 12.8 from \cite{olive_robust_2017}). 

\section{The Nystr\"om TKHL statistic}
\label{sec:NystromTKHL}

The computational complexity of kernel testing based on the TKHL statistic depends on the diagonalization of the residual covariance operator $\widehat{\Sigma}_T$, or its kernel-trick equivalent, the gram matrix $\rcovkt$ of general term $\langle \residual_i , \residual_{i^\prime} \rangle_{\Hk}$, which requires $\mathcal{O}(\nobs^3)$ operations. To reduce the computational burden, the Nyström approximation consists in finding a low-rank approximation such that $\rcovkt \simeq \rcovny \rcovny^\top$ with $\rcovny\in\mtrx{\nobs,\nanchors}$, $\nanchors \ll \nobs$, which reduces the cost to $\mathcal{O}(\nanchors^3)$. To choose a matrix $\rcovny$ that ensures the same asymptotic distribution as the original TKHL statistic, we uniformly sample $\nlandmarks$ landmarks from the $\nobs$ observations and define the $\nanchors \geq \tmax$ first unit eigenfunctions $(\anc_1,\dots,\anc_\nanchors)$ of their associated residual covariance operator as Nyström anchors. Finally we build matrix $\rcovny$, as the $\nobs \times \nanchors $ matrix whose rows are the projections of residuals $\residual_i$ onto the eigenfunctions $\anc_j$, with general term $\rcovny_{i,j}=\langle \residual_i , \anc_j\rangle_{\Hk}$, $i=1,\hdots, n$, $j=1, \hdots, m$. We then define the Nyström residual covariance operator $\arcov$ as $\arcov=\nobs^{-1}\sum_{i = 1}^{\nobs} \aresidual_i \otimes \aresidual_i$, where for $i\in\{ 1, \dots, \nobs \}$, $\aresidual_i$ is the orthogonal projection of $\residual_i$ onto $\Span(\anc_1,\dots,\anc_\nanchors)$. We denote by $\arcovec{1},\dots,\arcovec{\nanchors}$ the orthonormal eigenfunctions of $\arcov$ associated with the non-increasing eigenvalues $\arcoval{1},\dots,\arcoval{\nanchors}$, so that $\arcovt=\sum_{t = 1}^{\tmax}\arcoval{t} \arcovec{t}\otimes \arcovec{t}$ is the spectral truncation of $\arcov$. We can then introduce as before a generalized inverse for $\arcovt$ as $\arcovt{}^{-1} =  \sum_{t \leq \tmax, \,  \arcoval{t} >0,  }
\arcoval{t}^{-1} \arcovec{t}\otimes \arcovec{t}$. We finally define a Nyström version of the kernel Hotelling-Lawley Statistic  as
\begin{equation}
\label{eq:nystrom_statistic}
\begin{split}
\statmanovany = \frac{1}{\nobs} \trhs\Big( \arcovt{}^{-1} \Hmanova \Big).
\end{split}
\end{equation}
Note that according to this version of Nyström we don't use the landmarks to approximate $\Hmanova$, as the computation cost of this last is not the worst part in the TKHL statistic.
 
We now study the asymptotic distribution of the Nyström TKHL statistic defined in the linear model \eqref{eq:linear model on embeddings}. We consider  the Nyström TKHL statistic defined with $\nlandmarks$ landmarks,  $\nanchors$ anchors and a truncate parameter $\tmax \leq \nanchors$. In the following the parameters $\tmax$ and $\nanchors$ are fixed, contrary to $\nobs$ and $\nlandmarks$, and we assume that $\nlandmarks \leq \nobs $.  We need to rewrite Assumptions~\ref{Ahomoscedasticity} and \ref{Asimple} according to the numbers of anchors:
\setcounter{assumptionexp}{0} 
\begin{assumptionp}\label{AhomoscedasticityN}
 The covariances $\ecov_i$ of the errors are all identical on their $\nanchors$ first directions: there exists a symmetric positive  operator $\ecovanc=  \sum_{t = 1}^{\nanchors}\ecoval_t \ecovec_t^{\otimes 2}$ such that for $i\in\{ 1, \dots, \nobs \}$, $\ecov_i = \ecovanc + \ecovtci{i}$, where $\ecovtci{i}= \sum_{t >\tmax} \ecoval_{i,t} \ecovec_{i,t}^{\otimes 2}$ is a symmetric positive operator.
\end{assumptionp}
\begin{assumptionp}\label{AsimpleN}
The eigenvalues   of $\ecovanc$ are all simple, and there exists a positive constant $\mu_{\nanchors+1}$ such that for any $i\in\{ 1, \dots, \nobs \}$,  $ \ecoval_{i,\nanchors+1} \leq \mu_{\nanchors+1} < \ecoval_{\nanchors}$.
\end{assumptionp} 

\begin{theorem}
\label{theorem: Asymptotic distribution of the nystrom statistic}
Under Assumptions~\ref{AhomoscedasticityN}, \ref{AsimpleN}, 
\ref{Aboundedkernel}, \ref{Aboundeddesign},  \ref{ADiagDesign} 
 and \ref{Aconvergentdesign}, if $H_0$ is true  then: 
$\nobs  \statmanovany  \underset{\nlandmarks,\nobs\to \infty}{\overset{\mathcal{D}}{\longrightarrow}} \chi^2_{\nLmanova \tmax},
$
where in the convergence both $\nlandmarks$ and $\nobs$ tends to infinity and $\nlandmarks \leq \nobs$.
\end{theorem}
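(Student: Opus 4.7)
The plan is to adapt the proof of Theorem~\ref{theorem: Asymptotic distribution of the truncated Hotelling-Lawley test statistic}, inserting an intermediate step that controls the deviation introduced by the Nyström projection. In that earlier proof, the core argument shows that $\rcovt^{-1}$ converges in probability to $\ecovt^{-1}$ on the top $\tmax$ eigendirections of $\ecovt$, via a perturbation bound combined with a bounded differences concentration inequality for the residual covariance operator. In the Nyström version the only object that changes in $\statmanovany$ is the truncated inverse: we use $\arcovt{}^{-1}$ built from the projector onto the span of the anchors instead of $\rcovt^{-1}$. The key step is therefore to show that $\arcovt{}^{-1}$ has the same asymptotic behavior as $\rcovt^{-1}$ on the subspace where $\Hmanova$ concentrates, as both $\nobs$ and $\nlandmarks$ go to infinity.

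First I would analyze the residual covariance operator $\lrcov$ built from the $\nlandmarks$ landmarks, whose eigenfunctions define the anchors $\anc_1, \dots, \anc_\nanchors$. Under Assumptions~\ref{AhomoscedasticityN}, \ref{Aboundedkernel}, \ref{Aboundeddesign}, \ref{ADiagDesign} and \ref{Aconvergentdesign}, the perturbation-plus-bounded-differences scheme used in Theorem~\ref{theorem: Asymptotic distribution of the truncated Hotelling-Lawley test statistic} can be transposed to the subsample, yielding $\|\lrcov - \ecovanc\|_{\HSk} = O_P(\nlandmarks^{-1/2})$. By Assumption~\ref{AsimpleN} the top $\nanchors$ eigenvalues of $\ecovanc$ are simple and separated from the tail by a strictly positive gap $\ecoval_{\nanchors} - \mu_{\nanchors+1}$, so a Davis--Kahan argument implies that, up to sign, each anchor $\anc_t$ approaches $\ecovec_t$ in $\Hk$, uniformly over $t \leq \nanchors$. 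Consequently the projector $\Panc$ onto $\Span(\anc_1,\dots,\anc_\nanchors)$ converges in operator norm, at rate $\nlandmarks^{-1/2}$, to the projector onto $\Span(\ecovec_1,\dots,\ecovec_\nanchors)$.

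Next I would propagate this to $\arcov$. Writing $\aresidual_i = \Panc \residual_i$, the operator $\arcov = \nobs^{-1}\sum_i \aresidual_i \otimes \aresidual_i$ is the compression of $\rcov$ by $\Panc$, so it can be compared first to $\Panc \, \rcov \, \Panc$, then to $\Panc \, \ecov \, \Panc$ using the convergence of $\rcov$ to $\ecov$ (step from Theorem~\ref{theorem: Asymptotic distribution of the truncated Hotelling-Lawley test statistic}), and finally to the compression of $\ecov$ onto $\Span(\ecovec_1,\dots,\ecovec_\nanchors)$, which by Assumption~\ref{AhomoscedasticityN} coincides with $\ecovanc$ on its first $\nanchors$ eigendirections. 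Combining these three approximations through another application of Davis--Kahan and the resolvent identity shows that the top $\tmax$ eigenpairs of $\arcov$ converge to those of $\ecovt$ as $\nlandmarks,\nobs \to \infty$; in particular $\arcovt{}^{-1} \to \ecovt^{-1}$ in probability on $\Span(\ecovec_1,\dots,\ecovec_\tmax)$.

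Once this convergence is in hand, the end of the argument copies that of Theorem~\ref{theorem: Asymptotic distribution of the truncated Hotelling-Lawley test statistic}. Under $H_0$ and Assumptions~\ref{Aboundeddesign}, \ref{ADiagDesign} and \ref{Aconvergentdesign}, a central limit theorem yields that $\sqrt{\nobs}\,\Lmanova\hparams$ converges to a Hilbert-space Gaussian whose covariance is related to $\ecov$ through $\XXlim$, so that $\nobs \Hmanova$ converges to a Wishart-type quadratic form concentrated on the top eigendirections; evaluating $\nobs \, \trhs(\arcovt{}^{-1} \Hmanova)$ through Slutsky's lemma delivers the $\chi^2_{\nLmanova \tmax}$ limit. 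The main obstacle is the joint control of the two sources of randomness, from the full sample of size $\nobs$ and from the landmark subsample of size $\nlandmarks$: a clean way to handle it is to condition on the landmarks, apply the bounded differences inequality to the subsample (this uses $\nlandmarks \leq \nobs$ together with Assumption~\ref{Aboundedkernel} to bound increments), and then take expectations. Provided $\nlandmarks \to \infty$ alongside $\nobs$, the anchor space captures the correct $\tmax$ eigendirections and the chi-square limit is preserved.
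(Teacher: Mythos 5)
Your proposal is correct and follows essentially the same route as the paper: concentration of the landmark residual covariance plus a perturbation (Davis--Kahan/Zwald--Blanchard) bound to show the anchors span the right $\nanchors$-dimensional eigenspace, a triangle-inequality decomposition of $\arcov$ through the compression by the true eigenprojectors to get $\arcovt{}^{-1}\to\ecovt^{-1}$ in probability (conditioning on the landmark sample and noting the bound is uniform in it), and then the identical finite-dimensional reduction used for Theorem~\ref{theorem: Asymptotic distribution of the truncated Hotelling-Lawley test statistic}. The only cosmetic difference is that the paper tracks the small bias operators $\ecovbarI$ and $\ecovbar$ (rather than $\ecovanc$ and $\ecovt$ directly) as the limits of the empirical covariances, which share the relevant top eigenstructure, so your argument is unaffected.
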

The proof of the Theorem is given in Section~\ref{App:Proof-Theo-chi2:Nystrom}. In the theorem, landmarks are selected uniformly from the observations. However, in practice, choosing landmarks uniformly across observations may not always be appropriate. For instance, when comparing unbalanced populations, under-represented groups could be over-sampled to ensure that their within-covariance matrices are accurately estimated. This approach also improves the accuracy of reconstructing the within covariance operator using the anchors.

\section{Simulations}

We evaluate the empirical performance of our method in the two-sample framework ($U=2$, Fig. \ref{fig:simulations}). We compare the empirical level and power of the TKHL test for different values of $n$ and various levels of Nystr\"om approximation ($q \in \{0.1 \times n, 0.50 \times n\}$, $m \in \{5, 25, 50\})$. We use the simulations based on the MNIST dataset \cite{choi2024} to differentiate the distribution of even from odd-number images. Each observation consists of a 7$\times$7 image with $\gamma\%$ of odd numbers: $\gamma=0$ for the null, and $\gamma \in \{0.1, 0.2\}$ to compute power ($\alpha=0.05$). Regarding type-I errors, the asymptotic TKHL test attains the nominal level, even for $n=200$. The Nystr\"om approximation does not alter the level, and moderate truncation has little impact on type-I errors. However, the empirical quantile of the TKHL statistic only matches the $\chi^2$ quantile for moderate values of $T$. Another simulation study (in the univariate case) showed that type-I errors increase with $T$ \cite{ozier-lafontaine_kernel-based_2024}. Hence $T$ is connected to the geometry of the distribution in the RKHS and should be chosen relatively low. The empirical power increases with $n$ and $\gamma$, as expected, with negligible impact of the Nystr\"om approximation: sub-sampling 10$\%$ of the data is sufficient to attain full power. Similarly, the impact of $m$, the number of anchors, is negligible. The effect of truncation on empirical power depends on the difficulty of the task. For low signal ($\gamma=0.1$), more principal directions are needed to reach maximum power ($T \sim 30$), whereas when for high signal ($\gamma=0.2$), full power is achieved with $T<10$ when $n=2000$. The MMD and TKHL show comparable power for $n=200$ and $n=2000$, but the power is higher for TKHL when $n=2000$. Our method being based on an asymptotic approximation, the TKHL is much faster to compute. Additionally, while the frameworks of the two tests are comparable for the two-sample test, there is no equivalent of our TKHL for more complex designs (i.e., the MMD only performs two-sample tests). Hence these simulations show that the asymptotic regime is reached at $n=200$ and that the Nystr\"om approximation has negligible impact while preserving computing resource. As for the truncation parameter, a trade-off should be established empirically to ensure good type-I error control to achieve best power.

\begin{figure}
\begin{center}
\includegraphics[width=\linewidth]{./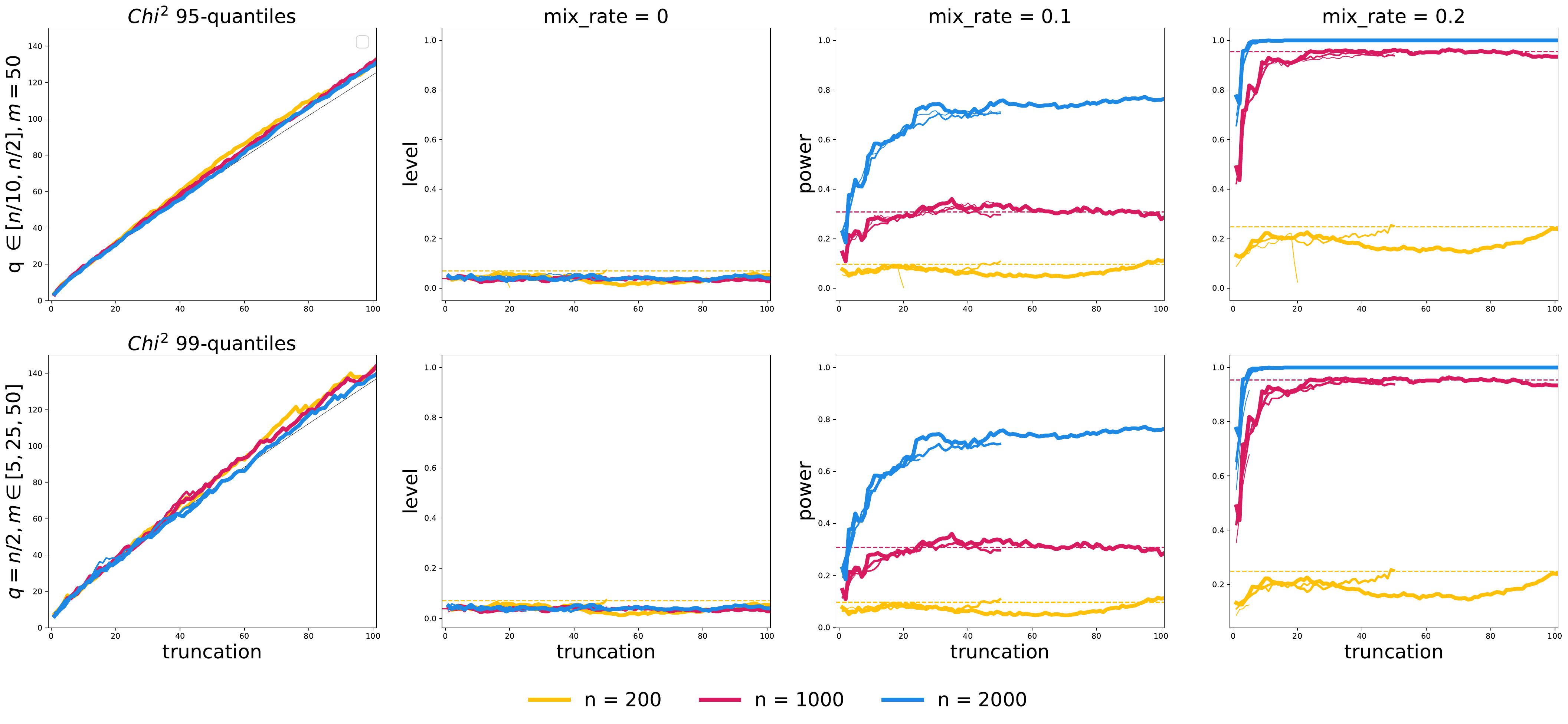}
\end{center}
\caption{Convergence of the TKHL empirical quantiles (95$\%$, 99$\%$, for different $n$), towards the theoretical $\chi^2$ quantile (black plain line), according to truncation $T$ under $H_0$ ($\gamma = 0$). Empirical level (col. 2) and power (col. 3-4), of the TKHL test (plain line) and MMD (dotted line) according to the truncation parameter $T$. Lines are interrupted because $T \leq m$. \texttt{Mix\_rate} corresponds to the proportion of odd numbers ($\gamma$). Curve widths correspond to different values of $q$ (row. 1) and $m$ (row. 2).}
\label{fig:simulations}
\end{figure}

\section{Leveraging kernel testing in single-cell transcriptomics}
\label{sec:Appli-Reversion}

\paragraph{Experimental design and model.} We analyze a scRT-qPCR dataset, a method that enables sensitive and targeted quantification of the transcriptome. The dataset\footnote{Data are available under licence CC0, under the SRA repository number SRP076011} contains the expression measurements of 83 genes on 685 cells \cite{zreika_evidence_2022}. In this context single-cell transcriptomics has been performed to investigate the cell differentiation process of chicken primary erythroid progenitor cells (T2EC). The experimental design is the following: to investigate the impact of the medium on cell differentiation, undifferentiated cells were initially put in a self-renewal medium (\texttt{0H} level of the medium effect), then put in a differentiation-inducing medium for 24h (\texttt{24H}). The population was then split into a first population maintained in the same medium for an additional 24h to achieve differentiation (\texttt{48HDIFF}), the second population was put back in the self-renewal medium to investigate potential reversion (\texttt{48HREV}). Before testing for the medium effect, we must account for the fact that the data were acquired on different dates, which induces an 8-level batch effect. We consider the linear model in the feature space: $\ey{u,v,\ell} = \alpha_{u} + \beta_{v} + \error_{u,v,\ell}$,
where $Y_{u,v,\ell}$ is the vector of log-normalized expressions (size 83) for cell $\ell \in \{1,\hdots,n_{u,v}\}$ in batch $u \in \{1, \hdots, U\}$ ($U=8$), and medium $v \in \{\texttt{0H},\texttt{24H},\texttt{48HDIFF},\texttt{48HREV}\}$, and $\phi(Y_{u,v,\ell})$ its featurization. Then $\alpha_{1},\dots,\alpha_{8}$ are the model parameters associated with the batch effect and $\beta_{\texttt{0H}},\beta_{\texttt{24H}},\beta_{\texttt{48HDIFF}},\beta_{\texttt{48HREV}}$ are the model parameters associated with the medium effect. Analyses were performed with the Gaussian kernel with bandwidth calibrated by the median heuristic \cite{garreau_large_2018}.

\paragraph{Diagnostic plots.} In practice, any analysis based on a linear model begins with diagnostic plots that enable the visual inspection of linearity and homoskedasticity hypotheses, as well as to perform model checking \cite{johnson_applied_2002,olive_robust_2017}. To provide such an operational framework in the feature space, we consider the projection of the mean effects \(\heY = \Px \eY\) and the residuals of the model \(\eY - \heY\) onto the eigen-elements of the residual covariance operator \(\widehat{\Sigma}_T\) obtained from the full model. Visual inspection on the first direction ($\widehat{f}_1$) confirm that the assumptions of linearity and homoscedasticity in the feature space are met (Fig. \ref{fig:proj}), also confirmed on other directions (not shown). 

\paragraph{Testing with the TKHL statistic.}We initiated our analysis by examining technical variability in the experiment, starting with a test for batch effects before assessing the main "medium" effect, which is of biological interest. We tested the null hypothesis of no batch effect, \(H_0: \alpha_1 = \hdots = \alpha_8\), using the linear model without the medium effect $\ey{u,\ell} = \alpha_{u} + \error_{u,\ell}$ and the contrast matrix \(\mathbf{L}_\alpha\) of \eqref{eq: Lmanova one versus all}. using the TKHL method based on the residual covariance operator associated with the one-way model with $\alpha_1,\dots,\alpha_8$ only and the corresponding \(\mathbf{L}_\alpha\) written in \eqref{eq: Lmanova one versus all}. This test revealed a significant difference between batches (\(p\)-value = \(1.35 \times 10^{-3}\) for \(T=1\). This effect was also detected by a linear MANOVA \cite{Coolyer2018}, \(p\)-value \(<10^{-3}\). This initial finding indicates that testing for the medium effect must be conducted after correcting for the batch effect, which in our setting is achieved by testing for the medium effect, \(H_0:\beta_{\texttt{0H}}=\beta_{\texttt{24H}}=\beta_{\texttt{48HDIFF}}=\beta_{\texttt{48HREV}}\), using the TKHL method based on the residual covariance operator associated with the full model and a matrix \(\mathbf{L}_{\beta|\alpha}\) corresponding to \eqref{eq: Lmanova one versus all} supplemented with columns of zeros on  corresponding to $\alpha_1,\dots,\alpha_8$. This test showed highly significant differences (\(p\)-value = \(4.08 \times 10^{-44}\) for \(T=1\)), indicating that gene expression varies significantly across at least one medium compared to others (not detected by MANOVA, \(p\)-value = 0.993). To deepen the analysis, we complemented this global test with individual pairwise comparisons to identify media that differ from one another. For this, we considered the pairwise hypotheses \(H_0^{v,v'}:\beta_v = \beta_{v'}\) for $(v,v') \in \{\texttt{0H},\texttt{24H},\texttt{48HDIFF},\texttt{48HREV}\}^2$, using for instance $\mathbf{L}_{\texttt{0H},\texttt{24H}}=(1,-1,0,0,0,0,0,0,0,0,0,0)$. This test revealed no statistically significant difference between the media \(\texttt{0H}\) and \(\texttt{48HREV}\) (\(p\)-value = \(1\) for \(T=1\), after the Benjamini–Hochberg (BH) adjustment for multiple comparison). Other media pairs were confirmed to be different, with the most distinct media being \(\texttt{0H}\) and \(\texttt{48HDIFF}\) (\(p\)-value = \(3.7 \times 10^{-31}\) for \(T=1\), BH-adjusted). The test statistic can be used as a kernelized distance between media, which scores the proximity between biological conditions (Supp. Fig. \ref{fig:heatmaps}), that are consistent across different values of the truncation parameter.

\paragraph{Representation of the linear decomposition.} To enhance the visual exploration of our model in the feature space, we propose a representation of the observations based on the linear decomposition suggested by our model. We extend the approach commonly employed in Kernel Fisher Discriminant Analysis (KFDA), which involves inspecting the empirical density of observations projected onto the discriminant axis \cite{ozier-lafontaine_kernel-based_2024}. Specifically, for the linear model in the feature space, we consider the eigen-directions of the operator \( \widehat{\Sigma}_T^{-1} \widehat{H}_{\mathbf{L}} \), used to test for the "batch" effect. Although this operator has a rank of \(I-1\), we simplify our analysis by focusing on the projection of observations onto its first direction (Fig. \ref{fig:proj}) that we call the discriminant axis. We believe this representation provides a powerful visualization of the variability of the data with respect to the linear decomposition of the mean embedding. When applied to single-cell transcriptomic data, this visualization initially demonstrates that despite the statistical significance of the "batch" effect, cells from different batches appear closely positioned on the discriminant axis dedicated to the batch effect, suggesting minimal practical differences between batches. Subsequently, when applied to the operator used for assessing the "medium" effect, the visualization underscores the proximity of cells in the initial medium (\texttt{0H}, blue) to those returned to the self-renewal medium (\texttt{48HREV}, purple), indicating a reversal in their differentiation. Cells from other media (\texttt{24H}, green; \texttt{48HDIFF}, red) are organized chronologically along the discriminant axis dedicated to the medium effect, which may then be interpreted as a differentiation trajectory of cells over time.

\paragraph{Detecting influential observations with the kernelized Cook distance.} We conclude our analysis by proposing a kernelized version of the Cook's distance that is classically used to detect influential observations in linear models \cite{diaz-garcia_sensitivity_2005}. Denoting by $\hparams_{(i)}$ the estimated parameters obtained when ignoring the $i^{th}$ observation, we propose a truncated and kernelized Cook's distance adapted to any linear combination of $\hparams$ to match with the TKHL test we perform, such that
\begin{equation*}
\mathcal{D}_{\mathbf{L},\tmax}(i) 
: = 
 \nLmanova^{-1} \trhs\left( (\Lmanova(\hparams - \hparams_{(i)}))^\star 
(\Lmanova\XXinv\Lmanova')^{-1}  \otimes  \widehat{\cov}_\tmax^{-1} 
(\Lmanova (\hparams- \hparams_{(i)}))\right),
\end{equation*}
The justification and the computation of this quantity are developed in Appendix \ref{app:diagnostics}. We conclude our analysis by displaying \(\mathcal{D}_{\mathbf{L}_\beta | \alpha}\) according to the position of cells along the discriminant axis (Fig. \ref{fig:cook}) to identify observations that leverage the medium effect. Our results show that the most influential observations are cells from media \(\texttt{0H}\) and \(\texttt{48HREV}\), positioned at the right margin of the discriminant axis. This positioning may indicate cells with unexpected gene expression patterns in relation to the differentiation process. Since the interpretability of kernel methods relies solely on individual observations rather than input features, identifying outlier observations is crucial in practice to fully exploit the potential of the linear model in the feature space.


\begin{figure}
  \centering
   \begin{subfigure}{0.49\textwidth}
     \centering
          \includegraphics[height=5.2cm]{./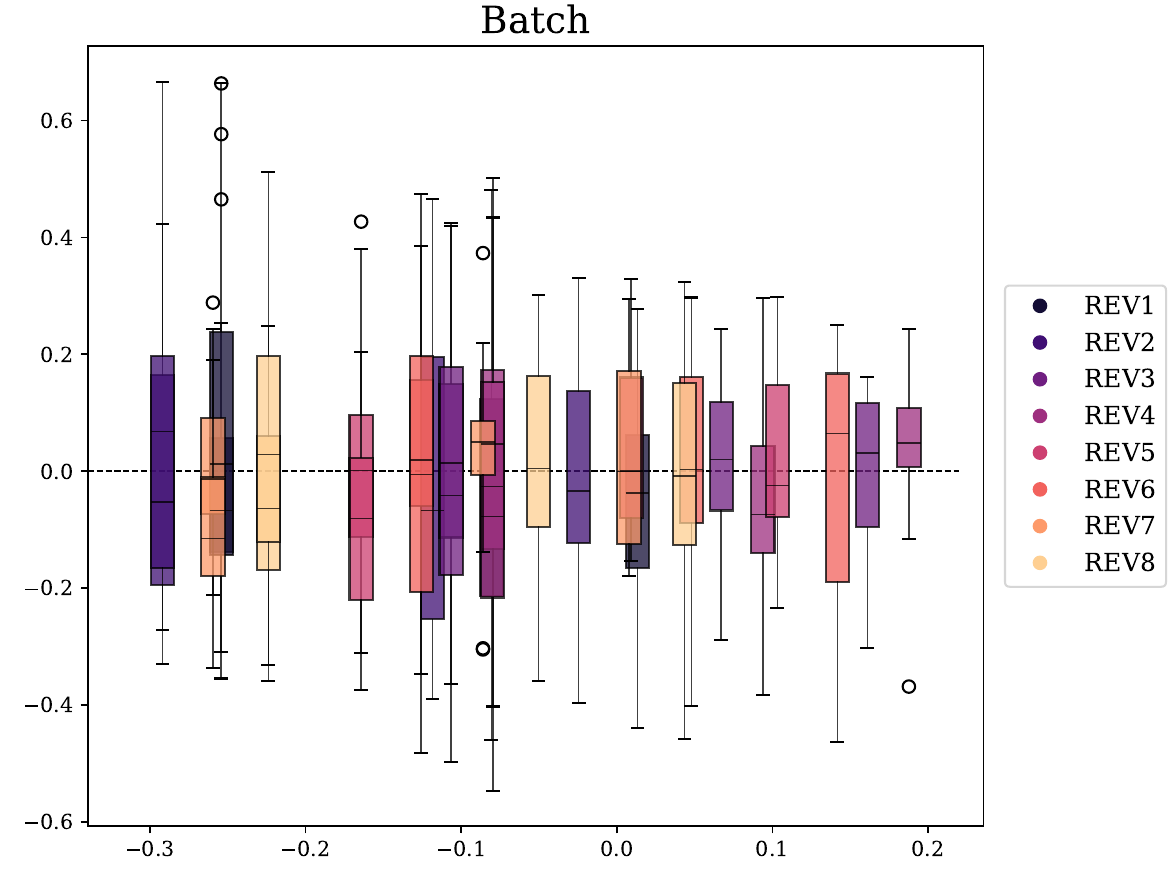}
      \label{fig:residual_plot_batch}
      \caption{}
    \end{subfigure}
    \begin{subfigure}{0.49\textwidth}
      \centering
      \includegraphics[height=5.2cm]{./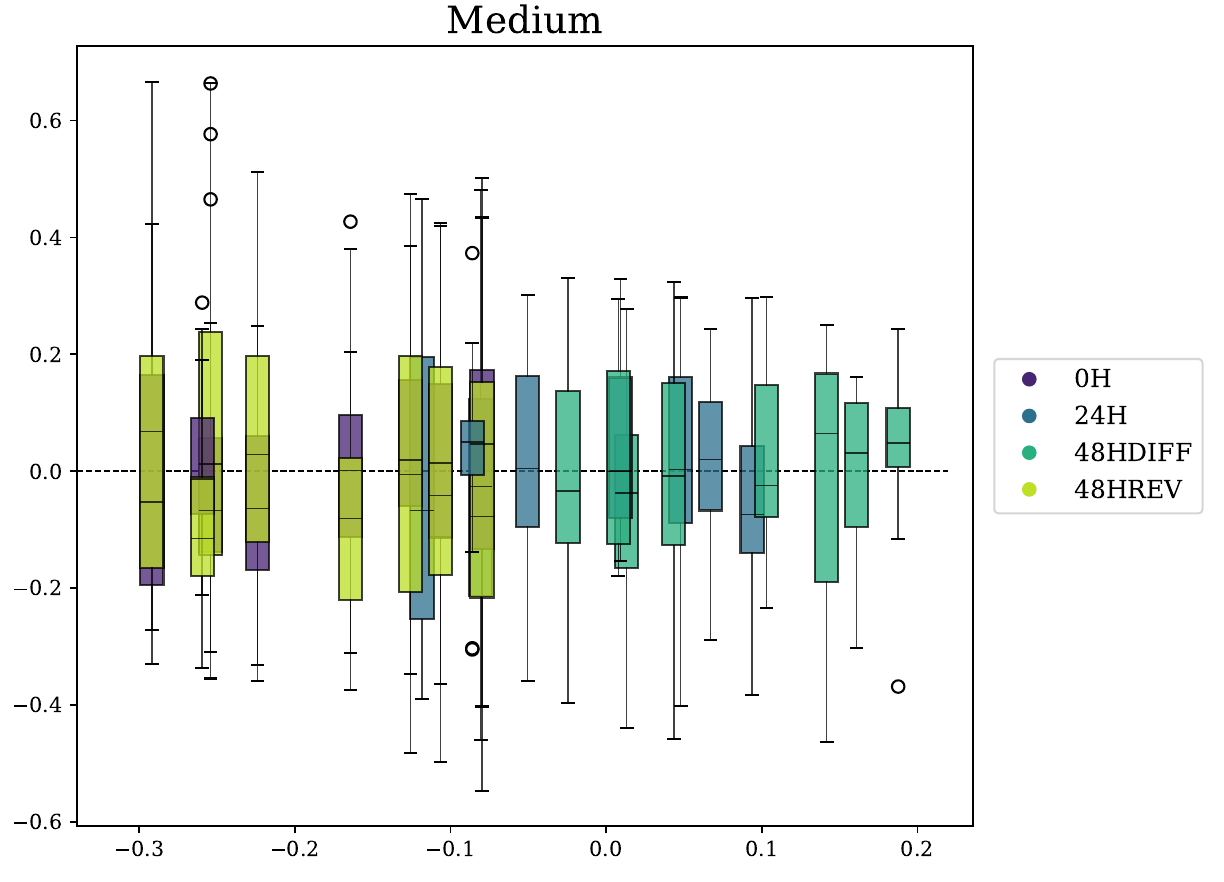}
      \label{fig:residual_plot_medium}
      \caption{}
    \end{subfigure}
    
    \begin{subfigure}{0.49\textwidth}
        `\centering
      \includegraphics[height=5.2cm]{./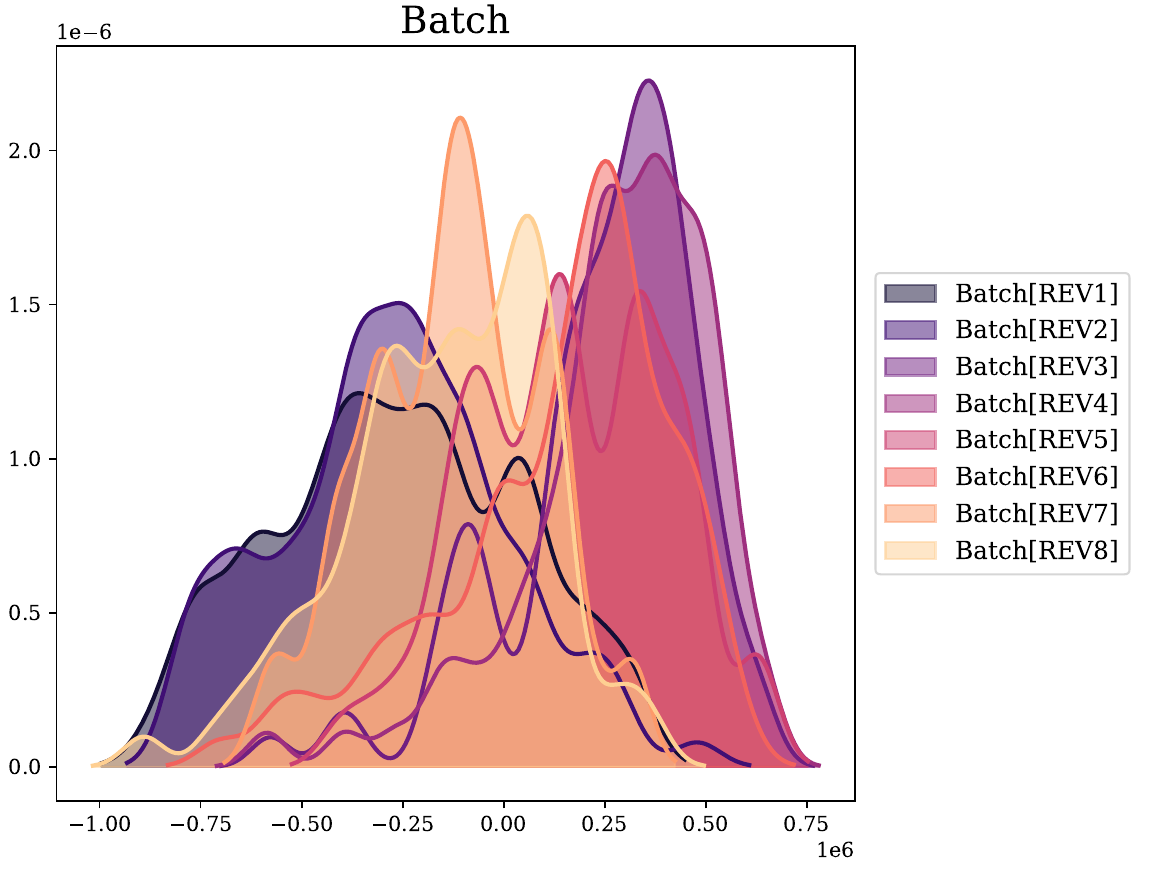}
      \label{fig:projections_batch}
      \caption{}
    \end{subfigure}
    \begin{subfigure}{0.49\textwidth}
      \centering
      \includegraphics[height=5.2cm]{./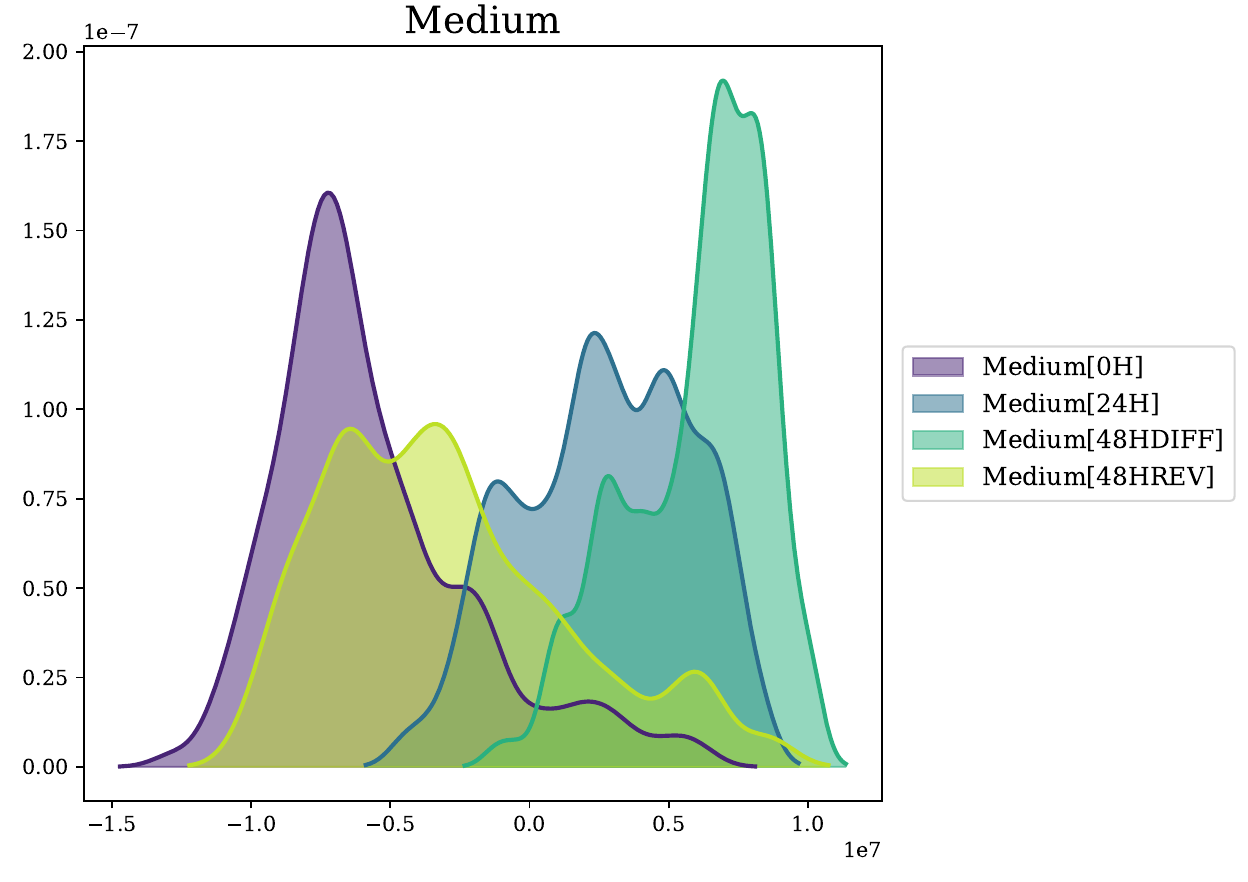}
      \label{fig:projections_medium}
      \caption{}
    \end{subfigure}
  \caption{(a, b) Residual plots associated with $t=1$, with colors indicating batches (a) and media (b). (d, e) Distributions of the response embeddings, projected on discriminant axes associated with testing for the batch effect (d) and the medium effect (e).}
\label{fig:proj}
\end{figure}

\begin{figure}
  \centering
   \begin{subfigure}{\textwidth}
     \centering
          \includegraphics[width=\linewidth]{./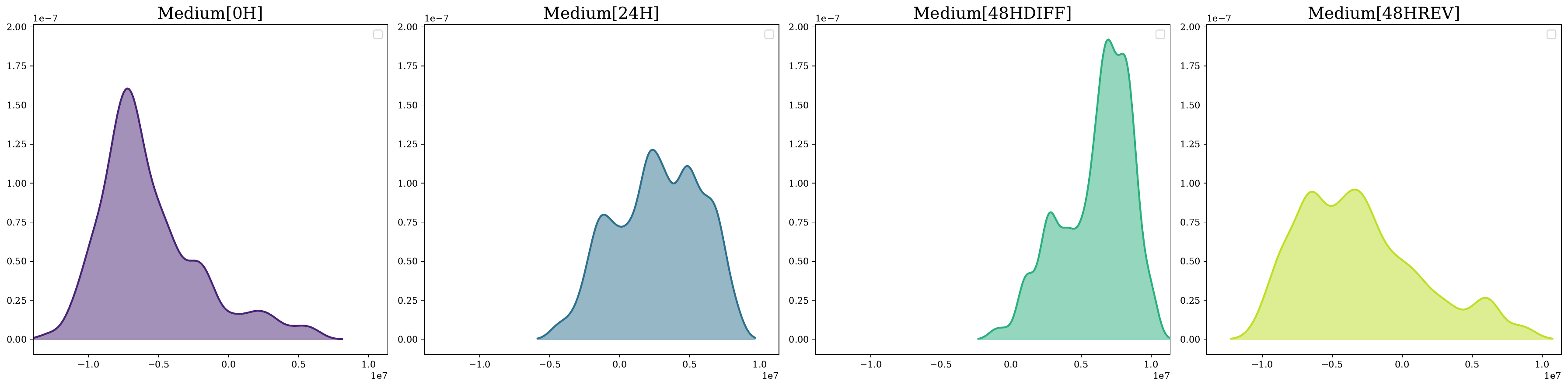}
      \label{fig:projections_by_level}
      \caption{}
    \end{subfigure}
    \begin{subfigure}{1\textwidth}
      \centering
      \includegraphics[width=\linewidth]{./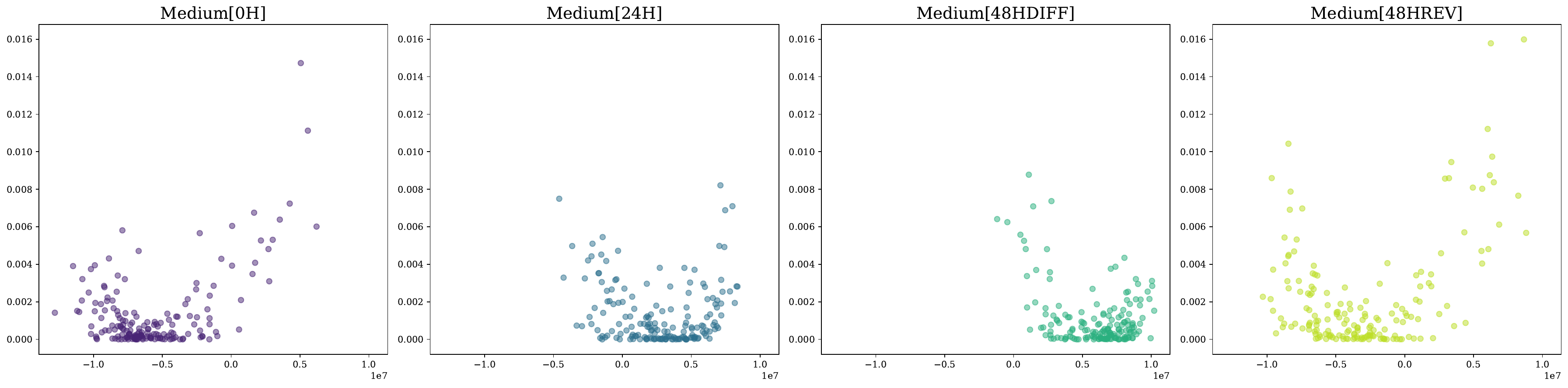}
      \label{fig:cook_by_level}
      \caption{}
    \end{subfigure}
  \caption{(a) Distributions of the response embeddings for each medium, projected on discriminant axes associated with testing for the medium effect. (b) Influence of the observations for each medium, plotted against their projection on the discriminant axis associated with the TKHL test statistic ($\tmax=1$).}
\label{fig:cook}
\end{figure}

\section{Discussion and Limitations}
\label{sec:discussions}

Here, we introduce a novel framework that generalizes the standard two-sample kernel test to accommodate any general design, thereby significantly expanding the applicability of kernel mean embedding to practical experimental designs. Our approach combines the benefits of linear models—particularly in terms of the interpretability of the linear decomposition of mean embeddings in the RKHS—with the proven power of non-parametric kernel testing. The statistic we propose constitutes a functional generalization of the Hotelling-Lawley trace test and can also be easily understood as a ratio between the model-driven variance operator and the residual variance operator. Additionally, classical strategies such as Wilk's Lambda, Pillai's Trace, and Roy's Largest Root, which are all based on the two operators $\Hmanova$ and $\rcov$, could also be easily adapted to the RKHS framework. The connection we propose between linear models and kernel testing allows us to enrich kernel testing with well established diagnostic and exploratory tools, extending its  applicability and interpretability. 

Our theoretical work establishes that the truncated kernel Hotelling-Lawley (TKHL) statistic asymptotically follows a chi-square distribution under the null hypothesis. From a computational standpoint, regularization via truncation proves more tractable than ridge penalization. This advantage is further underscored by our second theorem, which demonstrates that this asymptotic behavior also holds true for the Nyström version of the TKHL, thereby extending its applicability to large datasets where such assumptions are practical. Although permutations offer an alternative strategy for computing quantiles of the null distribution, the complexity of studentized functional contrast testing, which scales as \(\mathcal{O}(n^3)\), makes easily computed approximations essential, if not indispensable, for handling large data. Regarding the selection of landmarks in the Nyström procedure, the generalization of kernel testing to any design presents a new challenge: selecting landmarks that accurately reflect the variability induced by the experimental design. Traditional uniform sampling should be adapted to incorporate more advanced approaches \cite{li2016fast}. An alternative strategy to reduce the computational burden could be to consider Random Fourier Features, as recently proposed for MMD \cite{choi2024}.

In future work, we plan to conduct a non-asymptotic analysis of our test procedure, similar to recent minimax power results achieved in ridge-regularized scenarios \cite{hagrass_spectral_2022}. This direction could pave the way for theoretically grounding the choice of the truncation hyper-parameter, currently determined empirically. From a theoretical point of view, the main theorems are valid for any $T$. But the simulation study already provides some information regarding empirical performances: the truncation parameter impacts the power by allowing a more detailed description of the distributions' geometry in the RKHS as it increases. However, a trade-off must be made because as \(T\) increases, the eigenvalues decrease and become more difficult to estimate. Another promising avenue involves establishing a link between the linear decomposition we propose in the RKHS and the initial distributions compared in the input space. This link is guaranteed in the two-sample case by the injectivity of the feature map and the characteristic property of the kernel. However, general interpretation of the test in terms of input distribution is no longer possible when the mean embedding is decomposed onto linear subspaces. Hence, conclusions drawn from our tests are only valid in the RKHS, and not in the input space. Consequently, the relationship between the linear functional effects and their distributional counterparts in the feature space remains to be elucidated. 

Finally, our current example employs a Gaussian kernel calibrated by the median heuristic, but our flexible framework could potentially incorporate aggregation strategies to broaden the choice of kernels, as suggested by recent studies \cite{schrab_mmd_2022}.

\newpage

\newpage
\appendix

\section{Operators on $\Hk$ and on $\Hk^k$}
\label{sec:appNotations}

In  this section, we provide notation and we define properly all the objects involved in the definition of the  TKHL statistic.
 
\subsection{Operators on $\Hk$}

Let $(\Hk,\langle ,  \rangle_{\Hk}) $ be a separable Hilbert space. A linear operator $C$ from $\Hk$ to $\Hk$ is called Hilbert-Schmidt if $\sum_{i \geq 1} \| C f_i \|_{\Hk}^2   < \infty$, where $(f_i)_{i \geq 1}$ is an orthonormal basis of $\Hk$, and in this case the sum is independent of the chosen orthonormal basis. The set of Hilbert-Schmidt operators $\HSk$   endowed with the inner product 
\begin{equation*}
 \langle   C , T  \rangle_{\HSk}  = \sum_{i \geq 1}  \langle   C f_i , T  f_i \rangle_{\Hk}   = \sum_{i,j \geq 1}   \langle   C f_i ,  f_j \rangle_{\Hk} \langle   T f_i ,  f_j \rangle_{\Hk} \end{equation*}
 is also an Hilbert space. 
 
A linear operator $C$ on  $\Hk$ is called self-adjoint if $\langle C g , h \rangle_{\Hk} = \langle  g , C h \rangle_{\Hk} $ for any $(g,h) \in \Hk^2$. A linear operator $C$ on  $\Hk$ is called class-trace if $\sum_{i \geq 1}  | \langle   C f_i ,   f_i \rangle_{\Hk} | < \infty $. In this case the sum is independent of the chosen orthonormal basis and the trace of $C$ is defined as 
\begin{equation*}
\trhs (C) = \sum_{i \geq 1}    \langle   C f_i ,   f_i \rangle_{\Hk}  .
\end{equation*}

A finite rank operator is trace-class, and a trace-class operator is Hilbert-Schmidt. 
  
For $(f,g) \in \Hk \setminus {0}$, we consider the rank one operator $f \otimes g$ defined by 
\begin{equation*}
 h \in \Hk \mapsto  f \otimes g  (h) =  \langle  g , h \rangle_{\Hk}  f  \in \Hk , 
\end{equation*}
and we have $\| f \otimes f  \|_{\HSk} = \|f \|_{\Hk}^2  $ and $ \| f \otimes g  \|_{\HSk} \leq \|f \|_{\Hk} \|g \|_{\Hk} .$

\subsection{Norm and operators on $\Hk^k$}
\label{subs:opHkn}

Let $k \in \mathbb N ^*$. We first introduce a scalar product on $\Hk^{k}$ in a natural way by
\begin{equation*}
\langle \mathbf{g},\mathbf{h} \rangle_{\Hk^k} := \sum_{j=1}^k \langle g_j,h_j \rangle_{\Hk}  
\end{equation*}
for $\mathbf{g}  = (g_1, \dots, g_k)  \in \Hk^{k}$ and $\mathbf{h}  = (h_1, \dots, h_k)  \in \Hk^{k}$. We denote by
$ \|  \:  \|_{\Hk^k}$ the associated norm on $\Hk^{k}$.

We are now in position to define rigorously all the operators that appears in the HL statistic  $\statmanova$. A matrix $\dX \in \mtrx{\nobs,k}$ is also seen as a linear application from $\Hk^{k}$ to $\Hk^{\nobs}$: for $\mathbf{g} \in \Hk^{k}$, $\dX \mathbf{g} \in \Hk^n$ and
 \begin{equation}
\label{eq:MatrixHk}
(\dX \mathbf{g})_i  = \sum_{j=1}^k  x_{i,j} g_j   \quad  i = 1, \dots n .
 \end{equation} 
We next define rigorously the adjoint operator that appears in the HL statistic. For $\mathbf{g} \in \Hk^k$, we consider the associate application between the two Hilbert Spaces $\HSk$ and $\Hk^k$ defined by
\begin{equation}
\label{eq:gA}
 \mathbf{g}:  \mathbf{A}\in \HSk \mapsto  \mathbf{g}\mathbf{A} = \left(\mathbf{A} g_1, \dots \mathbf{A} g_k \right)' \in \Hk^k .
 \end{equation}
The application $u : (\mathbf{A},\mathbf{h}) \in  \HSk \times \Hk^k \mapsto  \langle  \mathbf{g}\mathbf{A} , \mathbf{h} \rangle_{\Hk^k} $ is clearly a sesquilinear form and it is also bounded because
\begin{eqnarray*}
\label{eq:product_hn_hs}
|u(\mathbf{A},\mathbf{h}) | & \leq & \|\mathbf{g}\mathbf{A} \|_{\Hk^k}   \|\mathbf{h} \|_{\Hk^k} \\
& \leq &    \sum_{j=1}^k \| \mathbf{A} g_j \|_{\Hk}      \|\mathbf{h} \|_{\Hk^k}  \\
& \leq &     \sum_{j=1}^k \| \mathbf{A} \|   \| g_j \|_{\Hk}   \|\mathbf{h} \|_{\Hk^k} \\
& \leq &  \left(   \sum_{j=1}^k  \| g_j \|_{\Hk}  \right) \| \mathbf{A} \|_{\HSk } \|\mathbf{h} \|_{\Hk^k} . 
\end{eqnarray*}
We can therefore consider the adjoint operator of this application (see for instance Theorem 2.2 in \cite{conway_course_1997}), which is denoted by $\mathbf{g^\star} : \Hk^k \mapsto \HSk $ and we check that $\mathbf{g^\star}$ is actually the operator:
 \begin{equation}
\label{eq:adjoint}
  \mathbf{g^\star} : \mathbf{h}\in \Hk^k \mapsto  \mathbf{g}^\star \mathbf{h} = \sum_{i = 1}^{k}h_i \otimes g_i \in\HSk  .
 \end{equation}  
Indeed, for any  $(  \mathbf{h}, \mathbf{A}) \in   \Hk^k \times \HSk$, on the one hand, by definition of $\mathbf{g}\mathbf{A}$ we have
\begin{eqnarray*}
\langle  \mathbf{g}\mathbf{A} , \mathbf{h} \rangle_{\Hk^k} &=&  \sum_{i=1}^k \langle   \mathbf{A} g_i, h_i   \rangle_{\Hk} \\
&=&  \sum_{i=1}^k  \sum_{j,\ell} \langle g_i  , f_j   \rangle_{\Hk}    \langle  h_i, f_\ell  \rangle_{\Hk}  \langle   \mathbf{A} f_j, f_\ell   \rangle_{\Hk},
\end{eqnarray*}
where $(f_j)_{j \geq 1}$ is an orthonormal basis of $\Hk$ so $g_i = \sum_{j \geq 1}  \langle  g_i, f_j   \rangle_{\Hk} f_j $ and $h_i = \sum_{\ell \geq 1}  \langle  h_i, f_\ell  \rangle_{\Hk} f_\ell $. On the other hand, 
\begin{eqnarray*}
\langle  \mathbf{A} ,  \sum_{i = 1}^{k}h_i \otimes g_i \rangle_{\HSk}  &=&  \sum_{j, \ell}  \langle   \mathbf{A} f_j, f_\ell  \rangle_{\Hk}  \left\langle \sum_{i = 1}^{k}h_i \otimes g_i  f_j, f_\ell  \right \rangle_{\Hk}   \\
&=& \sum_{j, \ell} \sum_{i=1}^k  \langle   \mathbf{A} f_j, f_\ell  \rangle_{\Hk}  
\langle  g_i , f_j \rangle_{\Hk} 
\langle  h_i  , f_\ell   \rangle   _{\Hk}.
\end{eqnarray*}

With above definition, for $\mathbf{g}$  and $\mathbf{M}=\big( m_{i,j}\big)_{i,j\in\{ 1, \dots, k \}} \in\mtrx{k}$, we then we have that:
\begin{equation}
\label{eq:formequadra}
\mathbf{g}^\star \mathbf{M} \mathbf{g} = \sum_{i,j= 1}^{k}m_{i,j} g_i \otimes g_j \in \HSk.
 \end{equation} 
For  $\mathbf{g}, \mathbf{h}  \in\Hk^k$  we also consider the operator $ \mathbf{g}  \mathbf{h} ^\star :  \Hk^k \mapsto \Hk^k  $
defined by $ \mathbf{g}  \mathbf{h} ^\star  \mathbf{f} : = \mathbf{g} (  \mathbf{h} ^\star  \mathbf{f}) $ for $\mathbf{f}  \in \Hk^k$, according to the applications defined in \eqref{eq:gA} and \eqref{eq:adjoint}. It can be checked that this operator exactly corresponds to the linear operator defined by the matrix (as in \eqref{eq:MatrixHk}):
\begin{equation}
\label{eq:gstarg}
\mathbf{g}  \mathbf{h} ^\star := \big(\left \langle g_i , h_j \right \rangle_{\Hk}\big)_{i,j\in\{ 1, \dots, k \}}\in\mtrx{k}. 
\end{equation}

For $v=(v_1,\dots,v_k)  \in \R^k$ and $\mathbf{g} \in \Hk^k$, we will abuse the notation of matrix multiplication by defining
$v' \mathbf{g}$
for the linear combination of the elements of $\mathbf{g}$ with weights $v$: 
\begin{equation*}
\begin{split}
v' \mathbf{g}  = \sum_{i = 1}^{k} v_i g_i \in \Hk.
\end{split}
\end{equation*}
For any $v, w \in\R^k$, we then have that
\begin{equation*}
\left \langle v' \mathbf{g}  , w' \mathbf{h} \right \rangle_{\Hk} = v^\prime (\mathbf{g} \mathbf{h}^\star) w. 
\end{equation*}

For $\mathbf{g}\in\Hk^k$, we finally define the linear application   
\begin{equation}
\label{HkcontreH}
  h  \in \Hk \mapsto \mathbf{g}h = \big(\left \langle g_1 , h \right \rangle_{\Hk},\dots,\left \langle g_k , h \right \rangle_{\Hk} \big)^\prime \in\R^k  .
 \end{equation}

\section{Linear model in the feature space $\Hk$}
\label{App:KernelLinearmodel}

\subsection{Least squares estimation on the linear model in the feature space}
\label{append:ls}

We check that the estimator given in \eqref{eq: empirical model parameters}  is the least squares estimator of $\params$ in the model~\eqref{eq:linear model on embeddings} for this norm on $\Hk^{\nobs}$. The Hilbert space $\Hk$ being separable,  let us consider $(\Hb_s)_{s\geq 1}$  an orthonormal basis of $\Hk$. For $s \geq 1$  and $h \in \Hk$, we define $h^s = \left \langle h , \Hb_s \right \rangle_{\Hk}$ such that $h = \sum_{s \geq 1} h^s \Hb_s$. From the definition of the norm $ \| \,  \|_{\Hk^{\nobs}}$, we easily find that the least squares problem in $\Hk^{\nobs}$  reduces to a collection of independent and classical  univariate least squares problems: for $\params = (\paramsuni_0, \dots,\paramsuni_{\nx-1})' \in \Hk^{\nx} $ where $\paramsuni_j = \sum_{s \geq 1} \param_j^s \Hb_s $, then
\begin{eqnarray*}
\| \eY  -  \dX \params\|^2_{\Hk^{\nobs}} &=& \sum_{i=1 \dots \nobs} \left\| \ey{j}  -  \dx_i'\params   \right\|_{\Hk}^2  \\
&=& \sum_{i=1 \dots \nobs} \sum_{s \geq 1} \left| \ey{j}^s  -  \sum_{j=0 \dots \nx-1} x_{i,j} \param_j^s  \right|^2  \\
&=&  \sum_{s \geq 1} \left\{ \sum_{i=1 \dots \nobs} \left| \ey{j}^s  -  \sum_{j=0 \dots \nx-1} x_{i,j} \param_j^s  \right|^2   \right\} \\
&=& \sum_{s \geq 1}  \| \eY^s   -  \dX \params ^s \|_{\R^{\nobs}} ^2 .
\end{eqnarray*}
As $\dX$ is assumed to be full rank, the least squares estimator of $\params^s = (\param_0^s, \dots, \param_{\nx-1}^s )' \in \R^{\nx}$ is given by
$
\hparams^s = \XXX \eY^s
$.
From this, we deduce that for the least square estimation of the  linear model \eqref{eq:linear model on embeddings}, we have
\begin{equation*}
 \hparams= \XXX \eY .
 \end{equation*}
Moreover, the predictions in the linear model \eqref{eq:linear model on embeddings} are defined as $\widehat{ \eY } =  \dX \XXX \eY  = \Px  \eY$ where $\Px = (\Pxij{i}{j})_{1\leq i,j \leq n}$  is the (matrix of the) orthogonal projection on $Im(\dX)$ in $(\R^n, \| \, \|_2 )$. We also define the residuals as 
 $\residuals =(\residual_1,\dots,\residual_\nobs)$ satisfy   $\residuals =\Pxp \eY  = \Pxp \errors \in\Hk^\nobs$, where  $\Pxp = \identity_\nobs - \Px = (\Pxpij{i}{j})_{1\leq i,j \leq n} $ is the orthogonal projection  on $Im(\dX)^\perp$.

According to the following lemma, we check that the matrices $\Px$ and $\Pxp$ can be seen as orthogonal matrices in $\Hk^n$. 
\begin{lemma} 
\label{lem:proj_est_une_proj}
Let $\boldsymbol{P}$ an orthogonal projection in $(\R^n, \langle,\rangle_2 )$. Then $\boldsymbol{P}$ is also an orthogonal projection in $(\Hk^n,\langle,\rangle_{\Hk^n})$.  In particular,  $\left \| \boldsymbol{P} f  \right \|_{\Hk^n} \leq \left \| f \right \| _{\Hk^n}$ for any $\mathbf{f} \in \Hk^n$.
\end{lemma}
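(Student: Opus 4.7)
The plan is to reduce the statement to the finite-dimensional case coordinate by coordinate in an orthonormal basis of $\Hk$. Fix an orthonormal basis $(\Hb_s)_{s\geq 1}$ of $\Hk$ and, as in the derivation of \eqref{eq: empirical model parameters} above, for $\mathbf{f} = (f_1, \dots, f_n) \in \Hk^n$ write the coordinate vector $\mathbf{f}^s := (\langle f_1, \Hb_s\rangle_\Hk, \dots, \langle f_n, \Hb_s\rangle_\Hk)^\prime \in \R^n$, so that $f_i = \sum_{s\geq 1} f_i^s \Hb_s$ and by Parseval
\begin{equation*}
\langle \mathbf{f}, \mathbf{g}\rangle_{\Hk^n} \;=\; \sum_{i=1}^n \langle f_i, g_i\rangle_\Hk \;=\; \sum_{s\geq 1} \langle \mathbf{f}^s, \mathbf{g}^s\rangle_{\R^n}.
\end{equation*}

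The first step is to observe that the action of $\boldsymbol{P}$ on $\Hk^n$ defined by \eqref{eq:MatrixHk} commutes with this decomposition: since $\boldsymbol{P} \in \mtrx{n}$ has real entries and the map $h\mapsto \langle h, \Hb_s\rangle_\Hk$ is $\R$-linear, a direct computation gives $(\boldsymbol{P}\mathbf{f})^s = \boldsymbol{P}\,\mathbf{f}^s$ in $\R^n$ for every $s$. From this, idempotence lifts from $\R^n$ to $\Hk^n$: $(\boldsymbol{P}^2\mathbf{f})^s = \boldsymbol{P}^2 \mathbf{f}^s = \boldsymbol{P}\mathbf{f}^s = (\boldsymbol{P}\mathbf{f})^s$ for all $s$, hence $\boldsymbol{P}^2 = \boldsymbol{P}$ on $\Hk^n$.

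The second step is self-adjointness with respect to $\langle\cdot,\cdot\rangle_{\Hk^n}$. Using the coordinate decomposition and the $\R^n$-symmetry $\boldsymbol{P}^\prime = \boldsymbol{P}$,
\begin{equation*}
\langle \boldsymbol{P}\mathbf{f}, \mathbf{g}\rangle_{\Hk^n} \;=\; \sum_{s\geq 1} \langle \boldsymbol{P}\mathbf{f}^s, \mathbf{g}^s\rangle_{\R^n} \;=\; \sum_{s\geq 1} \langle \mathbf{f}^s, \boldsymbol{P}\mathbf{g}^s\rangle_{\R^n} \;=\; \langle \mathbf{f}, \boldsymbol{P}\mathbf{g}\rangle_{\Hk^n}.
\end{equation*}
Combined with idempotence, this shows that $\boldsymbol{P}$ is an orthogonal projection on $(\Hk^n, \langle\cdot,\cdot\rangle_{\Hk^n})$. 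The contraction inequality then follows immediately, since applying the Euclidean bound $\|\boldsymbol{P}\mathbf{f}^s\|_{\R^n}^2 \leq \|\mathbf{f}^s\|_{\R^n}^2$ coordinate-wise and summing over $s$ yields $\|\boldsymbol{P}\mathbf{f}\|_{\Hk^n}^2 \leq \|\mathbf{f}\|_{\Hk^n}^2$.

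There is no real obstacle here; the only thing to be careful about is that one must check the basis-decomposition identity $(\boldsymbol{P}\mathbf{f})^s = \boldsymbol{P}\mathbf{f}^s$ explicitly, to be sure that the abstract action of $\boldsymbol{P}$ on $\Hk^n$ coincides with the component-wise action on the Fourier coefficients (this uses only that the entries of $\boldsymbol{P}$ are real scalars and that $\langle\cdot,\Hb_s\rangle_\Hk$ is continuous and linear). Once that identification is in hand, the lemma is a direct transfer of the finite-dimensional facts $\boldsymbol{P}^2=\boldsymbol{P}$, $\boldsymbol{P}^\prime=\boldsymbol{P}$ to $\Hk^n$.
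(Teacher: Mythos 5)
Your proof is correct, but it takes a different route from the paper's. The paper establishes self-adjointness by decomposing the matrix: it writes $\boldsymbol{P} = \sum_{i=1}^r v_i v_i'$ over an orthonormal basis of $\operatorname{Im}(\boldsymbol{P})$ in $\R^n$ and checks (via the two identities \eqref{eq:ortho1}--\eqref{eq:ortho2}) that each rank-one symmetric piece acts self-adjointly on $\Hk^n$. You instead decompose the space: using an orthonormal basis $(\Hb_s)_{s\geq1}$ of $\Hk$ you identify $\Hk^n$ isometrically with $\bigoplus_{s\geq1}\R^n$ and verify that $\boldsymbol{P}$ acts diagonally on this decomposition, i.e.\ $(\boldsymbol{P}\mathbf{f})^s=\boldsymbol{P}\mathbf{f}^s$, after which idempotence, symmetry and the contraction bound all transfer from the Euclidean case in one stroke. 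Your key identity is checked correctly (it only uses finiteness of the sums defining $(\boldsymbol{P}\mathbf{f})_i$ and linearity of $h\mapsto\langle h,\Hb_s\rangle_\Hk$, and the interchange of sums in the Parseval identity is justified by absolute convergence). What your approach buys is uniformity and generality: the same reduction shows that any real matrix $\mathbf{M}$ acting on $\Hk^n$ inherits its algebraic and metric properties (operator norm included) from its action on $\R^n$, and it mirrors the coordinate-wise argument the paper already uses for the least-squares estimator in Appendix~\ref{append:ls}. What the paper's argument buys is that it needs only the finite rank-one decomposition of $\boldsymbol{P}$ and does not have to invoke an orthonormal basis of $\Hk$ (though separability, assumed throughout, makes this a non-issue).
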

\begin{proof}
We easily check that for any $\mathbf{f} \in \Hk^n$, $\boldsymbol{P} ( \boldsymbol{P}\mathbf{f}) =  (\boldsymbol{P}   \boldsymbol{P}) \mathbf{f} = \boldsymbol{P}\mathbf{f}$. Thus $\boldsymbol{P}$ is also a projection on $\Hk^n$. Next, it can be easily checked that for any $(v,w) \in (\R^n)^2$, and $(\mathbf{f},\mathbf{g}) \in (\Hk ^n)^2 $, 
\begin{equation}
\label{eq:ortho1}
 \langle ( v' v ) \mathbf{f} ,(w' w)  \mathbf{f} \rangle_{\Hk^n} = \langle   v'  \mathbf{f} , w'   \mathbf{f} \rangle_{\Hk } \langle   v     , w      \rangle_{\R^n}.
\end{equation}
and
\begin{equation}
\label{eq:ortho2}
\langle ( v' v ) \mathbf{f} ,   \mathbf{g} \rangle_{\Hk^n} = \langle  \mathbf{f} ,  ( v' v ) \mathbf{g} \rangle_{\Hk^n}.
\end{equation}
since $\boldsymbol{P}$ is an orthogonal projection in $(\R^n, \| \, \|_2 )$, we can write $\boldsymbol{P} = \sum_{i=1}^r v_i v_i' $ where $(v_i)_{i = \dots r}$ is an orthonormal basis of $Im ( \boldsymbol{P})$ in $\R^n$.  From \eqref{eq:ortho1} and \eqref{eq:ortho2}, we easily find that $
\langle  \boldsymbol{P} \mathbf{f} ,   \mathbf{g} \rangle_{\Hk^n} = \langle  \mathbf{f} ,  \boldsymbol{P} \mathbf{g} \rangle_{\Hk^n}.$
 
\end{proof} 
 
\subsection{The KHL and TKHL statistics are well defined}
\label{sec:KHKwelldefined}
The test operator associated with $\Lmanova$ is defined as 
\begin{equation*}
 \Hmanova = (\Lmanova\hparams)^{\star} (\Lmanova\XXinv\Lmanova^\prime)^{-1} (\Lmanova\hparams), 
 \end{equation*}

where $\Lmanova = (\lmanova_{\ell}^j)_{\ell \in \{1,\dots,\nLmanova\},j\in \{1,\dots,\nx\}} \in \R^{\nLmanova \times \nx}$ is a surjective matrix. 

Note that that $ \Hmanova$ is a well-defined  Hilbert-Schmidt operator. Indeed, $ \hparams\in \Hk^n$, and according to \eqref{eq:MatrixHk}, $\Lmanova\hparams \in \Hk^d$. Next, the matrix $  \Lmanova\XXinv\Lmanova^\prime$ is an $d\times d$ invertible matrix because $\Lmanova$ is surjective. According to \eqref{eq:formequadra}, $\Hmanova \in \HSk$ and can be written as
\begin{equation*}
\Hmanova = \sum_{i,j= 1}^{d} c_{i,j} (\Lmanova\hparams)_i \otimes (\Lmanova\hparams)_j  ,
\end{equation*}
where $C =(\Lmanova\XXinv\Lmanova^\prime)^{-1}. $  It is also a finite rank and class-trace operator.

The residual covariance operator  $\rcov = \frac{1}{\nobs} \sum_{i = 1}^{\nobs} \residual_i \otimes \residual_i  $ is clearly a bounded and finite rank operator, it is thus compact. It is also self-adjoint and thus according to the spectral theorem there exists an orthonormal basis of $\Hk$ consisting of eigenvectors of $\rcov$. It can then we written as 
\begin{equation*}
\rcov =  \sum_{t = 1}^{r} \rcoval_t  \big( \rcovec_t \otimes \rcovec_t \big)  
\end{equation*}
with $\rcoval_t \neq 0$ and where $r$ is the rank of $\rcov$. Moreover $\rcov$ is obviously  positive semi-definite positive, thus $\rcoval_t >0$ for any non null eigenvalue, and we will assume that they are indexed in decreasing order.
We then introduce a generalized inverse for $\rcov$ as
\begin{equation*}
 \rcov^{-} =   \sum_{t = 1}^{r}\frac{1}{\rcoval_t} \big( \rcovec_t \otimes \rcovec_t \big) ,
\end{equation*}
which is also a  positive semi-definite bounded and finite rank self-adjoint operator on $\Hk$. 

The operator $\rcov^{-} \Hmanova$ is thus trace-class as $\rcov^{-}$ is bounded and $\Hmanova$ is trace-class. We can thus define the kernalized Hotelling-Lawley statistic as 
\begin{equation*}
\begin{split}
\widehat{\mathcal{F}} =  \trhs(\nobs ^{-1} \rcov^{-} \Hmanova) ,
\end{split}
\end{equation*}
where $\trhs$ is the trace operator on $\Hk$. The spectral regularization of parameter $\tmax \leq r$ is defined as:
\begin{equation}
\label{eq: Spectral decomposition}
 \begin{split}
\rcovt^{-1} =& \sum_{t = 1}^{\tmax}\frac{1}{\rcoval_t} \big( \rcovec_t \otimes \rcovec_t \big) .
 \end{split}
 \end{equation} 
It is also a  positive semi-definite bounded and finite rank self-adjoint operator on $\Hk$. The Truncated kernel Hotelling-Lawley (TKHL) trace statistic can thus be finally  defined as
\begin{equation*}
\begin{split}
\statmanova = \trhs\left(\frac{1}{\nobs} \THLcov\right).
\end{split}
\end{equation*}

\section{Proof of Theorem~\ref{theorem: Asymptotic distribution of the truncated Hotelling-Lawley test statistic}}

\label{App:Proof-Theo-chi2}

This section is dedicated to the proof of Theorem~\ref{theorem: Asymptotic distribution of the truncated Hotelling-Lawley test statistic}. The first subsection gives a summary of the proof. We then provide the main tools for the proof in Propositions~\ref{proposition: Distrib asymptotique de la stat alternative}, \ref{proposition: Inegalite de concentration sur sigma epsilon emprique et theorique} and  \ref{lemma: Borne exponentielle sur les inverses des projecteurs}.

In this section, we consider the linear model \eqref{eq:linear model on embeddings} in the feature space $\Hk$.

\subsection{Main proof of Theorem~\ref{theorem: Asymptotic distribution of the truncated Hotelling-Lawley test statistic}}
\label{App:Proof-Theo-chi2:main}

First, we define an alternative truncated Hotelling-Lawley trace based on the true covariance structure shared between the errors $\ecovt$ instead of $\rcovt$, such that $\tildemanova = \trhs(\ecovt^{-1}\Hmanova)$, where $\ecovt^{-1}=\sum_{t = 1}^{\tmax}\ecoval_t^{-1} \big(\ecovec_t \otimes \ecovec_t\big)$ is well defined according to Assumption~\ref{Asimple}.  We have : 
\begin{eqnarray*}
|\statmanova - \tildemanova |  
& =& \left|\trhs\left(\frac{1}{\nobs}\big(\rcovt^{-1} - \ecovt^{-1}\big)\Hmanova\right)\right| \\
&=& \left|\left \langle \big(\rcovt^{-1} - \ecovt^{-1}\big) , \frac{1}{\nobs}\Hmanova \right \rangle_{\HSk}\right| \\
& \leq& \left \| \rcovt^{-1} - \ecovt^{-1} \right \| _{\HSk} \left \|\frac{1}{\nobs} \Hmanova \right \| _{\HSk}.
\end{eqnarray*}


Let $\Pxp = \identity_\nobs - \Px = (\Pxpij{i}{j})_{1\leq i,j \leq n} $ the orthogonal projection on $Im(\dX)^\perp$. We  define $\ecovbar = \ecovt + \frac{1}{\nobs} \sum_{i = 1}^{\nobs}\Pxpij{i}{i} \ecovtci{i}$.
By applying a bounded differences Theorem, under Assumptions \ref{Ahomoscedasticity}, \ref{Aboundedkernel} and \ref{Aboundeddesign}, Proposition \ref{proposition: Inegalite de concentration sur sigma epsilon emprique et theorique}  provides an exponential bound on to $\left \| \rcov - \ecovbar \right \| _{\HSk}$. Following the approach of \cite{zwald_convergence_2005}, we can then we make use of results from operator perturbation theory  to derive the exponential bound on $\left \| \rcovt^{-1} - \ecovt^{-1} \right \| _{\HSk}$ given in Proposition~\ref{lemma: Borne exponentielle sur les inverses des projecteurs}.

We then show in Lemma \ref{lemma: Borne sur H} that $\nobs ^{-1}\Hmanova$ is bounded under Assumptions \ref{Ahomoscedasticity} and \ref{Aboundedkernel}, and we can conclude that $\statmanova \overset{P}{\longrightarrow}\tildemanova$ as $n$ tends to infinity. It only remains to derive the asymptotic distribution of $\tildemanova$.

Proposition \ref{proposition: Distrib asymptotique de la stat alternative} shows that $\tildemanova$ corresponds to a standard Hotelling-Lawley Statistic in $\R^\tmax$. We then invoke a result from the literature (see for instance Theorem 12.8 from \cite{olive_robust_2017}) to find that if $H_0$ is true, then $
\nobs \tildemanova \underset{}{\overset{\mathcal{D}}{\longrightarrow}} \chi^2_{\nLmanova \tmax}
$ as $\nobs$ tends to infinity.

\subsection{Statistics for the test operator}

\begin{proposition}
\label{proposition: Distrib asymptotique de la stat alternative}
Under Assumptions \ref{Ahomoscedasticity},  \ref{Aboundedkernel}, \ref{Aboundeddesign}, \ref{ADiagDesign} and  \ref{Aconvergentdesign}, if $H_0$ is true then we have:
\begin{equation*}
\begin{split}
\nobs \tildemanova =  \trhs(\ecovt^{-1}\Hmanova) \underset{\nobs\to \infty}{\overset{\mathcal{D}}{\longrightarrow}} \chi^2_{\nLmanova \tmax}.
\end{split}
\end{equation*}
\end{proposition}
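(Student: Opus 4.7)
The plan is to reduce $\nobs\tildemanova$ to a classical Hotelling-Lawley trace statistic for a finite-dimensional multivariate linear model in $\R^\tmax$, obtained by projecting the featurized observations onto the first $\tmax$ eigendirections of $\ecovt$. Once this reduction is achieved, the chi-square limit follows from the standard asymptotic theory of the multivariate Hotelling-Lawley statistic, as recorded for instance in Theorem~12.8 of \cite{olive_robust_2017}.

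First, I would introduce the coordinate projections $\tilde Y_{i,t} = \langle \ecovec_t, \ey{i}\rangle_{\Hk}$, $\tilde \Theta_{j,t} = \langle \ecovec_t, \param_j\rangle_{\Hk}$, and $\tilde E_{i,t} = \langle \ecovec_t, \error_i\rangle_{\Hk}$, for $i \in \{1,\dots,\nobs\}$, $j \in \{1,\dots,\nx\}$ and $t \in \{1,\dots,\tmax\}$, and assemble them into matrices $\tilde Y, \tilde E \in \R^{\nobs\times \tmax}$ and $\tilde\Theta \in \R^{\nx \times \tmax}$. Projecting model~\eqref{eq:linear model on embeddings} coordinate-wise produces the classical finite-dimensional multivariate linear model $\tilde Y = \dX\tilde\Theta + \tilde E$. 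Under Assumption~\ref{Ahomoscedasticity}, the rows $\tilde E_i \in \R^\tmax$ are independent with mean zero and share the common diagonal covariance $\Sigma^{\sharp} := \mathrm{diag}(\ecoval_1, \dots, \ecoval_\tmax)$: the residual component $\ecovtci{i}$ contributes zero on $\Span(\ecovec_1,\dots,\ecovec_\tmax)$, so the coordinates of $\tilde E_i$ in the basis $\{\ecovec_t\}$ are uncorrelated with variances $\ecoval_1,\dots,\ecoval_\tmax$. Next, combining the spectral decomposition $\ecovt^{-1} = \sum_{t=1}^\tmax \ecoval_t^{-1} \ecovec_t \otimes \ecovec_t$ with the quadratic-form representation of $\Hmanova$ given by \eqref{eq:formequadra}, a direct computation yields
\begin{equation*}
\tildemanova \;=\; \sum_{t=1}^\tmax \ecoval_t^{-1} \, (\Lmanova \tilde{\hparams}^t)^\prime \, (\Lmanova \XXinv \Lmanova^\prime)^{-1} \, (\Lmanova \tilde{\hparams}^t) \;=\; \trhs\!\big((\Sigma^{\sharp})^{-1} \tilde H\big),
\end{equation*}
where $\tilde{\hparams}^t = \XXX \tilde Y^t \in \R^\nx$ is the OLS estimator in the $t$-th coordinate equation (with $\tilde Y^t$ the $t$-th column of $\tilde Y$) and $\tilde H \in \R^{\tmax\times\tmax}$ is the hypothesis sum-of-squares-and-cross-products matrix of the reduced model, whose $(t,t')$-entry equals $(\Lmanova \tilde{\hparams}^t)^\prime (\Lmanova \XXinv \Lmanova^\prime)^{-1} (\Lmanova \tilde{\hparams}^{t'})$. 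This is exactly the classical Hotelling-Lawley trace statistic for testing $H_0: \Lmanova \tilde\Theta = 0$ with \emph{known} error covariance $\Sigma^{\sharp}$.

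Under $H_0: \Lmanova \params = 0$, linearity of the inner product forces $\Lmanova \tilde\Theta = 0$, so the reduced problem is a genuine multivariate contrast test. Theorem~12.8 of \cite{olive_robust_2017} then delivers $\nobs \tildemanova \overset{\mathcal{D}}{\to} \chi^2_{\nLmanova\tmax}$. Its hypotheses are discharged as follows: Assumptions~\ref{Aboundedkernel} and~\ref{Aboundeddesign} together bound $\|\error_i\|_{\Hk}$ uniformly in $i$ and $\nobs$, so $|\tilde E_{i,t}| \leq \|\error_i\|_{\Hk}$ is uniformly bounded and every moment is finite; Assumption~\ref{Aconvergentdesign} provides the convergent-design condition $\nobs^{-1}\dX^\prime \dX \to W^{-1}$; and Assumption~\ref{ADiagDesign} supplies the Lindeberg-type condition $\max_i (\Px)_{i,i} \to 0$ needed for the underlying fixed-design central limit theorem on $\nobs^{-1/2}\dX^\prime \tilde E^t$. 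The main obstacle will be the identification step above: unfolding the Hilbert-Schmidt trace of the infinite-dimensional operator $\ecovt^{-1}\Hmanova$ into the finite-dimensional matrix trace $\trhs((\Sigma^{\sharp})^{-1}\tilde H)$, using that $\ecovt^{-1}$ is supported on the fixed $\tmax$-dimensional subspace $\Span(\ecovec_1,\dots,\ecovec_\tmax)$ and that $\Hmanova$ is a finite-rank operator expressible in that same basis. Once this bookkeeping is carried out, no genuinely new probabilistic argument is needed beyond the classical multivariate result.
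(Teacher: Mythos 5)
Your proposal is correct and follows essentially the same route as the paper: project the functional model onto $\Span(\ecovec_1,\dots,\ecovec_\tmax)$ to obtain a multivariate linear model in $\R^\tmax$ with diagonal error covariance $\operatorname{diag}(\ecoval_1,\dots,\ecoval_\tmax)$, identify $\nobs\tildemanova$ with the classical Hotelling--Lawley trace for that reduced model (in its known-covariance form), and invoke Theorem~12.8 of \cite{olive_robust_2017} after checking its hypotheses via Assumptions~\ref{Aboundedkernel}, \ref{Aboundeddesign}, \ref{ADiagDesign} and \ref{Aconvergentdesign}. The paper carries out the same trace-unfolding bookkeeping you flag as the main obstacle, and your assumption checks match its argument.
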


\begin{proof}
We recall that the $i^{th}$ equation of the linear model \eqref{eq:linear model on embeddings}  can be written as
\begin{equation*}
\begin{split}
\ey{i} = \dx_i'\params  + \error_i,
\end{split}
\end{equation*}
where $\E\error_i = 0$. According to Assumption \ref{Ahomoscedasticity}, we have $\cov(\error_i) = \ecov_i = \ecovt + \ecovtci{i}$, and $\ecovt = \sum_{t = 1}^{\tmax}\ecoval_t \ecovec_t \otimes \ecovec_t$ where $\ecovec_1,\dots,\ecovec_\tmax$ is an orthonormal set of eigenfunctions of $\ecovt$ associated with the eigenvalues $\ecoval_1,\dots,\ecoval_\tmax$. For $t\in \{ 1, \dots, \tmax \}$ and $h \in \Hk$, we define $h^t= \left \langle h , \ecovec_t \right \rangle_{\Hk}$.
For $i \in \{ 1, \dots, \nobs \}$, let $Z_i = (\ey{i}^1,\dots,\ey{i}^\tmax)$ in $\R^\tmax$, $\linparam_i = (\param_i^1,\dots,\param_i^\tmax)$ in $\R^\tmax$ and $\linparams = (\linparam_0 ,\dots,\linparam_{\nx-1})^\prime \in \mtrx{\nx,\tmax}$. We also define $\linerror_i = (\error_i^1,\dots,\error_i^\tmax)$ in $\R^\tmax$. The projection of the $i^{th}$ equation of the linear model \eqref{eq:linear model on embeddings}  on $\operatorname{Span}(\ecovec_1,\dots,\ecovec_\tmax) \subset \Hk$  is:
\begin{equation*}
\begin{split}
Z_i =  \dx_i'\linparams  + \linerror_i. 
\end{split}
\end{equation*}
We recognize a multivariate linear model in $\R^\tmax$ that has the matrix form:
\begin{equation}
\label{eq:ModelLinearT}
\begin{split}
\mathbf{Z} = X \linparams + \linerrors,
\end{split}
\end{equation}
where $\mathbf{Z} = (Z_1,\dots,Z_n)^\prime \in \mtrx{\nobs,\tmax}$ and $\linerrors = (\linerror_1 ,\dots,\linerror_\nobs)^\prime \in \mtrx{\nobs,\tmax}$. Remark that the errors $\linerror_1 ,\dots,\linerror_\nobs$ are independant with $\E \linerror_1 = 0$ and same covariance matrix $\linecov = (\sigma_{t,t^\prime})_{t,t^\prime \in \{ 1, \dots, \tmax \}} \in \mtrx{\tmax}$. For $i\in\{ 1, \dots, \nobs \}$ and $t,t^\prime \in \{ 1, \dots, \tmax \}$, we have:
 \begin{equation*}
 \begin{split}
 \sigma_{t,t^\prime} 
 =& \E (\error_i^t \error_i^{t^\prime} ) 
 \\ =& \E \left \langle \error_i , \ecovec_t \right \rangle_{\Hk}\left \langle \error_i , \ecovec_{t^\prime} \right \rangle_{\Hk}
 \\ =& \left \langle \ecovec_t , \E (\error_i\otimes \error_i)  \ecovec_{t^\prime}\right \rangle_{\Hk}
 \\ =& \left \langle \ecovec_t , \ecov_i \ecovec_{t^\prime} \right \rangle_{\Hk}.
\end{split}
\end{equation*}
By construction, as $t^\prime\in\{ 1, \dots, \tmax \}$, $\ecovec_{t^\prime}$ is an eigenvector of $\ecov_i$ associated with the eigenvalue $\ecoval_{t^\prime}$, thus according to Assumption~\ref{Ahomoscedasticity},
\begin{equation*}
\begin{split}
  \sigma_{t,t^\prime}
  =& \ecoval_{t^\prime}\left \langle \ecovec_t , \ecovec_{t^\prime} \right \rangle_{\Hk}
\\  
 =& \left\{
    \begin{array}{ll}
        \ecoval_t & \mbox{if } t=t^\prime \\
       0 & \mbox{otherwise.}
    \end{array}
\right.
\end{split}
\end{equation*}
Thus, we have $\linecov =\operatorname{diag}(\ecoval_1,\dots,\ecoval_\tmax)$ where $\ecoval_j >0$, $j=1,\dots,\tmax$.

In the multivariate linear model \eqref{eq:ModelLinearT}, the least squares estimator $\hlinparams = (\hlinparam_0,\dots,\hlinparam_{\nx-1})^\prime \in \mtrx{\nx,\tmax}$ of $\linparams$ is defined by 
\begin{equation*}
  \begin{split}
  \hlinparam = \XXX \mathbf{Z}.
  \end{split}
  \end{equation*}  
By developing $\mathbf{Z}$ in this equation, it can be easily checked that that for $j \in \{ 1, \dots, \nx \}$, $\hlinparam_j = (\hparam_j^1,\dots,\hparam_j^\tmax)^\prime \in \R^\tmax$.  
Moreover, the covariance matrix $\linecov$, which is by definition diagonal can be estimated by  $\linrcov =  \operatorname{diag}(\hat \ecoval_1,\dots,\hat \ecoval_\tmax)$ where  $\hat \ecoval_j$ is the empirical variance of the $Z_{i,j}$'s.

\medskip 

In this multivariate linear model~\eqref{eq:ModelLinearT}, the Hotelling-Lawley trace associated with the hypotheses $\tilde H_0: \Lmanova \linparams = 0_{\R ^\nLmanova}$ versus $\tilde H_1: \Lmanova \linparams \neq 0_{\R^\nLmanova}$ is defined by
\begin{equation*}
\begin{split}
\mathcal{G}_n = \trmtrx\left(\frac{1}{\nobs}\linrcov^{-1} \linHmanova\right),
\end{split}
\end{equation*}
where  $\linHmanova = (\Lmanova\hlinparams)^\prime ( \Lmanova \XXinv \Lmanova^\prime )^{-1} (\Lmanova \hlinparams)\in\mtrx{\tmax}$. Note that  in $\mathcal{G}_n$ we use the  trace operator  $\trmtrx$ defined on $\mtrx{\tmax}$, to distinguish it from the  trace $\trhs$ defined on the HS operators on $\Hk$. 

We now check that the assumptions of Theorem 12.8 from \cite{olive_robust_2017} are satisfied (in particular Assumption D1 in this theorem). First, Assumption \ref{Ahomoscedasticity} provides the homoscedasticity assumption in $\R^\tmax$. Next, Assumptions \ref{Aboundedkernel} and \ref{Aboundeddesign} are sufficient to guarantee that the errors are bounded (see Lemma~\ref{lemma: Borne sur la norme des residus}) and thus the $\varepsilon_i$ have obviously a finite fourth moment. Finally, Assumption \ref{ADiagDesign} provides the control of the coefficient on the diagonal of $\Px$, and the convergence of $\frac 1 n  X'X $ is provided by Assumption~\ref{Aconvergentdesign}. All this gives that if $H_0$ is true, then
\begin{equation*}
\begin{split}
\nobs \mathcal{G}_n \underset{\nobs\to \infty}{\overset{\mathcal{D}}{\longrightarrow}} \chi^2_{\nLmanova \tmax}. 
\end{split}
\end{equation*}
In particular, in the proof of the above result, it is also shown that
\begin{equation*}
\begin{split}
\nobs \statmanovakt_n = \trmtrx(\linecov^{-1} \linHmanova) \underset{\nobs\to \infty}{\overset{\mathcal{D}}{\longrightarrow}} \chi^2_{\nLmanova \tmax}.
\end{split}
\end{equation*}
Thus, it remains to check that $\tildemanova = \statmanovakt_n$ to complete the proof.
 
\medskip

We denote $C =   (\Lmanova \XXinv \Lmanova^\prime) ^{-1}   = (c_{i,j})_{i,j \in \{ 1, \dots, \nx \}} \in \mtrx{\nx}$, and then $\linHmanova = (L \hlinparams) ^\prime C (L \hlinparams )
= \sum_{i,j = 1}^{\nx}c_{i,j} (L\hlinparam)_i (L \hlinparam)_j^\prime$. For $i,j \in \{ 0, \dots, \nx-1 \}$, we have:
\begin{equation*}
\begin{split}
\linecov^{-1} (L\hlinparam)_i (L \hlinparam)_j^\prime
 =& \left (\begin{matrix} 
\ecoval_1 ^{-1} (L\hlinparam)_i^1 (L \hlinparam)_j^1& \dots & \ecoval_1^{-1} (L\hlinparam)_i^1 (L \hlinparam)_j^\tmax 
\\ \vdots  & \ddots & \vdots 
\\ \ecoval_\tmax^{-1} (L\hlinparam)_i^\tmax (L \hlinparam)_j^1& \dots & \ecoval_\tmax^{-1} (L\hlinparam)_i^\tmax (L \hlinparam)_j^\tmax 
 \end{matrix} \right)
\end{split}.
\end{equation*}
Then $\nobs\statmanovakt_n = \trmtrx\left(\linecov ^{-1}\linHmanova\right) = \sum_{i,j = 0}^{\nx-1} c_{i,j} \sum_{t = 1}^{\tmax} \ecoval_t^{-1} \hparam_i^t \hparam_j^t$. On the other hand, we have that $ \nobs \tildemanova 
=  \trhs\left(\ecovt^{-1} \Hmanova\right) =  \left \langle  \Hmanova, \ecovt^{-1} \right \rangle_{\HSk}  $ because $\ecovt^{-1} = \sum_{t = 1}^{\tmax} \rcoval_t ^{-1}  \rcovec_t \otimes \rcovec_t  $ is self-adjoint and finite rank.  Thus,
\begin{equation*}
\begin{split}
\nobs \tildemanova 
= & \sum_{i,j = 0}^{\nx-1}\sum_{t = 1}^{\tmax} c_{i,j} \ecoval_t^{-1} \left \langle \ecovec_t \otimes \ecovec_t , (\Lmanova\hparams)_i \otimes (\Lmanova\hparams)_{j} \right \rangle_{\HSk}
\\=& \sum_{i,j = 0}^{\nx-1}\sum_{t = 1}^{\tmax} c_{i,j}\ecoval_t^{-1} \left \langle \ecovec_t,(\Lmanova\hparams)_i \right \rangle_{\Hk}\left \langle \ecovec_t,(\Lmanova\hparams)_{j} \right \rangle_{\Hk}
\\=& \sum_{i,j = 0}^{\nx-1}\sum_{t = 1}^{\tmax} c_{i,j} \ecoval_t^{-1} (\Lmanova\hparams)_i^t (\Lmanova\hparams)_j^t 
\\ =& \nobs \statmanovakt_n.
\end{split}
\end{equation*}
\end{proof}

\begin{lemma}
\label{lemma: Borne sur H}
Under Assumptions \ref{Aboundedkernel} and \ref{Aboundeddesign}, if $H_0$ is true, then we have:
\begin{equation*}
\begin{split}
\left \| \frac{1}{\nobs}  \Hmanova \right \| _{\HSk} \leq \boundepsilon^2,
\end{split}
\end{equation*}
where where $\boundepsilon = \sqrt{\boundk} + \nx \boundX \Max{j \in \{ 1, \dots, \nx \}}\left \| \param_j \right \| _{\Hk}$.
\end{lemma}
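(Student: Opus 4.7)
The plan is to use $H_0$ to express $\Lmanova\hparams$ as a pure noise quantity, to recognize the quadratic form inside $\Hmanova$ as contracting against an orthogonal projection of $\R^\nobs$, and then to reduce the Hilbert--Schmidt norm of $\Hmanova$ to a matrix trace of the Gram matrix of the errors, which is uniformly controlled by Assumptions~\ref{Aboundedkernel} and~\ref{Aboundeddesign}.

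Starting from $\hparams = \XXX\eY = \XXX\dX\params + \XXX\errors$, left-multiplying by $\Lmanova$ and using $\Lmanova\params = 0$ yields $\Lmanova\hparams = A\errors$ with $A := \Lmanova\XXX\in\mtrx{\nLmanova,\nobs}$ (using that $\dX$ is assumed full rank, so $\XXX\dX = I_{\nx}$). The key algebraic observation is
\begin{equation*}
AA^\prime \;=\; \Lmanova\XXinv(\dX^\prime\dX)\XXinv\Lmanova^\prime \;=\; \Lmanova\XXinv\Lmanova^\prime \;=\; C^{-1},
\end{equation*}
where $C := (\Lmanova\XXinv\Lmanova^\prime)^{-1}$ is precisely the inner matrix in $\Hmanova$. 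Consequently the $\nobs\times\nobs$ matrix $P := A^\prime C A = A^\prime(AA^\prime)^{-1}A$ is the orthogonal projection onto $\operatorname{Im}(A^\prime)\subset\R^\nobs$, and formula~\eqref{eq:formequadra} rewrites
\begin{equation*}
\Hmanova \;=\; \sum_{i,j=1}^{\nobs} P_{i,j}\,\error_i\otimes\error_j,
\end{equation*}
which is a positive, finite-rank operator on $\Hk$.

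Because $\Hmanova$ is positive semi-definite, its Hilbert--Schmidt norm is bounded by its trace, $\|\Hmanova\|_{\HSk}\leq\trhs(\Hmanova)$. A direct expansion gives $\trhs(\Hmanova) = \sum_{i,j} P_{i,j}\langle\error_i,\error_j\rangle_{\Hk} = \trmtrx(PG)$, where $G := (\langle\error_i,\error_j\rangle_{\Hk})_{i,j}$ is the positive semi-definite Gram matrix of the errors. Since $I_{\nobs} - P$ is also an orthogonal projection, $\trmtrx((I_{\nobs}-P)G)\geq 0$, hence $\trmtrx(PG)\leq\trmtrx(G) = \sum_{i=1}^{\nobs}\|\error_i\|_{\Hk}^2$. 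Finally, the triangle inequality applied to $\error_i = \ey{i} - \dx_i^\prime\params$ together with Assumptions~\ref{Aboundedkernel} and~\ref{Aboundeddesign} delivers the pointwise bound $\|\error_i\|_{\Hk}\leq\sqrt{\boundk} + \nx\boundX\max_j\|\param_j\|_{\Hk} = \boundepsilon$, so $\trmtrx(G)\leq\nobs\boundepsilon^2$, and dividing by $\nobs$ concludes.

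The only substantive step is identifying $P = A^\prime(AA^\prime)^{-1}A$ as a genuine orthogonal projection; the rest is a succession of standard inequalities (trace majorizes Hilbert--Schmidt norm for positive operators, projection monotonicity of traces against PSD matrices, triangle inequality in $\Hk$). Notably the bound is uniform in the truncation level $\tmax$ and in the covariance structure of the errors, which is what allows this lemma to be paired cleanly with the concentration bound on $\rcovt^{-1} - \ecovt^{-1}$ in the main proof of Theorem~\ref{theorem: Asymptotic distribution of the truncated Hotelling-Lawley test statistic}.
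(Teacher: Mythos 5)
Your proof is correct and follows essentially the same route as the paper: under $H_0$ you identify $\Hmanova=\errors^{\star}\mathbf{D}\errors$ with $\mathbf{D}=A^{\prime}(AA^{\prime})^{-1}A$ an orthogonal projector, and then discard the projector to land on $\sum_{i}\|\error_i\|_{\Hk}^2\leq \nobs\boundepsilon^2$. The only (cosmetic) difference is the final inequality chain — you use positivity of $\Hmanova$ plus $\|\cdot\|_{\HSk}\leq\trhs(\cdot)$ and trace monotonicity $\trmtrx(PG)\leq\trmtrx(G)$, whereas the paper writes $\Hmanova=(\mathbf{D}\errors)^{\star}(\mathbf{D}\errors)$ and uses the contractivity of projections on $\Hk^{\nobs}$ (Lemma~\ref{lem:proj_est_une_proj}); both chains bound exactly the same quantity $\trmtrx(\mathbf{D}G)$.
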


\begin{proof}
We inject the expression of $\hparams$ in the expression of $\Hmanova$:
\begin{equation*}
\begin{split}
\Hmanova = (\Lmanova \XXX \eY)^\star (\Lmanova \XXinv \Lmanova^\prime )^{-1} (\Lmanova\XXX \eY)
\end{split}
\end{equation*}
According to the  linear model~\eqref{eq:linear model on embeddings}, we have 
\begin{equation*}
\begin{split}
 \Lmanova \XXX \eY  =  \Lmanova \params + \Lmanova\XXX \errors .
\end{split}
\end{equation*}
The term $\Lmanova \params$ is null under $H_0$. Then we can write:
\begin{equation*}
\begin{split}
\Hmanova = \errors^\star
\mathbf{D} \errors
\end{split}
\end{equation*}
where $\mathbf{D} = \XXXprime \Lmanova^\prime (\Lmanova \XXinv \Lmanova^\prime )^{-1} \Lmanova \XXX =(d_{i,j})_{i,j\in \{1,\dots,\nobs \} } \in \mtrx{\nobs}$ is an orthogonal projector (of $\R^n$), as $\mathbf{D}^{\prime}=\mathbf{D}$ and $\mathbf{D}^2=\mathbf{D}$. 
Thus we can write $\Hmanova = \errors^\star
\mathbf{D}' \mathbf{D}  \errors$. Moreover, it can be easily checked that $ (D E) ^\star =  E  ^\star  D'$. Thus we have: 
\begin{eqnarray*}
\left \| \Hmanova \right \| _{\HSk}
&  =& \left \| (\mathbf{D} \errors)^\star (\mathbf{D} \errors) \right \| _{\HSk} \\
&  \leq  & \sum_{i = 1}^{\nobs} \left \| (\mathbf{D} \errors)_i \otimes (\mathbf{D} \errors)_i \right \| _{\HSk} \\ 
& \leq & \sum_{i = 1}^{\nobs} \left \| (\mathbf{D} \errors)_i \right \| _{\Hk}^2=  \left \| \mathbf{D} \errors \right \| _{\Hk^\nobs}^2.
\end{eqnarray*}
By Lemma \ref{lem:proj_est_une_proj}, we have $\left \| \mathbf{D} \errors \right \| _{\Hk^\nobs}\leq\left \|  \errors \right \| _{\Hk^\nobs}$, thus, according to Lemma \ref{lemma: Borne sur la norme des residus}, we have : 
\begin{equation*}
\begin{split}
\left \| \Hmanova \right \| _{\HSk}
\leq& \left \| \errors \right \| _{\Hk^\nobs}^2 
\\ \leq& \sum_{i = 1}^{\nobs}\left \| \error_i \right \| _{\Hk}^2
\\ \leq& \nobs \boundepsilon^2 .
\end{split}
\end{equation*}
Thus we have $\left \| \nobs^{-1} \Hmanova \right \| _{\HSk}\leq \boundepsilon^2$. 
\end{proof}

\subsection{Probability bound on the residual covariance}

We define $\ecovbar = \ecovt + \frac{1}{\nobs} \sum_{i = 1}^{\nobs}\Pxpij{i}{i} \ecovtci{i}$. Next Proposition is adapted from  \cite{shawe-taylor_eigenspectrum_2005} and \cite{zwald_convergence_2005}. An obvious  consequence of Proposition \ref{proposition: Inegalite de concentration sur sigma epsilon emprique et theorique} is that $ \rcov  - \ecovbar$ tends to $0$ in probability as $n$ tends to infinity.

\begin{proposition}
\label{proposition: Inegalite de concentration sur sigma epsilon emprique et theorique}
If Assumptions  \ref{Ahomoscedasticity},  \ref{Aboundedkernel} and \ref{Aboundeddesign} are satisfied, then we have with probability $1-e^{-\xi}$: 
\begin{equation}
\label{eq:inegalite de concentration sur sigma epsilon empirique et theorique}
\begin{split}
 \left \| \rcov - \ecovbar \right \| _{\HSk} 
 \leq \frac{\nx }{\nobs} \left \| \ecovt \right \| _{\HSk}
 +   \frac{\boundepsilon^2}{\sqrt \nobs} \left(1 +  4   \sqrt{\frac{3 \xi}{2} } \right) =: \boundrcov,
\end{split}
\end{equation}
where $\boundepsilon = \boundk^{\frac{1}{2}}+ \boundX \Max{j \in \{ 1, \dots, \nx \}}\left \| \param_j \right \| _{\Hk}$.  
\end{proposition}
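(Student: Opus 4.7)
The plan is to split via the triangle inequality,
$\|\rcov - \ecovbar\|_{\HSk} \leq \|\E\rcov - \ecovbar\|_{\HSk} + \|\rcov - \E\rcov\|_{\HSk}$,
handling the deterministic bias and the stochastic fluctuation separately. For the bias, I would expand $\residual_i = \sum_j \Pxpij{i}{j}\error_j$ and use independence of the $\error_j$'s together with the idempotence $(\Pxp)^2 = \Pxp$ (which gives $\sum_i (\Pxpij{i}{j})^2 = \Pxpij{j}{j}$) to obtain $\E\rcov = \frac{1}{\nobs}\sum_j \Pxpij{j}{j}\ecov_j$. Under Assumption~\ref{Ahomoscedasticity} this decomposes as $\frac{\nobs-\nx}{\nobs}\ecovt + \frac{1}{\nobs}\sum_j \Pxpij{j}{j}\ecovtci{j}$ since $\trhs(\Pxp) = \nobs - \nx$, and subtracting $\ecovbar$ yields the clean identity $\E\rcov - \ecovbar = -\frac{\nx}{\nobs}\ecovt$, producing the first term $\frac{\nx}{\nobs}\|\ecovt\|_{\HSk}$ of the bound.

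For the fluctuation $\|\rcov - \E\rcov\|_{\HSk}$, I would apply a McDiarmid bounded-differences inequality to $F(\error_1,\ldots,\error_\nobs) = \|\rcov - \E\rcov\|_{\HSk}$, following \cite{shawe-taylor_eigenspectrum_2005,zwald_convergence_2005}. The $\error_i$ are independent and bounded, $\|\error_i\|_\Hk \leq \boundepsilon$, by Lemma~\ref{lemma: Borne sur la norme des residus} combined with Assumptions~\ref{Aboundedkernel}--\ref{Aboundeddesign}. Replacing $\error_k$ by an independent copy $\error_k'$, setting $\eta = \error_k - \error_k'$, and using the crucial simplification $\sum_i \Pxpij{i}{k}\residual_i = \residual_k$ (again from $(\Pxp)^2 = \Pxp$), a direct expansion of the tensor products yields the compact representation
\[
\nobs(\rcov - \rcov') \;=\; \eta \otimes \residual_k + \residual_k \otimes \eta - \Pxpij{k}{k}\, \eta \otimes \eta.
\]
Bounding the HS norm of this expression and aggregating over $k$ via $\sum_k \Pxpij{k}{k} = \nobs - \nx$ should produce $\sum_k c_k^2 = O(\boundepsilon^4/\nobs)$; McDiarmid then gives a deviation of the form $\sqrt{\xi \sum_k c_k^2 / 2}$ with probability $1 - e^{-\xi}$, matching the $4\sqrt{3\xi/2}\,\boundepsilon^2/\sqrt{\nobs}$ contribution.

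The final step is to bound $\E F \leq \boundepsilon^2/\sqrt{\nobs}$, which accounts for the constant $1$ in the parenthesized factor. I would expand $\E F^2$ as a weighted sum over index quadruples $(j,l,j',l')$ and use independence with $\E\error_j = 0$ to collapse it to matching pairs $\{j,l\} = \{j',l'\}$; the identity $\|\Pxp\|_{\HSk}^2 = \trhs(\Pxp) = \nobs - \nx$ together with $\|\ecov_j\|_{\HSk} \leq \boundepsilon^2$ then delivers $\E F^2 = O(\boundepsilon^4/\nobs)$, hence $\E F$ of the required order. The main obstacle will be the quantitative bookkeeping of these McDiarmid constants: a uniform worst-case bound $\|\residual_k\|_\Hk \leq \sqrt{\nobs}\,\boundepsilon$ leaves $\sum_k c_k^2$ of order $\boundepsilon^4$, too loose by a factor of $\nobs$; the sharper rate requires exploiting the $\sqrt{\Pxpij{k}{k}}$ structure of $\|\residual_k\|_\Hk$ and aggregating through $\sum_k \Pxpij{k}{k} = \nobs - \nx$, rather than bounding each $c_k$ by its worst case. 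Combining the three pieces via the triangle inequality then yields the stated bound $\boundrcov$.
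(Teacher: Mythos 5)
Your proposal follows essentially the same route as the paper's proof: the same triangle-inequality split, the same bias computation giving $\E \rcov - \ecovbar = -\tfrac{\nx}{\nobs}\ecovt$, McDiarmid's inequality with configuration-summed squared differences of order $\boundepsilon^4/\nobs$, and the Jensen/second-moment argument for $\E\left\| \rcov - \E\rcov\right\|_{\HSk} \leq \boundepsilon^2/\sqrt{\nobs}$; your identity $\nobs(\rcov-\rcov') = \eta\otimes\residual_k + \residual_k\otimes\eta - \Pxpij{k}{k}\,\eta\otimes\eta$ is a correct and slightly cleaner form of the paper's bounded-difference lemma. One detail to fix in the bookkeeping you flag as the main obstacle: the pointwise bound $\left\|\residual_k\right\|_{\Hk} \leq \sqrt{\Pxpij{k}{k}}\,\boundepsilon$ is not true in general (the $\error_j$ are not orthogonal in $\Hk$, so you cannot Pythagorize $\sum_j \Pxpij{k}{j}\error_j$ coordinatewise), and aggregating through $\sum_k \Pxpij{k}{k} = \nobs-\nx$ is not the right mechanism. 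The correct aggregation, which is what the paper uses, is $\sum_k \left\|\residual_k\right\|_{\Hk}^2 = \left\|\Pxp\errors\right\|_{\Hk^\nobs}^2 \leq \left\|\errors\right\|_{\Hk^\nobs}^2 \leq \nobs\,\boundepsilon^2$, exploiting that $\Pxp$ is an orthogonal projection on $\Hk^\nobs$ (Lemma~\ref{lem:proj_est_une_proj}); this yields $\sum_k c_k^2 = O(\boundepsilon^4/\nobs)$ exactly as needed.
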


\begin{proof}
We observe that:
\begin{equation*}
\begin{split}
\left \| \rcov - \ecovbar \right \| _{\HSk} =& \left \| \rcov - \E \rcov + \E \rcov - \ecovbar  \right \| _{\HSk}
\\ \leq&   \left \| \rcov - \E \rcov \right \| _{\HSk} + \left \|\E \rcov - \ecovbar  \right \| _{\HSk}.
\end{split}
\end{equation*}

We know from Lemma \ref{lemma: biais de sigma epsilon} that:
\begin{equation*}
\begin{split}
\left \|\E \rcov - \ecovbar  \right \| _{\HSk} 
=& \left \| \frac{\nobs - \nx}{\nobs} \ecovt - \ecovt  \right \| _{\HSk}
\\=&  \frac{\nx }{\nobs} \left \|  \ecovt  \right \| _{\HSk}.
\end{split}
\end{equation*}

To apply the McDiarmid Inequality (see Theorem \ref{theorem: McDiarmid}), we introduce to the function:
\begin{eqnarray*}
&\Hk^\nobs &\longrightarrow \R
\\f:&\errors&\longmapsto \left \| \rcov - \E(\rcov)  \right \| _{\HSk}.
\end{eqnarray*}
Let $i_0\in\{ 1, \dots, \nobs \}$, $\errors = (\error_1,\dots, \error_\nobs)$ and $\tilde \errors_{i_0} = (\tilde \error_1,\dots, \tilde \error_\nobs)$ in $\Hk^\nobs$ such that $\error_{i_0} \neq \tilde \error_{i_0}$ and $\forall i \neq i_0, \error_i = \tilde \error_i$. Lemma \ref{lemma: Borne de la fluctuation de la norme de la diff entre sigma epsilon et son esperance} gives us that:
\begin{equation*}
| f(\errors) - f( \tilde \errors_{i_0})| \leq \frac{2 \boundepsilon}{\nobs} \left(\boundepsilon +  \left \| 
(\Px \errors)_{i_0}\right \| _{\Hk} + \left \| 
(\Px \widetilde \errors)_{i_0}\right \| _{\Hk}\right)  = :c_{i_0} 
\end{equation*}
and thus
\begin{eqnarray}
\sum_{i_0 =1} ^n c_{i_0}^2 & \leq &  \frac{16 \boundepsilon^4}{n}   +  \frac{16 \boundepsilon^2}{n^2}   \sum_{i_0}^n  \left \| 
(\Px \errors)_{i_0} \right \| _{\Hk}^2 +  \frac{16 \boundepsilon^2}{n^2}    \left \| 
(\Px \widetilde \errors)_{i_0} \right \| _{\Hk}^2   \\
& \leq  &  \frac{16 \boundepsilon^4}{n}   +  \frac{16 \boundepsilon^2}{n^2}     \left \| 
\Px \errors  \right \| _{\Hk_n}^2 +  \frac{16 \boundepsilon^2}{n^2}    \left \| \Px \widetilde \errors  \right \| _{\Hk_n}^2  
\end{eqnarray}
According to Lemmas~\ref{lem:proj_est_une_proj} and \ref{lemma: Borne sur la norme des residus}, $\| \Px \errors  \| _{\Hk_n} \leq \| \errors \| _{\Hk_n} \leq \sqrt{n} \boundepsilon $ and finally  $ \sum_{i_0 =1} ^n c_{i_0}^2 \leq \frac{48 \boundepsilon^4}{n}   $.

According to  McDiarmid Inequality  to $f$, we have with probability $1-e^{-\xi}$: 
\begin{equation*}
\left \| \rcov - \E \rcov \right \| _{\HSk} \leq \E \left \| \rcov - \E \rcov \right \| _{\HSk} + 4 \boundepsilon^2 \sqrt{\frac{3 \xi}{2n} }.
\end{equation*} 
By Lemma \ref{lemma: bound_on_expectation}, we bound the expectation term to obtain that with probability $1-e^{-\xi}$:
\begin{eqnarray*}
 \left \| \rcov - \E \rcov \right \| _{\HSk} 
 &\leq &      \frac{\boundepsilon^2\sqrt{\nobs - \rangx}}{\nobs} + 4 \boundepsilon^2 \sqrt{\frac{3 \xi}{2n} }  \\
  &\leq &      \frac{\boundepsilon^2}{\sqrt \nobs} \left(1 +  4   \sqrt{\frac{3 \xi}{2} } \right)  .
\end{eqnarray*}

\end{proof}


 \begin{lemma}
\label{lemma: biais de sigma epsilon}
Under Assumption \ref{Ahomoscedasticity}, we have that:
\begin{equation*}
\begin{split}
\E(\rcov) = \frac{\nobs - \rangx}{\nobs} \ecovt + \frac{1}{\nobs} \sum_{i = 1}^{\nobs} \Pxpij{i}{i} \ecovtci{i}.
\end{split}
\end{equation*}
\end{lemma}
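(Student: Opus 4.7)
The plan is to compute $\E(\rcov)$ directly from the definition by expressing the residuals in terms of the errors via the projection $\Pxp$, then use the independence of the errors to reduce the expectation of the outer products to a combination of the individual covariance operators $\ecov_i$, and finally invoke Assumption~\ref{Ahomoscedasticity} to separate the common part $\ecovt$ from the per-observation tails $\ecovtci{i}$.

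Concretely, I would first write $\residual = \Pxp\errors$ componentwise as $\residual_i = \sum_{j=1}^{\nobs}\Pxpij{i}{j}\error_j$, so that
\begin{equation*}
\residual_i \otimes \residual_i = \sum_{j,k = 1}^{\nobs}\Pxpij{i}{j}\Pxpij{i}{k}\,\error_j\otimes \error_k.
\end{equation*}
Since the $\error_j$ are centered and independent, $\E(\error_j\otimes \error_k) = 0$ for $j\neq k$ and $\E(\error_j\otimes \error_j) = \ecov_j$, so
\begin{equation*}
\E(\residual_i\otimes \residual_i) = \sum_{j=1}^{\nobs}\Pxpij{i}{j}^{2}\,\ecov_j.
\end{equation*}

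Next, I would substitute Assumption~\ref{Ahomoscedasticity}, namely $\ecov_j = \ecovt + \ecovtci{j}$, and sum over $i$:
\begin{equation*}
\E(\rcov) = \frac{1}{\nobs}\sum_{i=1}^{\nobs}\sum_{j=1}^{\nobs}\Pxpij{i}{j}^{2}\ecovt \;+\; \frac{1}{\nobs}\sum_{j=1}^{\nobs}\Bigl(\sum_{i=1}^{\nobs}\Pxpij{i}{j}^{2}\Bigr)\ecovtci{j}.
\end{equation*}
Then I would use that $\Pxp$ is symmetric and idempotent, which gives $\sum_{i}\Pxpij{i}{j}^{2} = (\Pxp\Pxp)_{j,j} = \Pxpij{j}{j}$ for each $j$, and $\sum_{i,j}\Pxpij{i}{j}^{2} = \trmtrx(\Pxp) = \nobs - \rangx$ (using that $\dX$ has rank $\rangx$, so $\Px$ has rank $\rangx$). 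Combining these two identities yields exactly the announced formula.

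I do not expect any genuine obstacle: the argument is a routine second-moment calculation, and the only mild subtlety is bookkeeping the two roles of $\Pxp$, once as the operator producing the residuals and once (via its idempotency) for collapsing $\sum_i \Pxpij{i}{j}^2$ into $\Pxpij{j}{j}$. The independence of the errors across $i$, which is implicit in the linear model~\eqref{eq:linear model on embeddings}, is what justifies discarding the off-diagonal terms and should be mentioned explicitly.
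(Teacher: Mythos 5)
Your proof is correct and follows essentially the same route as the paper: expand $\residual_i=\sum_j\Pxpij{i}{j}\error_j$, kill the cross terms using that the errors are centered and independent, substitute $\ecov_j=\ecovt+\ecovtci{j}$ from Assumption~\ref{Ahomoscedasticity}, and collapse the sums via idempotency of $\Pxp$ (giving $\sum_i\Pxpij{i}{j}^2=\Pxpij{j}{j}$ and $\trmtrx(\Pxp)=\nobs-\rangx$). No gaps.
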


\begin{proof}
The $i$-th residual can be written as 
$
\residual_i =   \sum_{j = 1}^{\nobs}\Pxpij{i}{j} \error_j.
$
where $\Pxpi{i} = (\Pxpij{i}{1},\dots , \Pxpij{i}{\nobs})^\prime \in \R^\nobs$ is the $i^{th}$ column of $\Pxp$. Since $\Pxp$ is an orthogonal projector, we have
\begin{equation}
\label{eq:identity_orthogonal_projector} 
\begin{split}
\forall j \in \{ 1, \dots, \nobs \},\quad \sum_{i = 1}^{\nobs} \Pxpij{i}{j}^2 = \Pxpij{j}{j}.
\end{split}
\end{equation}

Consequently, the expectation is such that: 
\begin{equation*}
\begin{split}
\E (\rcov) =&  \frac{1}{\nobs} \E (\sum_{i = 1}^{\nobs} \residual_i \otimes \residual_i)
\\ =& \frac{1}{\nobs} \sum_{i = 1}^{\nobs} \sum_{j = 1}^{\nobs} \sum_{k = 1}^{\nobs} \Pxpij{i}{j} \Pxpij{i}{k} \E (\error_j \otimes \error_k)
\\ =& \frac{1}{\nobs} \sum_{i = 1}^{\nobs} \left ( \sum_{j = 1}^{\nobs} \Pxpij{i}{j}^2 \E (\error_j \otimes \error_j) + \sum_{j \neq k } \Pxpij{i}{j} \Pxpij{i}{k} \E(\error_j) \otimes \E(\error_k) \right ).
\end{split}
\end{equation*}
Then, for $j\in\{ 1, \dots, \nobs \}, \E(\error_j) =0$. We use Assumption \ref{Ahomoscedasticity} to obtain :  
\begin{equation*}
\begin{split}
\E (\rcov) =& \frac{1}{\nobs}  \sum_{i,j = 1}^{\nobs} \Pxpij{i}{j}^2 (\ecovt + \ecovtci{j})
\\ =& \frac{\nobs - \rangx}{\nobs} \ecovt + \frac{1}{\nobs} \sum_{j = 1}^{\nobs} \Pxpij{j}{j} \ecovtci{j}
\end{split}
\end{equation*}
where we apply Lemma \ref{lemma: sommes de pi ij} and Equation \eqref{eq:identity_orthogonal_projector} to obtain the last equality.  
\end{proof}

\begin{lemma}
\label{lemma: Borne sur la norme des residus}
If Assumptions \ref{Aboundedkernel} and \ref{Aboundeddesign} are satisfied, then for $i \in \{1,\dots,\nobs\}$,
\begin{equation*}
\begin{split}
 \left \|\error_i  \right \| _{\mathcal{H}}<\boundepsilon := \sqrt{\boundk} + \nx \boundX \Max{j \in \{ 1, \dots, \nx \}}\left \| \param_j \right \| _{\Hk}.
\end{split}
\end{equation*}
\end{lemma}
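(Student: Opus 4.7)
The plan is to rewrite $\error_i$ directly from the linear model \eqref{eq:linear model on embeddings} and then apply the triangle inequality in $\Hk$. Since $\ey{i} = \sum_{j=1}^{\nx} \dx_i^{j} \param_j + \error_i$, we have $\error_i = \ey{i} - \sum_{j=1}^{\nx} \dx_i^{j} \param_j$, so
\begin{equation*}
\|\error_i\|_{\Hk} \;\leq\; \|\ey{i}\|_{\Hk} + \sum_{j=1}^{\nx} |\dx_i^{j}| \, \|\param_j\|_{\Hk}.
\end{equation*}

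Next, I would bound each of the two terms using the two assumptions separately. For the first term, the reproducing property yields $\|\ey{i}\|_{\Hk}^2 = \|\kernel(\dy_i,\cdot)\|_{\Hk}^2 = \kernel(\dy_i,\dy_i)$, which is at most $\boundk$ by Assumption \ref{Aboundedkernel}; hence $\|\ey{i}\|_{\Hk} \leq \sqrt{\boundk}$. For the second term, Assumption \ref{Aboundeddesign} gives $|\dx_i^{j}| \leq \boundX$ for every $i,j$, and bounding each $\|\param_j\|_{\Hk}$ by the maximum yields $\sum_{j=1}^{\nx} |\dx_i^{j}| \, \|\param_j\|_{\Hk} \leq \nx \boundX \max_{j \in \{1,\dots,\nx\}} \|\param_j\|_{\Hk}$.

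Combining the two bounds gives exactly $\|\error_i\|_{\Hk} \leq \sqrt{\boundk} + \nx \boundX \max_j \|\param_j\|_{\Hk} = \boundepsilon$. There is no real obstacle here: the argument is a direct application of the triangle inequality together with the reproducing kernel identity and the two boundedness assumptions, and the functional parameters $\param_j$ are deterministic elements of $\Hk$ so their norms are finite constants that can be absorbed into the bound.
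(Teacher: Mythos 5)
Your proposal is correct and follows essentially the same route as the paper's proof: the same triangle-inequality decomposition of $\error_i = \ey{i} - (\dX\params)_i$, the same bound $\|\ey{i}\|_{\Hk} \leq \sqrt{\boundk}$ from the bounded-kernel assumption (you make the reproducing-property step explicit, which the paper leaves implicit), and the same $\nx \boundX \max_j \|\param_j\|_{\Hk}$ bound on the design term.
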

\begin{proof}
We have:
\begin{equation*}
\begin{split}
\left \| \error_i  \right \| _{\Hk} 
=& \left \| \ey{i} - (\dX\params)_i  \right \| _{\Hk} 
\\ \leq& \left \| \ey{i}  \right \| _{\Hk} + \left \| (\dX \params)_i  \right \| _{\Hk} 
\end{split}
\end{equation*}
From Assumption \ref{Aboundedkernel}, we know that $\left \| \ey{i}  \right \| _{\Hk} \leq \sqrt{\boundk}$, and from the triangular inequality, we have:
\begin{eqnarray*}
\left \|(\dX \params)_i    \right \| _{\Hk} 
& = & \left \|   \sum_{j = 0}^{\nx-1}\param_j \dx_i^j \right \| _{\Hk} \\
&\leq & \nx \boundX \Max{j \in \{ 0, \dots, \nx-1 \}} \left \| \param_j \right \|_{\Hk} 
\end{eqnarray*}
and the sum of the two bounds gives the result. 
\end{proof}

\begin{lemma}
\label{lemma: Borne de la fluctuation de la norme de la diff entre sigma epsilon et son esperance}
 Let  $\errors$, $\tilde \errors_{i_0}$ and $f$  defined as in the proof of Proposition~\ref{proposition: Inegalite de concentration sur sigma epsilon emprique et theorique}.  Under Assumptions \ref{Aboundedkernel} and \ref{Aboundeddesign} we have: 
\begin{equation*}
| f(\errors) - f( \tilde \errors_{i_0})| \leq \frac{2 \boundepsilon}{\nobs} \left(\boundepsilon +  \left \| 
(\Px \errors)_{i_0}\right \| _{\Hk} + \left \| 
(\Px \widetilde \errors)_{i_0}\right \| _{\Hk}\right) ,
\end{equation*}
where $\boundepsilon$ is defined in Lemma \ref{lemma: Borne sur la norme des residus}.
\end{lemma}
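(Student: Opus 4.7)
The statement is a bounded-differences estimate for $f(\errors) = \|\rcov - \E(\rcov)\|_{\HSk}$, so the first move is to get rid of the expectation. Since $\E(\rcov)$ does not depend on the sample, the reverse triangle inequality gives
\[
|f(\errors) - f(\tilde\errors_{i_0})| \leq \|\rcov(\errors) - \rcov(\tilde\errors_{i_0})\|_{\HSk},
\]
so the whole problem reduces to controlling the HS norm of a single operator perturbation. Because $\residuals = \Pxp \errors$ and $\errors - \tilde\errors_{i_0}$ has a single nonzero component at index $i_0$, writing $\Delta := \error_{i_0} - \tilde\error_{i_0}$ one gets directly from the matrix action \eqref{eq:MatrixHk} that $\residual_i - \tilde\residual_i = \Pxpij{i}{i_0}\,\Delta$ for every $i\in\{1,\dots,\nobs\}$.

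The key algebraic step is to expand the rank-one differences with the identity $a\otimes a - b\otimes b = (a-b)\otimes a + b\otimes (a-b)$ and sum over $i$:
\[
\nobs\bigl(\rcov(\errors) - \rcov(\tilde\errors_{i_0})\bigr) = \Delta\otimes\Bigl(\sum_{i=1}^\nobs \Pxpij{i}{i_0}\residual_i\Bigr) + \Bigl(\sum_{i=1}^\nobs \Pxpij{i}{i_0}\tilde\residual_i\Bigr)\otimes\Delta.
\]
The simplification I would exploit here is that $\Pxp$ is a symmetric idempotent, so $\Pxp^\top\Pxp = \Pxp$ and therefore $\sum_{i}\Pxpij{i}{i_0}\residual_i = (\Pxp^\top\Pxp\,\errors)_{i_0} = (\Pxp\errors)_{i_0} = \residual_{i_0}$, and likewise for $\tilde\residual_{i_0}$. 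This collapses the perturbation to a clean two-term expression:
\[
\nobs\bigl(\rcov(\errors) - \rcov(\tilde\errors_{i_0})\bigr) = \Delta\otimes\residual_{i_0} + \tilde\residual_{i_0}\otimes\Delta.
\]

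From here the bound is routine: the triangle inequality together with $\|u\otimes v\|_{\HSk} = \|u\|_{\Hk}\|v\|_{\Hk}$ gives $\|\rcov(\errors)-\rcov(\tilde\errors_{i_0})\|_{\HSk} \leq \nobs^{-1}\|\Delta\|_{\Hk}\bigl(\|\residual_{i_0}\|_{\Hk} + \|\tilde\residual_{i_0}\|_{\Hk}\bigr)$. Then Lemma~\ref{lemma: Borne sur la norme des residus} controls $\|\Delta\|_{\Hk}\leq \|\error_{i_0}\|_{\Hk}+\|\tilde\error_{i_0}\|_{\Hk}\leq 2\boundepsilon$, while writing $\residual_{i_0} = \error_{i_0} - (\Px\errors)_{i_0}$ and applying the triangle inequality once more yields $\|\residual_{i_0}\|_{\Hk}\leq \boundepsilon + \|(\Px\errors)_{i_0}\|_{\Hk}$, and similarly on the tilde side. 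Combining the pieces produces the announced estimate.

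I expect the only subtle point to be step two: the naive bound $\|\Delta\otimes\residual_{i_0} + \tilde\residual_{i_0}\otimes\Delta\|_{\HSk}\leq \|\Delta\|(\|\residual_{i_0}\|+\|\tilde\residual_{i_0}\|)$ loses the cross structure and would force the use of Cauchy--Schwarz on $\sum_i |\Pxpij{i}{i_0}|\|\residual_i\|_{\Hk}$, giving a weaker bound involving $\sqrt{\Pxpij{i_0}{i_0}}\|\residuals\|_{\Hk^\nobs}$ rather than the localized norm $\|(\Px\errors)_{i_0}\|_{\Hk}$ that is needed for the $\sum c_{i_0}^2$ step in the proof of Proposition~\ref{proposition: Inegalite de concentration sur sigma epsilon emprique et theorique}. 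The algebraic identity $\Pxp^\top\Pxp = \Pxp$ is really what allows the telescoped sums to collapse to $\residual_{i_0}$ and thus what makes the final McDiarmid bound $\sum_{i_0} c_{i_0}^2 = \mathcal{O}(\boundepsilon^4/n)$ possible.
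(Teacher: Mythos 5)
Your argument is sound and rests on the same key mechanism as the paper's proof, namely the idempotency of the projector, which localizes the perturbation at index $i_0$; but your algebraic organization is genuinely different and arguably cleaner. The paper first rewrites $\rcov = \nobs^{-1}\sum_i \error_i\otimes\error_i - \nobs^{-1}\sum_i \error_i\otimes(\Pxi{i}'\errors)$ using $\Px^2=\Px$, then splits the difference into a pure-error term plus a cross term $A=A_1+A_2$ and bounds each piece, using $\sum_i \Pxij{i}{i_0}\error_i = (\Px\errors)_{i_0}$ at the last step. You instead compute the difference of the two covariance operators directly and use $(\Pxp)^2=\Pxp$ to collapse $\sum_i\Pxpij{i}{i_0}\residual_i$ to $\residual_{i_0}$, arriving at the closed-form identity $\nobs\bigl(\rcov-\widetilde{\Sigma}\bigr)=\Delta\otimes\residual_{i_0}+(\Pxp\widetilde{\errors})_{i_0}\otimes\Delta$, which the paper never writes down and which makes the rank-two structure of the perturbation explicit. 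Your closing remark is also on target: it is precisely this collapse that produces the localized quantity $\|(\Px\errors)_{i_0}\|_{\Hk}$ needed for the $\sum_{i_0}c_{i_0}^2$ computation in Proposition~\ref{proposition: Inegalite de concentration sur sigma epsilon emprique et theorique}.

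There is, however, one small discrepancy with the stated bound. Estimating $\|\residual_{i_0}\|_{\Hk}\le\boundepsilon+\|(\Px\errors)_{i_0}\|_{\Hk}$ and likewise for the perturbed residual, your chain of inequalities yields
\begin{equation*}
|f(\errors)-f(\tilde\errors_{i_0})|\le\frac{2\boundepsilon}{\nobs}\Bigl(2\boundepsilon+\|(\Px\errors)_{i_0}\|_{\Hk}+\|(\Px\widetilde{\errors})_{i_0}\|_{\Hk}\Bigr),
\end{equation*}
that is $2\boundepsilon$ where the lemma claims $\boundepsilon$. The repair stays entirely within your framework: substitute $\residual_{i_0}=\error_{i_0}-(\Px\errors)_{i_0}$ into your identity and recombine the two error contributions via $\Delta\otimes\error_{i_0}+\tilde\error_{i_0}\otimes\Delta=\error_{i_0}\otimes\error_{i_0}-\tilde\error_{i_0}\otimes\tilde\error_{i_0}$, whose Hilbert--Schmidt norm is bounded by $\|\error_{i_0}\|_{\Hk}^2+\|\tilde\error_{i_0}\|_{\Hk}^2\le 2\boundepsilon^2$ rather than $4\boundepsilon^2$; the two remaining terms $-\Delta\otimes(\Px\errors)_{i_0}$ and $-(\Px\widetilde{\errors})_{i_0}\otimes\Delta$ each contribute at most $2\boundepsilon$ times the corresponding projected norm, giving exactly the stated inequality. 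As written, your weaker constant would only inflate the numerical constants in Proposition~\ref{proposition: Inegalite de concentration sur sigma epsilon emprique et theorique} without changing the $n^{-1/2}$ rate, but the lemma as stated is the sharper version and should be proved as such.
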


\begin{proof}
Denote $\widetilde \ecov  = \frac{1}{\nobs} \sum_{i = 1}^{\nobs} \tilde \error_i \otimes \tilde \error_i$, we have:
\begin{equation*}
\begin{split}
| f(\errors) - f( \tilde \errors_{i_0})| 
=& \left |\left \| \rcov - \E(\rcov)  \right \| _{\HSk} - \left \| \widetilde \ecov - \E(\rcov)  \right \| _{\HSk} \right |
\\ \leq& \left \| \rcov -  \widetilde \ecov  \right \| _{\HSk}.
\end{split}
\end{equation*}
As $\residual_i = \error_i -    \Pxi{i}'\errors$ where $\Pxi{i}$ is the $i$-th column (or row) of $\Px$, we have:
\begin{equation*}
\begin{split}
\rcov =& 
\frac{1}{\nobs} \sum_{i = 1}^{\nobs} \error_i \otimes \error_i
- \frac{1}{\nobs} \sum_{i = 1}^{\nobs}  \error_i \otimes (\Pxi{i}'\errors )
- \frac{1}{\nobs} \sum_{i = 1}^{\nobs}  (\Pxi{i}'\errors) \otimes \error_i
+ \frac{1}{\nobs} \sum_{i = 1}^{\nobs}  (\Pxi{i}'\errors) \otimes (\Pxi{i}'\errors). 
\end{split}
\end{equation*}
As $\Px$ is an orthogonal projector, we have that $\Px = \Px^\prime = \Px^2$. Thus for $i,j \in \{ 1, \dots, \nobs \}$, we have $\Pxij{i}{j} = \Pxij{j}{i}$ and $\sum_{k = 1}^{\nobs}\Pxij{i}{k} \Pxij{j}{k} = \Pxij{i}{j}$. Then by developing each term $\Pxi{i}'\errors= \sum_{j = 1}^{\nobs}\Pxij{i}{j} \error_j$, we obtain that: 
\begin{equation*}
\begin{split}
\frac{1}{\nobs} \sum_{i = 1}^{\nobs}  \error_i \otimes (\Pxi{i}'\errors ) 
&= \frac{1}{\nobs} \sum_{i = 1}^{\nobs}  (\Pxi{i}'\errors) \otimes \error_i 
= \frac{1}{\nobs} \sum_{i = 1}^{\nobs}  (\Pxi{i}'\errors) \otimes (\Pxi{i}'\errors).
\end{split}
\end{equation*}
It leads to:
\begin{equation*}
\begin{split}
\rcov =&\frac{1}{\nobs} \sum_{i = 1}^{\nobs} \error_i \otimes \error_i
 - \frac{1}{\nobs} \sum_{i = 1}^{\nobs} \error_i  \otimes  (\Pxi{i}'\errors).
\end{split}
\end{equation*}
We replace $\rcov$ and $\widetilde \ecov$  by this expression and use the triangular inequality to obtain: 
\begin{equation*}
| f(\errors) - f( \tilde \errors_{i_0})| 
\leq   
\frac{1}{\nobs}  \left \| \sum_{i = 1}^{\nobs} \error_i \otimes \error_i - \tilde \error_i \otimes \tilde \error_i \right \| _{\HSk} 
 + \underset{=A}{\underbrace{\frac{1}{\nobs}  \left \| \sum_{i = 1}^{\nobs} \tilde \error_i \otimes  (\Pxi{i}' \widetilde \errors ) - \error_i \otimes  ( \Pxi{i}'\errors) \right \| _{\HSk}}}.
\end{equation*}
The first term is such that:
\begin{eqnarray}
\frac{1}{\nobs}  \left \| \sum_{i = 1}^{\nobs} \error_i \otimes \error_i - \tilde \error_i \otimes \tilde \error_i \right \| _{\HSk} 
&=& \frac{1}{\nobs}  \left \|  \error_{i_0} \otimes \error_{i_0} - \tilde \error_{i_0} \otimes \tilde \error_{i_0} \right \| _{\HSk} \notag \\ 
&\leq& \frac{2 \boundepsilon^2}{\nobs} \label{eq:Terme1LemmeDiffb} .
\end{eqnarray}
The second term can be decomposed as:
\begin{equation*}
\begin{split}
A =& 
\frac{1}{\nobs} \left \| 
\sum_{i = 1,i\neq i_{0}}^{\nobs}
  \error_i \otimes  \Big( (\Pxi{i}'( \widetilde \errors - \errors )  \Big) 
+ \tilde \error_{i_0} \otimes  (\Pxi{i_0}' \widetilde \errors ) 
-  \error_{i_0} \otimes ( \Pxi{i_0}' \widetilde \errors ) 
\right \| _{\HSk} 
\\ =& 
\frac{1}{\nobs} \left \| 
\sum_{i = 1}^{\nobs}
  \error_i \otimes  \left( (\Pxi{i}'( \widetilde \errors - \errors )  \right) 
+ ( \widetilde \error_{i_0} -  \error_{i_0} ) \otimes ( \Pxi{i_0}' \widetilde \errors ) 
\right \| _{\HSk} 
\\ \leq& 
\underset{A_1}{\underbrace{\frac{1}{\nobs} \left \| 
\sum_{i = 1}^{\nobs}  \error_i  \otimes 
\left (\Pxi{i}'( \widetilde \errors - \errors ) \right) 
\right \| _{\HSk} 
}}
+ 
\underset{A_2}{\underbrace{\frac{1}{\nobs} \left \| 
( \widetilde \error_{i_0} -  \error_{i_0} ) \otimes ( \Pxi{i_0}' \widetilde \errors ) 
\right \| _{\HSk} 
}}.
\end{split}
\end{equation*}
We first bound $A_1$. Note that par definition of $\widetilde \errors$, we have $\Pxi{i}'( \widetilde \errors - \errors ) = \pi_{i,i_0} (\widetilde \error_{i_0} -  \error_{i_0}) $ and thus 
$
\sum_{i = 1}^{\nobs}
  \error_i \otimes  \left( (\Pxi{i}'( \widetilde \errors - \errors )  \right)  = \sum_{i = 1}^{\nobs}
  \error_i \otimes  \left(\pi_{i,i_0} (\widetilde \error_{i_0} -  \error_{i_0})   \right)  = 
 \left( \sum_{i = 1}^{\nobs}
 \pi_{i,i_0}  \error_i \right) \otimes   (\widetilde \error_{i_0} -  \error_{i_0})   = (\Px \errors)_{i_0} \otimes   (\widetilde \error_{i_0} -  \error_{i_0})
$. From this we obtain that
\begin{eqnarray*}
A_1 &\leq &  \frac{1}{\nobs} \left \| 
(\Px \errors)_{i_0} \otimes   (\widetilde \error_{i_0} -  \error_{i_0})
\right \| _{\HSk}  \\
&\leq &    \frac{2 \boundepsilon}{\nobs} \left \| 
(\Px \errors)_{i_0}\right \| _{\Hk} 
\end{eqnarray*}
Regarding $A_2$, by noting that $\Pxi{i_0}' \widetilde \errors = (\Px \widetilde \errors)_{i_0}$, we   get that $A_2 \leq \frac{2 \boundepsilon}{\nobs} \left \| 
(\Px \widetilde \errors)_{i_0}\right \| _{\Hk} $ and thus $A \leq \frac{2 \boundepsilon}{\nobs} \left( \left \| 
(\Px \errors)_{i_0}\right \| _{\Hk} + \left \| 
(\Px \widetilde \errors)_{i_0}\right \| _{\Hk}\right) $. We conclude the proof by combining this bound with \eqref{eq:Terme1LemmeDiffb}.
\end{proof}

\begin{lemma}
\label{lemma: bound_on_expectation}
Under  Assumptions \ref{Aboundedkernel} and \ref{Aboundeddesign}, we have: 
\begin{equation*}
\begin{split}
\E \left \| \rcov - \E \rcov \right \| _{\HSk} 
\leq &  \frac{\boundepsilon^2\sqrt{\nobs - \rangx}}{\nobs},
\end{split}
\end{equation*}
where $\boundepsilon$ is defined in Lemma \ref{lemma: Borne sur la norme des residus}.
\end{lemma}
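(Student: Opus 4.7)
My strategy is to reduce the expected-norm bound to a second-moment computation via Jensen's inequality,
\[
\E \|\rcov - \E \rcov\|_{\HSk} \leq \sqrt{\E \|\rcov - \E \rcov\|_{\HSk}^2},
\]
and then to compute the variance of $\rcov$ explicitly by exploiting the independence and zero mean of the errors $\error_i$, together with the deterministic bound $\|\error_i\|_{\Hk} \leq \boundepsilon$ supplied by Lemma~\ref{lemma: Borne sur la norme des residus}. The geometry of the design will enter through the single identity $\|\Pxp\|_{\HSk}^2 = \mathrm{rank}(\Pxp) = \nobs - \rangx$, which is what produces the $(\nobs-\rangx)/\nobs^2$ rate in the second moment.

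First I would substitute $\residual_i = \sum_j \Pxpij{i}{j}\,\error_j$ into the definition of $\rcov$ and use the idempotence identity $\sum_i \Pxpij{i}{j}\Pxpij{i}{k} = \Pxpij{j}{k}$ (already used in the proof of Lemma~\ref{lemma: biais de sigma epsilon}) to rewrite
\[
\rcov \;=\; \frac{1}{\nobs} \sum_{j,k=1}^{\nobs} \Pxpij{j}{k}\, \error_j \otimes \error_k.
\]
Since the $\error_i$ are independent with zero mean, only the diagonal $j=k$ survives under expectation, so the centered operator splits cleanly into a diagonal piece $\nobs^{-1}\sum_j \Pxpij{j}{j}(\error_j \otimes \error_j - \ecov_j)$ and an off-diagonal piece $\nobs^{-1}\sum_{j\neq k}\Pxpij{j}{k}\,\error_j\otimes\error_k$. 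Squaring this decomposition in Hilbert--Schmidt norm and taking expectation, I would then discard the cross term between the two pieces (in each summand some $\error_i$ appears an odd number of times and is killed by $\E\error_i = 0$), keep only $j = j'$ in the diagonal-diagonal contribution, and retain only the two pairings $(j,k)=(j',k')$ and $(j,k)=(k',j')$ in the off-diagonal-off-diagonal contribution.

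Each surviving scalar expectation, namely $\E\|\error_j\otimes\error_j-\ecov_j\|_{\HSk}^2$, $\E\|\error_j\|^2\E\|\error_k\|^2$, and $\E\langle\error_j,\error_k\rangle^2$, is bounded by $\boundepsilon^4$ via Lemma~\ref{lemma: Borne sur la norme des residus}. The second-moment bound therefore reduces to a constant multiple of $\boundepsilon^4 \sum_{j,k}\Pxpij{j}{k}^2 / \nobs^2$, and the announced rate comes from $\sum_{j,k}\Pxpij{j}{k}^2 = \|\Pxp\|_{\HSk}^2 = \mathrm{rank}(\Pxp) = \nobs - \rangx$ since $\Pxp$ is the orthogonal projection onto $\mathrm{Im}(\dX)^\perp$. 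Taking a square root via Jensen then delivers the claimed order $\boundepsilon^2\sqrt{\nobs-\rangx}/\nobs$.

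\paragraph{Main obstacle.} The most delicate step is the combinatorial bookkeeping in the second-moment expansion: among the four indices $(j,k,j',k')$ one must carefully identify which coincidence patterns give a non-vanishing expectation once independence and zero mean of the errors are used, and avoid double-counting. Matching the numerical constant exactly as stated (rather than accumulating an extra factor of order $\sqrt{2}$ from the two surviving off-diagonal pairings) demands some additional sharpness at this step, for instance by using the identity $\E\langle\error_j,\error_k\rangle^2 = \langle \ecov_j,\ecov_k\rangle_{\HSk}$ together with finer trace-class bounds rather than bare Cauchy--Schwarz.
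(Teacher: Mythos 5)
Your proposal follows essentially the same route as the paper's proof: Jensen's inequality to pass to the second moment, the expansion of $\rcov$ as $\nobs^{-1}\sum_{j,k}\pi^{\perp}_{j,k}\,\error_j\otimes\error_k$ via idempotence of $\Pxp$, identification of the coincidence patterns that survive independence and zero mean, the bound $\boundepsilon^4$ on each surviving moment, and the rank identity for the projector to extract the $\sqrt{\nobs-\rangx}/\nobs$ rate. The constant issue you flag is real but is equally present in the paper's own computation (which arrives at $\boundepsilon^4(\nobs+\rangx)/\nobs^2$ for the second moment, hence $\sqrt{\nobs+\rangx}$ rather than the stated $\sqrt{\nobs-\rangx}$ after taking the square root); since only the $O(\nobs^{-1/2})$ rate is used downstream in Proposition~\ref{proposition: Inegalite de concentration sur sigma epsilon emprique et theorique}, this discrepancy is immaterial.
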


\begin{proof}
Note that:
\begin{equation*}
\begin{split}
\E \left \| \rcov - \E \rcov \right \| _{\HSk}^2 
=& \E  \left \| \rcov  \right \| _{\HSk}^2  
- 2 \E  \left \langle \rcov , \E \rcov \right \rangle_{\HSk}
+  \left \| \E \rcov \right \| _{\HSk} ^2
\\ =&\E  \left \| \rcov  \right \| _{\HSk}^2  
- 2  \left \langle\E  \rcov , \E \rcov \right \rangle_{\HSk}
+  \left \| \E \rcov \right \| _{\HSk} ^2
\\ =& \E  \left \| \rcov  \right \| _{\HSk}^2  -   \left \| \E \rcov \right \| _{\HSk} ^2 .
\end{split}  
\end{equation*}
By Jensen's inequality, we have that:
\begin{equation*}
\begin{split}
\E \left \| \rcov - \E \rcov \right \| _{\HSk} 
\leq& \left [\E \left \| \rcov - \E \rcov \right \| _{\HSk}^2  \right ] ^{\frac{1}{2}}
\\ \leq& \left[\E  \left \| \rcov  \right \| _{\HSk}^2  -   \left \| \E \rcov \right \| _{\HSk} ^2   \right ]^{\frac{1}{2}}. 
\end{split}
\end{equation*}
Remember that $\Px = (\Pxij{i}{j})_{1\leq i,j \leq n}$ is the orthogonal projection on $Im(\dX)$. We can develop $\rcov$ to obtain that: 
\begin{equation} 
\label{equation_sigma_epsilon} 
\begin{split} \rcov = \frac{1}{\nobs} \sum_{i = 1}^{\nobs}(1-\Pxij{i}{i}) \error_i \otimes \error_i - \frac{1}{\nobs} \sum_{i,j = 1,i\neq j}^{\nobs}\Pxij{i}{j} \error_i \otimes \error_j. \end{split} 
\end{equation} 
Thus:
\begin{equation*}
\begin{split}
\left \| \rcov  \right \| _{\HSk}^2  = A - 2B + C,
\end{split}
\end{equation*}
where
\begin{equation*}
\begin{split}
A =& \frac{1}{\nobs^2} \sum_{i = 1}^{\nobs}\sum_{j = 1}^{\nobs} (1 - \Pxij{i}{i})(1 - \Pxij{j}{j}) \left \langle \error_i \otimes \error_i, \error_j \otimes \error_j \right \rangle_{\HSk},
\\ B =& \frac{1}{\nobs^2} \sum_{i = 1}^{\nobs}\sum_{j = 1}^{\nobs} \sum_{k = 1,k\neq j}^{\nobs}(1 - \Pxij{i}{i}) \Pxij{j}{k} \left \langle \error_i \otimes \error_i, \error_j \otimes \error_k \right \rangle_{\HSk}, 
\\ C =& \frac{1}{\nobs^2} \sum_{i = 1}^{\nobs}\sum_{j = 1,j\neq i}^{\nobs}\sum_{k = 1}^{\nobs}\sum_{l = 1,l\neq k}^{\nobs}\Pxij{i}{j} \Pxij{k}{l} \left \langle \error_i \otimes \error_j, \error_k \otimes \error_l \right \rangle_{\HSk} .
\end{split}
\end{equation*}
We now compute the expectation of each term: 
\begin{equation*}
\begin{split}
\E (A) =&  
\frac{1}{\nobs^2} \sum_{i = 1}^{\nobs} (1 - \Pxij{i}{i})^2 \E \left( \left \langle \error_i \otimes \error_i, \error_i \otimes \error_i \right \rangle_{\HSk} \right )  
\\&+  \frac{1}{\nobs^2} \sum_{i = 1}^{\nobs}\sum_{j = 1,j\neq i}^{\nobs} (1 - \Pxij{i}{i})(1 - \Pxij{j}{j}) \left \langle \E \left (\error_i \otimes \error_i \right), \E \left (\error_j \otimes \error_j \right ) \right \rangle_{\HSk} 
\\=&  
\frac{1}{\nobs^2} \sum_{i = 1}^{\nobs} (1 - \Pxij{i}{i})^2 \E \left( \left \| \error_i \right \| _{\Hk}^4 \right )  
\\&+  \frac{1}{\nobs^2} \sum_{i = 1}^{\nobs}\sum_{j = 1,j\neq i}^{\nobs} (1 - \Pxij{i}{i})(1 - \Pxij{j}{j}) \left \langle \ecov_i, \ecov_j\right \rangle_{\HSk}.
\end{split}
\end{equation*}
We can directly see that every term of $B$ contains at least one indice $j$ or $k$ different from the others, involving that the expectation of each term of $B$ is null, which give that $\E(B) = 0$. The same happens for every term of $C$ where indices $k$ and $l$ are different from indices $i$ and $j$. Consequently we have:
\begin{equation*}
\begin{split}
\E (C) 
=& \frac{1}{\nobs^2} \sum_{i = 1}^{\nobs}\sum_{j = 1, j \neq i}^{\nobs} \Pxij{i}{j}^2 \left (  \E \left \langle \error_i \otimes \error_j , \error_i \otimes \error_j \right \rangle_{\HSk} + \E \left \langle \error_i \otimes \error_j , \error_j \otimes \error_i \right \rangle_{\HSk}   \right ) 
\\ =& \frac{2}{\nobs^2} \sum_{i = 1}^{\nobs}\sum_{j = 1, j \neq i}^{\nobs} \Pxij{i}{j}^2 \E \left \| \error_i \right \| _{\Hk}^2  \E \left \| \error_j \right \| _{\Hk}^2.
\end{split}
\end{equation*}
According to Equation \eqref{equation_sigma_epsilon}, $\E \rcov$ is such that: 
\begin{equation*} 
\begin{split} 
\E \rcov  =& 
\frac{1}{\nobs} \sum_{i = 1}^{\nobs}(1 - \Pxij{i}{i}) \E (\error_i \otimes \error_i) + \frac{1}{\nobs} \sum_{i,j = 1, i \neq j}^{\nobs}\sum_{j = 1,j\neq i}^{\nobs} \Pxij{i}{j} \E \error_i \otimes \E \error_j  
\\ =& \frac{1}{\nobs} \sum_{i = 1}^{\nobs} (1 - \Pxij{i}{i}) \ecov_i. \end{split}
\end{equation*} 
Then: 
\begin{equation*} 
\begin{split} 
\left \| \E \rcov \right \| _{\HSk}^2 
=& \frac{1}{\nobs^2} \sum_{i = 1}^{\nobs}\sum_{j = 1}^{\nobs}(1 - \Pxij{i}{i})(1 - \Pxij{j}{j})  
\left \langle \ecov_i , \ecov_j \right \rangle_{\HSk}
\\ =& \frac{1}{\nobs^2} \sum_{i = 1}^{\nobs}(1 - \Pxij{i}{i}) ^2\left \| \ecov_i \right \| _{\HSk} ^2 + \frac{1}{\nobs^2}\sum_{i = 1}^{\nobs}\sum_{j = 1,j\neq i}^{\nobs}(1 - \Pxij{i}{i})(1 - \Pxij{j}{j}) \left \langle \ecov_i , \ecov_j \right \rangle_{\HSk}.\end{split} 
\end{equation*}
Now we can sum $\E(A), \E(C)$ and $\left \| \E \rcov \right \| _{\HSk}^2$:
\begin{equation*}
\begin{split}
\E \left \| \rcov \right \| _{\HSk}^2  - \left \| \E \rcov \right \| _{\HSk}^2 
=&  \E(A) + \E(C) - \left \| \E \rcov \right \| _{\HSk}^2
\\ 
=& \frac{1}{\nobs^2} \sum_{i = 1}^{\nobs} (1 - \Pxij{i}{i})^2 
\Big( \E \left \| \error_i \right \| _{\Hk}^4 - \left \| \ecov_i \right \| _{\HSk} ^2 \Big)
\\&+ \frac{2}{\nobs^2} \sum_{i,j = 1, i\neq j}^{\nobs} \Pxij{i}{j}^2 \E \left \| \error_i \right \| _{\Hk}^2  \E \left \| \error_j \right \| _{\Hk}^2
\end{split}
\end{equation*}
Note that under Assumptions \ref{Aboundedkernel} and \ref{Aboundeddesign}, Lemma \ref{lemma: Borne sur la norme des residus} gives that for $i\in\{ 1, \dots, \nobs \}$, $\E \left \| \error_i \right \| _{\Hk}^4 - \left \| \ecov_i \right \| _{\HSk} ^2 \leq \E \left \| \error_i \right \| _{\Hk}^4 \leq \boundepsilon^4$ and that $\E \left \| \error_i \right \| _{\Hk}^2 \leq \boundepsilon^2$. Thus, we have :
\begin{equation*}
\begin{split}
\E \left \| \rcov \right \| _{\HSk}^2  - \left \| \E \rcov \right \| _{\HSk}^2 
\leq& \frac{1}{\nobs^2} \sum_{i = 1}^{\nobs} (1 - \Pxij{i}{i})^2 
\boundepsilon^4 
+ \frac{2}{\nobs^2} \sum_{i,j = 1, i\neq j}^{\nobs} \Pxij{i}{j}^2 \boundepsilon^4
\\ \leq&  
\frac{\boundepsilon^4}{\nobs^2} (\nobs - \rangx + 2 \rangx)
\\ \leq& \frac{\boundepsilon^4 (\nobs + \rangx)}{\nobs^2}
\end{split}
\end{equation*}
by applying Lemma \ref{lemma: sommes de pi ij} to both terms. It finally gives that 
\begin{equation*}
\begin{split}
\E \left \| \rcov - \E \rcov \right \| _{\HSk} 
\leq &  \frac{\boundepsilon^2\sqrt{\nobs - \rangx}}{\nobs} .
\end{split}
\end{equation*}

\end{proof}

\subsection{Probability bound on the inverse covariance operator}

Under Assumption~\ref{Ahomoscedasticity}, we define $\ecovbar = \ecovt + \frac{1}{\nobs} \sum_{i = 1}^{\nobs}\Pxpij{i}{i} \ecovtci{i}$. 
Under Assumptions~\ref{Ahomoscedasticity} and \ref{Asimple}, and according to Lemma~\ref{lem:ecovbar}, $\ecovbar$ is symmetric positive operator and admits a spectral decomposition $\ecovt = \sum_{t = 1}^t  \ecoval_t \ecovec_t \otimes \ecovec_t  $ where the eigenvalues $ \ecoval_1 > \dots >  \ecoval_\tmax $ and eigenfunctions  $\ecovec_1,\dots,\ecovec_\tmax$ correspond to those of  of $\ecovt$.
\begin{proposition}
\label{lemma: Borne exponentielle sur les inverses des projecteurs}
Assume that Assumptions \ref{Ahomoscedasticity}, \ref{Asimple}, \ref{Aboundedkernel} and  \ref{Aboundeddesign} are verified.
There exists $N_\tmax$ such that for $\nobs \geq N_\tmax$, we have with probability $1-e^{-\xi}$:
\begin{equation*}
\begin{split}
\left \| \rcovt^{-1} - \ecovt^{-1} \right \| _{\HSk} 
\leq\frac{\boundrcov}{\ecoval_{\tmax} - \boundrcov} \left ( \sum_{t = 1}^{\tmax} \frac{2}{\tilde \delta_t}+\frac{\tmax}{\ecoval_\tmax} \right),
\end{split}
\end{equation*}
where $\boundrcov$ is given in Proposition~\ref{proposition: Inegalite de concentration sur sigma epsilon emprique et theorique}.  
\end{proposition}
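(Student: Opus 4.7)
The plan is to apply operator perturbation theory to the concentration bound from Proposition~\ref{proposition: Inegalite de concentration sur sigma epsilon emprique et theorique}, exploiting the fact (via the Lemma on $\ecovbar$) that $\ecovbar$ shares the same top-$\tmax$ eigenvalues $\ecoval_1 > \dots > \ecoval_\tmax$ and eigenfunctions $\ecovec_1,\dots,\ecovec_\tmax$ as $\ecovt$, while its $(\tmax+1)$-st eigenvalue is bounded by $\mu_{\tmax+1} < \ecoval_\tmax$ under Assumption~\ref{Asimple}. I would begin by telescoping
\begin{equation*}
\rcovt^{-1} - \ecovt^{-1}
= \sum_{t=1}^{\tmax} \rcoval_t^{-1} \bigl(\rcovec_t \otimes \rcovec_t - \ecovec_t \otimes \ecovec_t\bigr)
+ \sum_{t=1}^{\tmax} \bigl(\rcoval_t^{-1} - \ecoval_t^{-1}\bigr)\, \ecovec_t \otimes \ecovec_t,
\end{equation*}
so that the triangle inequality in $\HSk$ reduces the problem to (i) bounding the top-$\tmax$ eigenvalue differences $|\rcoval_t - \ecoval_t|$, and (ii) bounding the eigenprojector differences $\| \rcovec_t \otimes \rcovec_t - \ecovec_t \otimes \ecovec_t \|_{\HSk}$.

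For (i), I would invoke the Hoffman-Wielandt inequality for Hilbert-Schmidt self-adjoint operators (\cite{bhatia_hoffman-wielandt_1994}), which gives $|\rcoval_t - \ecoval_t| \leq \| \rcov - \ecovbar \|_{\HSk} \leq \boundrcov$ on the high-probability event provided by Proposition~\ref{proposition: Inegalite de concentration sur sigma epsilon emprique et theorique}. As long as $\boundrcov < \ecoval_\tmax$, which can be enforced by choosing $\nobs \geq N_\tmax$ large enough, this yields both $\rcoval_t \geq \ecoval_\tmax - \boundrcov > 0$ and
\begin{equation*}
|\rcoval_t^{-1} - \ecoval_t^{-1}| = \frac{|\rcoval_t - \ecoval_t|}{\rcoval_t \ecoval_t} \leq \frac{\boundrcov}{(\ecoval_\tmax - \boundrcov)\ecoval_\tmax},
\end{equation*}
giving rise to the $\tmax/\ecoval_\tmax$ contribution in the stated bound after summation over $t$.

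For (ii), I would apply the eigenprojector perturbation result of \cite{zwald_convergence_2005} (of Davis-Kahan / Koltchinskii-Giné type; see also \cite{koltchinskii_random_2000}): at each simple eigenvalue $\ecoval_t$ of $\ecovbar$, with spectral gap $\tilde\delta_t := \min(\ecoval_{t-1} - \ecoval_t,\, \ecoval_t - \ecoval_{t+1}^{\ecovbar})$ where $\ecoval_{\tmax+1}^{\ecovbar} \leq \mu_{\tmax+1}$, one has, provided $\boundrcov < \tilde\delta_t/2$,
\begin{equation*}
\| \rcovec_t \otimes \rcovec_t - \ecovec_t \otimes \ecovec_t \|_{\HSk} \leq \frac{2\, \| \rcov - \ecovbar \|_{\HSk}}{\tilde\delta_t} \leq \frac{2\,\boundrcov}{\tilde\delta_t}.
\end{equation*}
Combined with the bound $\rcoval_t^{-1} \leq (\ecoval_\tmax - \boundrcov)^{-1}$ from step (i), this produces the $\sum 2/\tilde\delta_t$ contribution.

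The main technical obstacle is ensuring that all these perturbation bounds hold uniformly over $t \in \{1,\dots,\tmax\}$ on a single event of probability $1 - e^{-\xi}$. Since the gaps $\tilde\delta_t$ depend only on the deterministic eigenvalues $\ecoval_1,\dots,\ecoval_\tmax$ and $\mu_{\tmax+1}$ (both guaranteed to be well-separated by Assumption~\ref{Asimple}), one can choose $N_\tmax$ depending only on these quantities and on $\xi$ such that $\boundrcov < \tfrac{1}{2}\min_{t \leq \tmax} \tilde\delta_t$ and $\boundrcov < \ecoval_\tmax$ simultaneously. On the event of Proposition~\ref{proposition: Inegalite de concentration sur sigma epsilon emprique et theorique}, the perturbation lemmas then apply to every $t$ at once, and summing the triangle inequality with the factor $\rcoval_t^{-1} \leq (\ecoval_\tmax - \boundrcov)^{-1}$ factored out yields exactly the announced bound.
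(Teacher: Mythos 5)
Your proposal follows essentially the same route as the paper's proof: the same splitting of $\rcovt^{-1} - \ecovt^{-1}$ into an eigenprojector-difference term and an eigenvalue-difference term, the Hoffman--Wielandt inequality for $|\rcoval_t - \ecoval_t|$, the Zwald--Blanchard perturbation bound for $\|\rcovec_t \otimes \rcovec_t - \ecovec_t \otimes \ecovec_t\|_{\HSk}$, and the same high-probability event with $N_\tmax$ chosen so that $\boundrcov$ is below the relevant gaps. The only cosmetic difference is your normalization of the spectral gaps (the paper takes $\delta_t = (\ecoval_t - \ecoval_{t+1})/2$), which does not affect the argument.
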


\begin{remark}
In the Proposition, for $t < \tmax$, all the spectral gaps  $\tilde \delta_t$ do not depend on $n$. Regarding the spectral gap at $\tmax$, note that $\widetilde {\delta}_\tmax = \delta_\tmax \wedge \delta_{\tmax-1} \geq (\ecoval_\tmax - \mu_\tmax )/2 \wedge (\ecoval_{\tmax -1} -  \ecoval_\tmax )/2 > 0 $ (see   Remark~\ref{rq:spectralgapT+1}). Consequently the exponential bound of the proposition implies the convergence in probability. 
\end{remark}

\begin{proof} For $t \in \{ 1, \dots, \nobs -1\}$,  denote $\delta_t = (\ecoval_t - \ecoval_{t+1})/2$ and $\tilde \delta_t = \operatorname{min}(\delta_t,\delta_{t-1})$ for the spectral gaps of $\ecov$. Let $\precov{t} = \ecovec_t \otimes \ecovec_t$ and $\prrcov{t} = \rcovec_t \otimes \rcovec_t$ the orthogonal projector on the axis spanned by $\ecovec_t$ (resp. $\rcovec_t$). 

According to Proposition~\ref{proposition: Inegalite de concentration sur sigma epsilon emprique et theorique}, we have that with probability  $1-e^{-\xi}$:
\begin{equation*}
\left \| \rcov - \ecovbar \right \| _{\HSk} \leq   \boundrcov .
\end{equation*}
Moreover, according to the Hoffman-Wielandt Inequality in infinite dimensional setting \cite{bhatia_hoffman-wielandt_1994} (see also the proof of Theorem 3 from \cite{zwald_convergence_2005}), we have
\begin{equation}
\label{eq:bhatia_hoffman-wielandt}
\max_{t  = 1, \dots, n }| \rcoval_t - \ecoval_t | \leq \left \| \rcov - \ecovbar \right \| _{\HSk}.
\end{equation}
According Lemma \ref{lem:ecovbar}, the eigenvalues $\ecoval_1, \dots \ecoval_\tmax$ of $\ecovbar$ are simple.  In particular the spectral gap at $\tmax$ is strictly positive:  $\ecoval_{\tmax} - \ecoval_{\tmax+1} \geq \ecoval_{\tmax} - \mu_{\tmax +1} >0  $, see also Remark~\ref{rq:spectralgapT+1}.
 
It then can be easily checked that there exists $  N_\tmax$ such for $n \geq N_\tmax$, the event 
\begin{equation*}
 A  = \left\{ \| \rcov - \ecovbar   \| _{\HSk} \leq \boundrcov \wedge \min_{t =1 \dots \tmax }  \delta_t \right\}  
\cap \left\{ \rcoval_\tmax >0\right\}  \cap \left\{ \rcoval_1 > \dots > \rcoval_\tmax > \rcoval_{\tmax+1} \right\}
\end{equation*}
is satisfied with probability at least $1-e^{-\xi}$. Moreover, $N_\tmax$ only depends on  $\ecoval_\tmax$, on $\delta_1,\dots \delta_\tmax $, and on the constants involved in  $\boundrcov$, that is $p$, $\boundepsilon$ and $\| \ecovt \| _{\HSk}$.

In particular,  Theorem~\ref{theorem: Theorem 2 from Zwald and Blanchard} applies on the event $A$: for any $ t \in \{1, \dots, \tmax\}$,
\begin{equation}
\label{eq:projhat}
\left \| \prrcov{t} - \precov{t}    \right \| _{\HSk} \leq \frac{2 \left \| \rcov - \ecov \right \| _{\HSk}}{\widetilde {\delta}_t}.
\end{equation} 
On the event $A$, the rank $r$ of $\rcovt$ is larger than $\tmax$ and thus  $\rcovt^{-1} = \sum_{t = 1}^{\tmax} \rcoval_t^{-1} \rcovec_t \otimes \rcovec_t$. We can use the spectral decomposition of the inverse operators to obtain that on $A$:
\begin{eqnarray}
\left \| \rcovt^{-1} - \ecovt^{-1} \right \| _{\HSk} 
& =& \left \| \sum_{t = 1}^{\tmax} \rcoval_t^{-1} \rcovec_t \otimes \rcovec_t -\ecoval_t^{-1} \ecovec_t \otimes \ecovec_t   \right \| _{\HSk} \notag \\
& =& \left \| \sum_{t = 1}^{\tmax} \rcoval_t^{-1} \prrcov{t} -\ecoval_t^{-1} \precov{t}    \right \| _{\HSk} \notag \\
& \leq& \sum_{t = 1}^{\tmax} \rcoval_t^{-1} \left \| \prrcov{t} - \precov{t}\right \| _{\HSk} +  \sum_{t = 1}^{\tmax}  | \rcoval_t^{-1} - \ecoval_t^{-1}| \left \| \precov{t} \right \| _{\HSk}. \label{SigmainvBound}
\end{eqnarray}
 For the first term of \eqref{SigmainvBound},  on the event $A$ we find from \eqref{eq:projhat} that  for $n \geq \tilde N_\tmax$
\begin{eqnarray*}
 \sum_{t = 1}^{\tmax} \rcoval_t^{-1} \left \| \prrcov{t} - \precov{t}\right \| _{\HSk} 
 & \leq& \frac{\boundrcov}{\rcoval_\tmax}  \sum_{t = 1}^{\tmax} \frac{2}{\tilde \delta_t} \\
  &\leq& \frac{\boundrcov}{\ecoval_{\tmax} - \boundrcov}  \sum_{t = 1}^{\tmax} \frac{2}{\tilde \delta_t}.
\end{eqnarray*}
Regarding the second term in the upper bound \eqref{SigmainvBound}, on $A$ and for $n \geq \tilde N_\tmax$, we have that $ \frac{1}{\rcoval_{\tmax}} \leq  \frac{1}{\ecoval_{\tmax} - \boundrcov} $ and thus for all $t \in \{ 1, \dots, \tmax \}$  
\begin{equation*}
\begin{split}
| \rcoval_t^{-1} - \ecoval_t^{-1}| 
=& \rcoval_t^{-1} \ecoval_t^{-1} | \rcoval_t - \ecoval_t | 
\\ \leq& \ecoval_\tmax^{-1}  \rcoval_\tmax^{-1} \boundrcov,
\\ \leq& \frac{1}{\ecoval_\tmax} \frac{\boundrcov}{\ecoval_{\tmax} - \boundrcov}.
\end{split}
\end{equation*}
Also note that   $\left \| \precov{t} \right \| _{\HSk} = \left \| \ecovec_{t} \right \| _{\Hk}^2 = 1$. Thus, on   the event $A$, we have that 
\begin{equation*}
\begin{split}
\sum_{t = 1}^{\tmax}  | \rcoval_t^{-1} - \ecoval_t^{-1}| \left \| \precov{t} \right \| _{\HSk} 
\leq \frac{\tmax}{\ecoval_\tmax} \frac{\boundrcov}{\ecoval_{\tmax} - \boundrcov} .
\end{split}
\end{equation*}
Thus, for $\nobs \geq N_{\tmax}$, we have with probability $1-e^{-\xi}$:
\begin{equation*}
\begin{split}
\left \| \rcovt^{-1} - \ecovt^{-1} \right \| _{\HSk}  
\leq \frac{\boundrcov}{\ecoval_{\tmax} - \boundrcov} \left ( \sum_{t = 1}^{\tmax} \frac{2}{\tilde \delta_t}+\frac{\tmax}{\ecoval_\tmax} \right).
\end{split}
\end{equation*}
\end{proof}

\begin{lemma}
\label{lem:ecovbar} 
Under Assumptions~\ref{Ahomoscedasticity} and \ref{Asimple}, the operator $\ecovbar = \ecovt + \frac{1}{\nobs} \sum_{i = 1}^{\nobs}\Pxpij{i}{i} \ecovtci{i}$    is a symmetric and positive operator that admits a spectral decomposition $\ecovbar = \sum_{t = 1}^t  \ecoval_t \ecovec_t \otimes \ecovec_t  $ where the eigenvalues are in descending order, with the first ones $ \ecoval_1 > \dots >  \ecoval_\tmax $ and the corresponding eigenfunctions  $\ecovec_1,\dots,\ecovec_\tmax$ being the same as those of of $\ecovt$. Moreover, the spectral gap at $\tmax$ is strictly positive:  $\ecoval_{\tmax} - \ecoval_{\tmax+1} \geq \ecoval_{\tmax} - \mu_{\tmax +1} >0  $.
\end{lemma}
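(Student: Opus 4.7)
The plan is to decompose $\ecovbar$ into two pieces acting on orthogonal subspaces. Write $\alpha_i := \Pxpij{i}{i}$ and $R := \nobs^{-1}\sum_{i=1}^\nobs \alpha_i \ecovtci{i}$, so that $\ecovbar = \ecovt + R$. Because $\Pxp$ is an orthogonal projection, its diagonal entries satisfy $0 \leq \alpha_i \leq 1$ and $\sum_i \alpha_i = \trmtrx(\Pxp) \leq \nobs$. Each $\ecovtci{i}$ is symmetric positive by Assumption~\ref{Ahomoscedasticity}, so $R$ is symmetric positive as a non-negative linear combination of such operators; adding the symmetric positive $\ecovt$ shows that $\ecovbar$ is symmetric and positive, which gives the first part of the conclusion.

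The key structural claim is that $R(\Hk) \subseteq E^\perp$, where $E := \Span(\ecovec_1,\dots,\ecovec_\tmax)$. For each $i$, the operator $\ecov_i = \ecovt + \ecovtci{i}$ is self-adjoint and admits $\{\ecovec_s\}_{s\leq \tmax} \cup \{\ecovec_{i,t}\}_{t > \tmax}$ as a family of eigenvectors with corresponding eigenvalues $\ecoval_s$ and $\ecoval_{i,t}$. By Assumption~\ref{Asimple} these eigenvalues are strictly separated across the two families, since $\ecoval_\tmax > \mu_{\tmax+1} \geq \ecoval_{i,t}$ for every $t > \tmax$. Eigenvectors of a self-adjoint operator associated with distinct eigenvalues are mutually orthogonal, hence $\ecovec_{i,t} \in E^\perp$ and consequently $\ecovtci{i}(\Hk) \subseteq E^\perp$; summing gives $R(\Hk) \subseteq E^\perp$. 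Therefore both $E$ and $E^\perp$ are invariant under $\ecovbar$: on $E$, $\ecovbar$ agrees with $\ecovt$, while on $E^\perp$, $\ecovbar$ agrees with $R$. Assembling the spectral decompositions of $\ecovt|_E$ (which supplies the pairs $(\ecoval_t,\ecovec_t)_{t=1}^\tmax$) and of $R|_{E^\perp}$ (via the spectral theorem for self-adjoint operators) yields the full spectral decomposition of $\ecovbar$.

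The last step is to bound the eigenvalues of $R|_{E^\perp}$, which forces the ordering and the gap. For any unit vector $g \in E^\perp$,
\begin{equation*}
\langle R g, g \rangle_\Hk = \frac{1}{\nobs}\sum_{i=1}^\nobs \alpha_i \langle \ecovtci{i} g, g \rangle_\Hk \leq \mu_{\tmax+1}\cdot \frac{1}{\nobs}\sum_{i=1}^\nobs \alpha_i \leq \mu_{\tmax+1},
\end{equation*}
using that $\langle \ecovtci{i} g, g \rangle_\Hk = \sum_{t>\tmax}\ecoval_{i,t}\langle \ecovec_{i,t},g\rangle_\Hk^2 \leq \ecoval_{i,\tmax+1}\|g\|_\Hk^2 \leq \mu_{\tmax+1}$ together with $\sum_i \alpha_i \leq \nobs$. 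Hence every eigenvalue of $R|_{E^\perp}$ is at most $\mu_{\tmax+1} < \ecoval_\tmax$. This simultaneously shows that $\ecoval_1 > \dots > \ecoval_\tmax$ are the top $\tmax$ eigenvalues of $\ecovbar$ in descending order and that $\ecoval_\tmax - \ecoval_{\tmax+1} \geq \ecoval_\tmax - \mu_{\tmax+1} > 0$.

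The only non-routine ingredient is the orthogonality $\ecovec_{i,t} \perp \ecovec_s$, which crucially leverages the strict gap $\ecoval_\tmax > \mu_{\tmax+1}$ provided by Assumption~\ref{Asimple}; once this separation argument is in hand, everything else is a direct application of the spectral theorem, the invariance of $E$ and $E^\perp$, and the basic trace bound $\sum_i \alpha_i = \trmtrx(\Pxp) \leq \nobs$.
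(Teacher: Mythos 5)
Your proof is correct and takes essentially the same route as the paper's: identify that $\ecovbar$ leaves $\Span(\ecovec_1,\dots,\ecovec_\tmax)$ invariant and coincides with $\ecovt$ there, then bound $\ecovbar$ on the orthogonal complement by $\mu_{\tmax+1}$ using $\nobs^{-1}\sum_{i}\Pxpij{i}{i}\leq 1$ and $\ecoval_{i,t}\leq\mu_{\tmax+1}$; your treatment is just slightly more explicit about the invariant-subspace structure. The one caveat is that your ``derivation'' of the orthogonality $\ecovec_{i,t}\perp\ecovec_s$ presupposes that both families are eigenvectors of $\ecov_i$ with the stated eigenvalues, which is precisely the intended reading of Assumption~\ref{Ahomoscedasticity} (and the same implicit step the paper takes when it asserts that each $\ecovec_t$ remains an eigenfunction of $\ecovbar$), so nothing is gained or lost there.
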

\begin{remark}
\label{rq:spectralgapT+1}
Note that   $\ecoval_t$ depends on $n$ for $t \geq  \tmax+1 $ and it does not for $t \leq \tmax$. Thus, the spectral gap  $\delta_t = \ecoval_t - \ecoval_{t+1}$ does not depend on $n$ for $t \leq \tmax-1$. Moreover, the last spectral gap $\ecoval_{\tmax} - \ecoval_{\tmax+1}$ is lower bounded by $\ecoval_{\tmax} - \mu_{\tmax +1}$ for all $n \geq 1$. 
\end{remark}
\begin{proof}
The operator $\ecovbar$ is symmetric and positive as a linear combination of symmetric and positive operators, and because $\Pxpij{i}{i} =  \sum_{j=1}^n\Pxpij{i}{j}^2 \geq 0 $ (see Lemma \ref{lemma: sommes de pi ij}). We directly check that for $t \leq \tmax$, the eigenfunction $f_t$ for the eigenvalue $\ecoval_t$ of $\ecovt$ is also an eigenfunction of $\ecovbar$  for the eigenvalue $\ecoval_t$.
Moreover, for any $h \in \Span{(f_1, \dots, f_\tmax)}^\perp$, we have   $\| \ecovbar h  \|_\Hk \leq  \frac{1}{\nobs} \sum_{i = 1}^{\nobs}\Pxpij{i}{i} \|\ecovtci{i} h  \|_\Hk \leq  \|  h  \|_\Hk \max_{i=1 \dots n} \ecoval_{i,\tmax +1} \left(  \frac{1}{\nobs} \sum_{i = 1}^{\nobs}\Pxpij{i}{i} \right) <    \frac{\nobs - p}{\nobs}  \mu_{\tmax +1}\|  h  \|_\Hk      $. Thus  $ 0 \leq \ecoval_{t}  < \mu_{\tmax +1} <   \ecoval_\tmax$ for any $ t > \tmax +1$. In particular $\ecoval_\tmax - \ecoval_{\tmax+1} \geq \ecoval_\tmax  -  \mu_{\tmax +1} >0 $.
\end{proof}

\section{Proof of Theorem~\ref{theorem: Asymptotic distribution of the nystrom statistic}: KFDA with Nyström}

\label{App:Proof-Theo-chi2:Nystrom}
 
This section provides the proof for the consistency of the Nyström TKHL statistic, see Theorem~\ref{theorem: Asymptotic distribution of the nystrom statistic}.

\subsection{Main proof of Theorem~\ref{theorem: Asymptotic distribution of the nystrom statistic}}

Let $\Pxp = \identity_\nobs - \Px = (\Pxpij{i}{j})_{1\leq i,j \leq n} $ the orthogonal projection on $Im(\dX)^\perp$.
Under Assumptions~\ref{AhomoscedasticityN} and \ref{AsimpleN}, as for the non-Nyström case, we consider  
\begin{eqnarray*}
\ecovbar  &:=& \ecovt + \frac{1}{\nobs} \sum_{i = 1}^{\nobs}\Pxpij{i}{i} \ecovtci{i}  \\
   &=&  \ecovanc + \frac{1}{\nobs} \sum_{i = 1}^{\nobs}\Pxpij{i}{i}  \ecovancci{i},
\end{eqnarray*}
where the second equality is valid because $\tmax \leq \nanchors$. According to Lemma~\ref{lem:ecovbar}, we have that $\ecovbar$ is a symmetric and semi-definite positive operator that admits a spectral decomposition $\ecovbar = \sum_{t \geq 1}   \ecoval_t \ecovec_t \otimes \ecovec_t  $ where the eigenvalues are in descending order, with the first ones $ \ecoval_1 > \dots >  \ecoval_\nanchors $ and the corresponding eigenfunctions $\ecovec_1,\dots,\ecovec_\nanchors$ being the same as those of $\ecovanc$.

Let $I$ be a $n$-uple of indices sampled uniformly in $\{1, \dots,n\}$ to define the $\nlandmarks$ landmarks for Nyström. Let also $\lY=(\ey{i})_{i \in I} = (\ly_1,\dots,\ly_ \nlandmarks)$ be the $\nlandmarks$ landmarks and $\lX$ the associated design matrix extracted from $\dX$. Let $\lrcov$ be the landmark residual covariance operator induced by the linear model \eqref{eq:linear model on embeddings}  based on $\lY,\lX$. For this model we define (as in the non-Nyström case) the operator $\ecovbarI =\ecovanc + \frac{1}{\nlandmarks} \sum_{i=1}^q \PxpIij{i}{i} \ecovancci{i}$, where $\PxpI = ( \PxpIij{i}{j})$ is the orthogonal projection onto $\operatorname{Im}(\lX)^\perp$ in $\R^\nlandmarks$. According to Lemma~\ref{lem:ecovbar} (adapted to this model), we have that $\ecovbarI$ is a symmetric and semi-definite positive operator. It admits a spectral decomposition which first $m$ eigenvalues  $ \ecoval_1 > \dots >  \ecoval_\nanchors $ and  corresponding eigenfunctions $\ecovec_1,\dots,\ecovec_\nanchors$ are the same as those of  $\ecovanc$. 

For $s \leq \nanchors-1$, we denote by $\delta_t = (\ecoval_t - \ecoval_{t+1})/2$ the  spectral gaps of $\ecovanc$ (and of $\ecovbarI$).

We define the $\nanchors$ first unit eigenfunctions $(\anc_1,\dots,\anc_\nanchors)$ of the landmark residual covariance operator $\lrcov$ as our Nyström anchors. The Nyström residual covariance operator $\arcov$ is finally defined as $ \arcov=\frac{1}{\nobs}\sum_{i = 1}^{\nobs} \aresidual_i \otimes \aresidual_i$ 
where  $\aresidual_i$ is the orthogonal projection of $\residual_i$ onto $\Span(\anc_1,\dots,\anc_\nanchors)$. 

According to Proposition~\ref{proposition:concentration nystrom residual covariance}, when $\nlandmarks$ is large enough, we have with probability $1-2 e^{-\xi}$: 
\begin{equation*}
\left \| \arcov -  \ecovbar  \right \| _{\HSk} \leq 
\frac12 \boundrcov +  2 \nanchors  \boundepsilon^2  \boundlrcov \sum_{s = 1}^{\nanchors} \frac{1}{\widetilde {\delta}_s}.
\end{equation*}
From this, we obtain (see Proposition~\ref{prop:SigmainvN}) that with probability $1-2 e^{-\xi}$,
\begin{equation*}
 \left \| \arcovt{}^{-1} - \ecovt^{-1} \right \| _{\HSk} 
\leq \frac{\eta(n,q,\zeta)}{\ecoval_{\tmax} - \eta(n,q,\zeta)} \left ( \sum_{t = 1}^{\tmax} \frac{2}{\tilde \delta_t}+\frac{\tmax}{\ecoval_\tmax} \right)
\end{equation*}
with $\eta(n,q,\zeta) = \frac{C_1}{\sqrt n}  \left( 1 + \sqrt \zeta \right) + \frac{C_2}{\sqrt q}  \left( 1 + \sqrt \zeta \right) \sum_{s = 1}^{\nanchors} \frac{1}{\widetilde {\delta}_s}  $, 
where $C_1$ and $C_2$ are constants that depend  on $(\lambda_s)_{s\leq \nanchors}$,  $\mu_{\nanchors +1}$, $\rangx$ and $\boundepsilon$. Thus, we have
$ \left \| \arcovt{}^{-1} - \ecovt^{-1} \right \| _{\HSk} 
\leq \frac{C'}{\sqrt q} \left( 1 + \sqrt \zeta \right)$ where $C'$ is a  constant which does not depends on $q$ (and $n$). Thus $\arcovt{}^{-1}$ converges in probability to $\ecovt^{-1}$ as $q$ tends to infinity.

As the landmarks are not used to approximate $\Hmanova$, we conclude  the proof of Theorem~\ref{theorem: Asymptotic distribution of the nystrom statistic} by following the lines of the proof of Theorem \ref{theorem: Asymptotic distribution of the truncated Hotelling-Lawley test statistic}: we use the convergence in probability of $\arcovt{}^{-1}$ (instead of the these of $\rcovt^{-1}$) together with Proposition~\ref{proposition: Distrib asymptotique de la stat alternative}.

\subsection{Probability bound on the Nyström residual covariance}

\begin{proposition}
\label{proposition:concentration nystrom residual covariance}
Under Assumptions~\ref{AhomoscedasticityN}, \ref{AsimpleN}, \ref{Aboundedkernel} and \ref{Aboundeddesign}, when $\nlandmarks$ is large enough, we have with probability $1-2 e^{-\xi}$:
\begin{equation*}
\begin{split}
\left \| \arcov -  \ecovbar  \right \| _{\HSk} \leq 
\frac12 \boundrcov +  2 \nanchors  \boundepsilon^2  \boundlrcov \sum_{s = 1}^{\nanchors} \frac{1}{\widetilde {\delta}_s} := \eta(n,q,\zeta)
\end{split}
\end{equation*}
where $\zeta$ is  defined in Proposition~\ref{proposition: Inegalite de concentration sur sigma epsilon emprique et theorique}, and where $\widetilde {\delta}_s = \delta_s \wedge \delta_{s-1}$ for $s < \nanchors$ and $\delta_\nanchors := (\ecoval_{\nanchors} - \mu_{\nanchors +1})/2$.
\end{proposition}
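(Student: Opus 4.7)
The plan is to bound $\|\arcov-\ecovbar\|_{\HSk}$ by decomposing the error through an intermediate ``ideal'' projection and controlling two independent sources: (i) the concentration of the full-sample residual covariance $\rcov$ toward $\ecovbar$, and (ii) the quality of the random Nyström subspace $\Span(\anc_1,\dots,\anc_\nanchors)$ as an approximation of the top-$\nanchors$ eigenspace of $\ecovbar$. Both are controlled by Proposition~\ref{proposition: Inegalite de concentration sur sigma epsilon emprique et theorique} applied at two scales (the full sample of size $\nobs$ and the landmark sub-sample of size $\nlandmarks$), and the factor $1-2e^{-\xi}$ in the statement comes from a union bound over the two associated high-probability events.

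Concretely, writing $\arcov = \Panc \rcov \Panc$ with $\Panc$ the orthogonal projection onto $\Span(\anc_1,\dots,\anc_\nanchors)$, and introducing $\Panc^{\ecov}$, the orthogonal projection onto the top-$\nanchors$ eigenspace of $\ecovbar$ (which by Lemma~\ref{lem:ecovbar} coincides with $\Span(\ecovec_1,\dots,\ecovec_\nanchors)$), I would start from the triangle inequality
\[
\|\arcov-\ecovbar\|_{\HSk} \leq \|\Panc\rcov\Panc - \Panc^{\ecov}\rcov\Panc^{\ecov}\|_{\HSk} + \|\Panc^{\ecov}\rcov\Panc^{\ecov}-\ecovbar\|_{\HSk}.
\]
The second summand is rewritten as $\|\Panc^{\ecov}(\rcov-\ecovbar)\Panc^{\ecov}+(\Panc^{\ecov}\ecovbar\Panc^{\ecov}-\ecovbar)\|_{\HSk}$ and bounded using Proposition~\ref{proposition: Inegalite de concentration sur sigma epsilon emprique et theorique} ($\|\rcov-\ecovbar\|_{\HSk}\leq\boundrcov$ with probability $1-e^{-\xi}$) together with Lemma~\ref{lem:ecovbar}, which guarantees that $\ecovbar$ is block-diagonal in the $(\Panc^{\ecov},\Panc^{\ecov,\perp})$ decomposition. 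The first summand is upper-bounded by $2\|\Panc-\Panc^{\ecov}\|_{\HSk}\cdot\|\rcov\|_{\mathrm{op}}$, where $\|\rcov\|_{\mathrm{op}}\leq\boundepsilon^2$ follows from Lemma~\ref{lemma: Borne sur la norme des residus}.

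For $\|\Panc-\Panc^{\ecov}\|_{\HSk}$ I would apply Proposition~\ref{proposition: Inegalite de concentration sur sigma epsilon emprique et theorique} to the landmark linear model, yielding $\|\lrcov-\ecovbarI\|_{\HSk}\leq\boundlrcov$ with probability at least $1-e^{-\xi}$, and observe via Lemma~\ref{lem:ecovbar} in the landmark setting that $\ecovbarI$ shares its top-$\nanchors$ eigenfunctions and eigenvalues with $\ecovanc$. Assumption~\ref{AsimpleN} provides strictly positive spectral gaps $\widetilde\delta_s$, so provided $\nlandmarks$ is large enough (the exact analogue of the condition $\nobs\geq N_\tmax$ used in Proposition~\ref{lemma: Borne exponentielle sur les inverses des projecteurs}), Theorem~\ref{theorem: Theorem 2 from Zwald and Blanchard} (Zwald--Blanchard) applied direction-wise gives $\|\anc_s\otimes\anc_s-\ecovec_s\otimes\ecovec_s\|_{\HSk}\leq 2\boundlrcov/\widetilde\delta_s$, and summation yields $\|\Panc-\Panc^{\ecov}\|_{\HSk}\leq\sum_{s=1}^{\nanchors}2\boundlrcov/\widetilde\delta_s$. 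Reassembling all estimates on the intersection of the two high-probability events and tidying constants produces exactly $\tfrac12\boundrcov+2\nanchors\boundepsilon^2\boundlrcov\sum_{s=1}^{\nanchors}1/\widetilde\delta_s$.

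The hard part is the careful interplay between the two non-commuting projections $\Panc,\Panc^{\ecov}$ and the tail component $\Panc^{\ecov,\perp}\ecovbar\Panc^{\ecov,\perp}$ of $\ecovbar$ that lies outside the range of $\arcov$: the above decomposition succeeds only because Lemma~\ref{lem:ecovbar} ensures that $\ecovbar$ is block-diagonal with respect to $\Panc^{\ecov}$, so that this tail can be absorbed into $\|\rcov-\ecovbar\|_{\HSk}$ rather than appearing as a separate non-vanishing residual. A secondary technicality is making explicit the ``$\nlandmarks$ large enough'' threshold under which Zwald--Blanchard can be invoked on the landmark side, which follows the same template as the non-Nyström argument in Proposition~\ref{lemma: Borne exponentielle sur les inverses des projecteurs}.
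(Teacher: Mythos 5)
Your proposal follows essentially the same route as the paper's proof: the same intermediate operator $\Panc^{\ecov}\rcov\Panc^{\ecov}$ (the paper's $\arcovbis$), the same splitting into a full-sample concentration term and a landmark-driven subspace term, the same use of Proposition~\ref{proposition: Inegalite de concentration sur sigma epsilon emprique et theorique} at both scales ($\nobs$ and $\nlandmarks$) combined with Theorem~\ref{theorem: Theorem 2 from Zwald and Blanchard} applied direction-wise, and the same union bound giving $1-2e^{-\xi}$. The only cosmetic difference is that you bound the first summand by $2\left\|\Panc-\Panc^{\ecov}\right\|_{\HSk}\left\|\rcov\right\|_{\mathrm{op}}$, whereas the paper expands the rank-one differences term by term; both reduce to controlling $\sum_{s=1}^{\nanchors}\left\|\pranc{s}-\precov{s}\right\|_{\HSk}$ and yield the stated bound up to constants.

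One step in your argument does not hold up, however. Block-diagonality of $\ecovbar$ with respect to $\Panc^{\ecov}$ gives $\Panc^{\ecov}\ecovbar\Panc^{\ecov}-\ecovbar=-\nobs^{-1}\sum_{i=1}^{\nobs}\Pxpij{i}{i}\,\ecovancci{i}$, a fixed positive operator whose Hilbert--Schmidt norm is \emph{not} controlled by $\left\|\rcov-\ecovbar\right\|_{\HSk}$ and does not vanish as $\nobs,\nlandmarks\to\infty$; so the claim that this tail ``can be absorbed into $\left\|\rcov-\ecovbar\right\|_{\HSk}$'' is unjustified. To be fair, this is not a defect of your proof relative to the paper's: the paper's own proof silently drops the same term, bounding $\left\|\arcovbis-\ecovbar\right\|_{\HSk}$ as if $\ecovbar$ were equal to $\ecovanc=\Panc^{\ecov}\ecovbar\Panc^{\ecov}$. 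The statement is really only provable with $\ecovanc$ in place of $\ecovbar$ on the left-hand side, which is what the downstream perturbation argument (Proposition~\ref{prop:SigmainvN}) actually needs, since $\ecovanc$ and $\ecovbar$ share their first $\nanchors$ eigenpairs and the relevant spectral gaps are at least as large. You should either prove the bound for $\left\|\arcov-\ecovanc\right\|_{\HSk}$ or carry the tail's norm as a separate, non-vanishing term.
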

\begin{proof}
For $s\in\{ 1, \dots, \nanchors \}$, we denote by $\precov{s}$ and $\pranc{s}$ the orthogonal projections onto the one-dimensional axis supported by $\ecovec_s$ and $\anc_s$ respectively. Then we have $\precov{s}=\ecovec_s\otimes\ecovec_s$, $\pranc{s} = \anc_s\otimes\anc_s$ and by definition of $\arcov$:
\begin{equation*}
\begin{split}
\arcov = \frac{1}{\nobs	}\sum_{i = 1}^{\nobs} \Big(\sum_{s = 1}^{\nanchors}\pranc{s}\residual_i\Big) \otimes \Big(\sum_{s^\prime = 1}^{\nanchors}\pranc{s^\prime}\residual_i\Big).
\end{split}
\end{equation*}
We also define :
\begin{equation*}
\begin{split}
\arcovbis = \frac{1}{\nobs} \sum_{i = 1}^{\nobs} \Big(\sum_{s = 1}^{\nanchors}\precov{s}\residual_i\Big) \otimes \Big(\sum_{s^\prime = 1}^{\nanchors}\precov{s^\prime}\residual_i\Big) ,
\end{split}
\end{equation*}
and we have
\begin{equation}
    \label{eq:decompNys}
\left \| \arcov -  \ecovbar \right \| _{\HSk}  \leq \left \| \arcov - \arcovbis \right \| _{\HSk}  + \left \| \arcovbis - \ecovbar \right \| _{\HSk}.
\end{equation}
First, for $i\in\{ 1, \dots, \nobs \}$ and $s\in\{ 1, \dots, \nanchors \}$, we have : 
\begin{equation*}
\begin{split}
(\pranc{s} \residual_i) \otimes (\pranc{s^\prime} \residual_i) - (\precov{s} \residual_i) \otimes (\precov{s^\prime} \residual_i)  
=& \big((\pranc{s} - \precov{s}) \residual_i \big)\otimes (\pranc{s^\prime} \residual_i) + (\precov{s} \residual_i) \otimes \big((\pranc{s^\prime} - \precov{s^\prime}) \residual_i\big)
\end{split}
\end{equation*}
According to the triangular inequality, we have :
\begin{eqnarray*}
\left \| \arcov - \arcovbis \right \| _{\HSk} 
& \leq & \frac{1}{\nobs} \sum_{i = 1}^{\nobs} \sum_{s,s^\prime = 1}^{\nanchors} 
\left \| \big((\pranc{s} - \precov{s}) \residual_i \big)\otimes (\pranc{s^\prime} \residual_i) 
\right \| _{\HSk} 
\\ & &+ \left \|
(\precov{s} \residual_i) \otimes \big((\pranc{s^\prime} - \precov{s^\prime}) \residual_i\big)   \right \| _{\HSk} 
\\
&\leq&  
\frac{1}{\nobs} \sum_{i = 1}^{\nobs} \sum_{s,s^\prime = 1}^{\nanchors}
\left \| \pranc{s} - \precov{s} \right \| _{\HSk} \left \| \pranc{s^\prime} \right \| _{\HSk} \left \| \residual_i \right \| _{\Hk}^2 
\\ & &+ \left \| \pranc{s^\prime} - \precov{s^\prime} \right \| _{\HSk} \left \| \precov{s} \right \| _{\HSk} \left \| \residual_i \right \| _{\Hk}^2  \\
&\leq&  2  \nanchors  \boundepsilon^2 \sum_{s = 1}^{\nanchors}\left \|  \pranc{s} - \precov{s}  \right \| _{\HSk}  .
\end{eqnarray*} 
because $  \| \precov{s} \| _{\HSk} =  \| \pranc{s}   \| _{\HSk} = 1$ and $\| \residual_i  \| \leq \boundepsilon  $, see Lemma \ref{lemma: Borne sur la norme des residus}. 

 We now  apply Proposition~\ref{proposition: Inegalite de concentration sur sigma epsilon emprique et theorique} to $\lrcov$ with the sample $\lY$ which associated design matrix design is $\lX$.  Assumption \ref{AhomoscedasticityN} is still satisfied for $\lY$, which corresponds to  Assumption \ref{Ahomoscedasticity} for a truncate parameter $\nanchors$ (instead of $\tmax$). Assumptions   \ref{Aboundedkernel} and \ref{Aboundeddesign} are also obviously valid. Thus, conditionally to the Nyström sample $I$, with probability larger than $1-e^{-\xi}$,  
 \begin{equation*}
 \left \| \lrcov - \ecovbarI \right \| _{\HSk}  \leq 
 \frac{\nx }{\nlandmarks} \left \| \ecovanc \right \| _{\HSk}
 +   \frac{\boundepsilon^2}{\sqrt \nlandmarks} \left(1 +  4   \sqrt{\frac{3 \xi}{2} } \right) =: \boundlrcov .
 \end{equation*}
Note that this upper bound does not depends on the Nyström sample $I$ and thus the probability bound is also valid after intergrating according to the distribution of $I$.

The $\nanchors$ first eigenvalues of  $\ecovbarI$ being simple, following the first lines of the proof of Proposition~\ref{lemma: Borne exponentielle sur les inverses des projecteurs}, there exists $  Q_\nanchors$ such for $q \geq Q_\nanchors$, the event 
\begin{equation*}
 \tilde A  = \left\{ \| \lrcov - \ecovbarI   \| _{\HSk} \leq \boundlrcov \wedge \min_{t =1 \dots \tmax }  \delta_t \right\}  
\cap \left\{ \rcoval_\tmax >0\right\}  \cap \left\{ \rcoval_1 > \dots > \rcoval_\tmax > \rcoval_{\tmax+1} \right\}
\end{equation*}
is satisfied with probability at least $1-e^{-\xi}$. Regarding the last spectral gap,  according to Lemma~\ref{lem:ecovbar}, the last spectral gap $\ecoval_{\nanchors} - \ecoval_{\nanchors+1}$ is lower bounded by $\ecoval_{\nanchors} - \mu_{\nanchors +1}$ for all $q \geq 1$, see Remark~\ref{rq:spectralgapT+1}.

Following the first lines of the proof of Proposition~\ref{lemma: Borne exponentielle sur les inverses des projecteurs}, the eigenvalues of $\ecovbarI$ being simple, we find that Theorem~\ref{theorem: Theorem 2 from Zwald and Blanchard} can be applied on an event of probability at least $1-e^{-\xi}$ and for $q$ large enough:  for $s \leq \nanchors$,
\begin{equation*}
\left \|  \pranc{s} - \precov{s}  \right \| _{\HSk}\leq \frac{2 \left \| \ecovbarI  - \lrcov \right \| _{\HSk}}{\widetilde {\delta}_s}  
\end{equation*}
where  $\widetilde {\delta}_s = \delta_s \wedge \delta_{s-1}$, and where we take $\delta_\nanchors := (\ecoval_{\nanchors} - \mu_{\nanchors +1})/2$. Thus, for $\nlandmarks$  large enough, we have that on the event $\tilde A$: 
\begin{equation}
\label{eq:Terme1}
\left \| \arcov - \arcovbis \right \| _{\HSk} 
\leq  4  \nanchors  \boundepsilon^2  \boundlrcov \sum_{s = 1}^{\nanchors} \frac{1}{\widetilde {\delta}_s} .
\end{equation} 

Regarding the second term in the upper bound \eqref{eq:decompNys}, we have $\precov{s} = \ecovec_s \otimes \ecovec_s$, which  leads to : 
\begin{equation*}
\begin{split}
\arcovbis 
=& \frac{1}{\nobs} \sum_{i = 1}^{\nobs} \sum_{s,s^\prime = 1}^{\nanchors} (\precov{s} \residual_i)\otimes (\precov{s^\prime} \residual_i) 
\\=& \frac{1}{\nobs}  \sum_{i = 1}^{\nobs} \sum_{s,s^\prime = 1}^{\nanchors} 
\left \langle \ecovec_s , \residual_i \right \rangle_{\Hk} \left \langle \ecovec_{s^\prime} , \residual_i \right \rangle_{\Hk} \ecovec_s \otimes \ecovec_{s^\prime}
\\=& 
\sum_{s,s^\prime = 1}^{\nanchors} \left \langle \ecovec_s , \rcov \ecovec_{s^\prime} \right \rangle_{\Hk} \ecovec_s \otimes \ecovec_{s^\prime}.
\end{split}
\end{equation*}
We also have that for $\ecovbar$:
\begin{equation*}
\begin{split}
\sum_{s,s^\prime = 1}^{\nanchors} \left \langle \ecovec_s , \ecovbar \ecovec_{s^\prime} \right \rangle_{\Hk} \ecovec_s \otimes \ecovec_{s^\prime}
=& \sum_{s,s^\prime = 1}^{\nanchors}\ecoval_{s^\prime} \left \langle \ecovec_s , \ecovec_{s^\prime} \right \rangle_{\Hk}\ecovec_s \otimes \ecovec_{s^\prime}
\\=& \sum_{s = 1}^{\nanchors}\ecoval_s \ecovec_s \otimes \ecovec_s
\\=& \ecov_\nanchors.
\end{split}
\end{equation*} 
Finally, we obtain:
\begin{equation*}
\begin{split}
\left \| \arcovbis - \ecovbar \right \| _{\HSk}
=& \sum_{s,s^\prime = 1}^{\nanchors} \left \langle \ecovec_s  , (\rcov - \ecovbar) \ecovec_{s^\prime} \right \rangle_{\Hk}\ecovec_s \otimes \ecovec_{s^\prime}
\\ \leq&
\sum_{s,s^\prime = 1}^{\nanchors} \left \| \rcov - \ecovbar \right \| _{\HSk}\left \| \ecovec_s \right \| _{\Hk}^2 \left \| \ecovec_{s^\prime} \right \| _{\Hk}^2
\\ \leq& 
\nanchors^2  
\left \| \rcov - \ecovbar \right \| _{\HSk}.
\end{split}
\end{equation*}
We then apply Proposition \ref{proposition: Inegalite de concentration sur sigma epsilon emprique et theorique} (for all the data) to control this term  and combine it  with \eqref{eq:Terme1} to conclude the proof of the proposition.
\end{proof}

\subsection{Probability bound on the inverse Nyström residual covariance}

\begin{proposition}
\label{prop:SigmainvN}
Assume that Assumptions \ref{AhomoscedasticityN}, \ref{AsimpleN} \ref{Aboundedkernel} and \ref{Aboundeddesign}  are verified. When $q$ and $n$ are large enough, with probability $1-2e^{-\xi}$:
\begin{equation*}
 \left \| \arcovt{}^{-1} - \ecovt^{-1} \right \| _{\HSk} 
\leq \frac{\eta(n,q,\zeta)}{\ecoval_{\tmax} - \eta(n,q,\zeta)} \left ( \sum_{t = 1}^{\tmax} \frac{2}{\tilde \delta_t}+\frac{\tmax}{\ecoval_\tmax} \right), 
\end{equation*}
where $\eta(n,q,\zeta)$ is given in Proposition~\ref{proposition:concentration nystrom residual covariance}. 
\end{proposition}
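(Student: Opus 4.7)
The plan is to mirror the proof of Proposition~\ref{lemma: Borne exponentielle sur les inverses des projecteurs}, substituting $\arcov$ for $\rcov$ and using Proposition~\ref{proposition:concentration nystrom residual covariance} as the concentration input. A preliminary point is that since $\tmax \leq \nanchors$, $\arcov$ is a finite-rank self-adjoint positive operator whose image lies in $\Span(\anc_1,\dots,\anc_\nanchors)$, so that $\arcovt$ and its generalized inverse $\arcovt{}^{-1}$ are well defined as soon as the first $\tmax$ eigenvalues of $\arcov$ are distinct and strictly positive. By Lemma~\ref{lem:ecovbar}, $\ecovbar$ is self-adjoint positive with its first $\nanchors$ eigenvalues $\ecoval_1 > \dots > \ecoval_\nanchors$ being those of $\ecovanc$, and with strictly positive spectral gaps $\widetilde{\delta}_t = \delta_t \wedge \delta_{t-1}$ for $t \leq \tmax$, where $\delta_\nanchors = (\ecoval_\nanchors - \mu_{\nanchors+1})/2$ is bounded away from zero uniformly in $n$ thanks to Assumption~\ref{AsimpleN}.

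Next, I would fix $n$ and $q$ large enough so that, on the event $A$ of probability at least $1-2e^{-\xi}$ furnished by Proposition~\ref{proposition:concentration nystrom residual covariance}, one has $\|\arcov - \ecovbar\|_{\HSk} \leq \eta(n,q,\zeta) \wedge \min_{t \leq \tmax} \delta_t$ together with $\eta(n,q,\zeta) < \ecoval_\tmax$. The Hoffman--Wielandt inequality~\eqref{eq:bhatia_hoffman-wielandt} then yields $\max_t |\arcoval{t} - \ecoval_t| \leq \eta(n,q,\zeta)$, from which $\arcoval{1} > \dots > \arcoval{\tmax} > 0$ and $\arcoval{t} \geq \ecoval_\tmax - \eta(n,q,\zeta)$ for $t \leq \tmax$. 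Moreover Theorem~\ref{theorem: Theorem 2 from Zwald and Blanchard} applies on $A$ to each spectral projector, giving
\[
\|\prarcov{t} - \precov{t}\|_{\HSk} \leq \frac{2\,\eta(n,q,\zeta)}{\widetilde{\delta}_t}, \qquad t = 1,\dots,\tmax.
\]

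The main computation then follows the decomposition
\[
\arcovt{}^{-1} - \ecovt^{-1} = \sum_{t=1}^{\tmax} \arcoval{t}^{-1}\bigl(\prarcov{t} - \precov{t}\bigr) + \sum_{t=1}^{\tmax}\bigl(\arcoval{t}^{-1} - \ecoval_t^{-1}\bigr)\precov{t},
\]
and the triangle inequality in $\HSk$. The first sum is bounded using $\arcoval{t}^{-1} \leq (\ecoval_\tmax - \eta(n,q,\zeta))^{-1}$ and the projector bound above, producing the term $\frac{\eta(n,q,\zeta)}{\ecoval_\tmax - \eta(n,q,\zeta)}\sum_{t\leq\tmax} \frac{2}{\widetilde{\delta}_t}$. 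The second sum is handled via $|\arcoval{t}^{-1} - \ecoval_t^{-1}| = \arcoval{t}^{-1}\ecoval_t^{-1}|\arcoval{t} - \ecoval_t| \leq \ecoval_\tmax^{-1} (\ecoval_\tmax - \eta(n,q,\zeta))^{-1}\eta(n,q,\zeta)$ together with $\|\precov{t}\|_{\HSk} = 1$, yielding $\frac{\tmax}{\ecoval_\tmax}\cdot \frac{\eta(n,q,\zeta)}{\ecoval_\tmax - \eta(n,q,\zeta)}$. Summing gives the announced bound on the event $A$.

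The main obstacle is purely bookkeeping: one must verify that the threshold beyond which $\eta(n,q,\zeta) < \ecoval_\tmax$ and the spectral gaps of $\ecovbar$ are strictly positive can be chosen independently of $n$ and $q$, so that the perturbation bounds of Theorem~\ref{theorem: Theorem 2 from Zwald and Blanchard} remain usable as both $n,q \to \infty$. This is precisely what Lemma~\ref{lem:ecovbar} combined with Assumption~\ref{AsimpleN} provides, via the uniform lower bound $\ecoval_\nanchors - \mu_{\nanchors+1} > 0$; beyond this verification, the argument is a routine adaptation of the non-Nyström case.
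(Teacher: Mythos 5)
Your proposal is correct and follows essentially the same route as the paper: the paper's own proof of this proposition is a one-line reference ("follow exactly the lines of Proposition~\ref{lemma: Borne exponentielle sur les inverses des projecteurs}, starting from Proposition~\ref{proposition:concentration nystrom residual covariance} instead"), and what you write out — the high-probability event, Hoffman--Wielandt for the eigenvalues, Theorem~\ref{theorem: Theorem 2 from Zwald and Blanchard} for the spectral projectors, and the two-term decomposition of $\arcovt{}^{-1} - \ecovt^{-1}$ — is precisely that argument with $\eta(n,q,\zeta)$ in place of $\boundrcov$. Your closing remark on the uniform positivity of the spectral gaps via Lemma~\ref{lem:ecovbar} and Assumption~\ref{AsimpleN} is the right justification for the "for $q$ and $n$ large enough" clause.
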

\begin{proof}
We follow exactly the lines of Proposition~\ref{lemma: Borne exponentielle sur les inverses des projecteurs}, but starting from Proposition~\ref{proposition:concentration nystrom residual covariance} instead of Proposition~\ref{proposition: Inegalite de concentration sur sigma epsilon emprique et theorique}. 
\end{proof}

\section{Auxiliary results}

\subsection{McDiarmid Inequality}

\begin{theorem}[McDiarmid inequality \cite{mcdiarmid_method_1989}]
\label{theorem: McDiarmid}
If $\dy_1,\dots,\dy_\nobs$ are   i.i.d. random variables in a measurable space $\sY$ and the function
\begin{eqnarray*}
&\sY^\nobs &\longrightarrow \R
\\f:&\fy_1,\dots,\fy_\nobs &\longmapsto f(\fy_1,\dots,\fy_\nobs)
\end{eqnarray*}
is such that for all $i \in \{ 1, \dots, \nobs \}$, we have:
\begin{equation*}
\begin{split}
\sup_{\fy_1,\dots,\fy_\nobs,\fy_{i^\prime}\in\sY} \big|f(\fy_1,\dots,\fy_i,\dots,\fy_\nobs)-f(\fy_1,\dots,\fy_{i^\prime},\dots,\fy_\nobs) \big| \leq c_i.
\end{split}
\end{equation*}
Then we have with probability lower than $e^{-\xi}$ that:
\begin{equation*}
\begin{split}
f(\fy_1,\dots,\fy_\nobs) - \E\big(f(\fy_1,\dots,\fy_\nobs)\big) \geq \sqrt{\frac{\xi}{2} \sum_{i = 1}^{\nobs}c_i^2},
\end{split}
\end{equation*}
\end{theorem}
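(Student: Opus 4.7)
The plan is to follow the standard Doob-martingale / Chernoff strategy that reduces a bounded differences concentration statement to a sub-Gaussian tail bound for a sum of bounded martingale differences. First I would construct the Doob martingale associated with $f$: set $Z_0 = \mathbb{E}[f(Y_1,\dots,Y_n)]$ and, for $1\leq i \leq n$, $Z_i = \mathbb{E}[f(Y_1,\dots,Y_n)\mid Y_1,\dots,Y_i]$, so that $Z_n = f(Y_1,\dots,Y_n)$ almost surely. Setting $D_i := Z_i - Z_{i-1}$ yields a martingale difference sequence with respect to the filtration generated by the $Y_i$, and the telescoping identity $f - \mathbb{E}[f] = \sum_{i=1}^{n} D_i$.

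Next I would verify that, conditionally on $Y_1,\dots,Y_{i-1}$, the variable $D_i$ almost surely lies in an interval of length $c_i$. Thanks to the independence of the $Y_j$ and Fubini, one can write $Z_i = g_i(Y_i)$ and $Z_{i-1} = \mathbb{E}_{Y_i'}[g_i(Y_i')]$ with $g_i(y) = \mathbb{E}\!\left[f(Y_1,\dots,Y_{i-1},y,Y_{i+1},\dots,Y_n)\mid Y_1,\dots,Y_{i-1}\right]$. The bounded differences hypothesis then transfers directly: $\sup_{y,y'} |g_i(y) - g_i(y')| \leq c_i$, so that $D_i \in [A_i, A_i + c_i]$ for some $\sigma(Y_1,\dots,Y_{i-1})$-measurable $A_i$.

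I would then apply a Chernoff bound: for any $\lambda > 0$,
\begin{equation*}
\mathbb{P}\bigl(f - \mathbb{E}[f] \geq t\bigr) \leq e^{-\lambda t}\, \mathbb{E}\!\left[\exp\!\Bigl(\lambda \sum_{i=1}^n D_i\Bigr)\right].
\end{equation*}
Iteratively conditioning on $\sigma(Y_1,\dots,Y_{i-1})$ and invoking Hoeffding's lemma (a centered random variable supported on an interval of length $c$ has conditional moment generating function bounded by $\exp(\lambda^2 c^2/8)$) produces the sub-Gaussian estimate
\begin{equation*}
\mathbb{E}\!\left[\exp\!\Bigl(\lambda \sum_{i=1}^n D_i\Bigr)\right] \leq \exp\!\Bigl(\tfrac{\lambda^2}{8}\sum_{i=1}^n c_i^2\Bigr).
\end{equation*}
Optimizing in $\lambda$ (the choice $\lambda = 4t/\sum_i c_i^2$ is optimal) yields $\mathbb{P}(f - \mathbb{E}[f] \geq t) \leq \exp\!\bigl(-2t^2/\sum_{i=1}^n c_i^2\bigr)$, and substituting $t = \sqrt{(\xi/2)\sum_i c_i^2}$ gives exactly the bound $e^{-\xi}$ of the statement.

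The main obstacle is the conditional Hoeffding step: one must justify, uniformly almost surely over $\sigma(Y_1,\dots,Y_{i-1})$, that $D_i$ is bounded in a deterministic-length window $[A_i, A_i + c_i]$ so that Hoeffding's lemma applies path-by-path, and only then take expectations via the tower property. Everything else (the telescoping, the Chernoff inequality, and the optimization over $\lambda$) is routine; the subtlety lies in transferring the pointwise bounded differences hypothesis into a uniform conditional range for the Doob increments.
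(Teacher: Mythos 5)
Your proof is correct: the Doob-martingale decomposition, the conditional range bound for the increments, Hoeffding's lemma, and the Chernoff optimization (with $t=\sqrt{(\xi/2)\sum_i c_i^2}$ giving exactly $e^{-\xi}$) constitute the standard argument for the bounded differences inequality. The paper does not prove this statement itself but cites it from McDiarmid (1989), and your argument is precisely the classical one given there, so there is nothing to compare beyond noting the match.
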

and we also have with probability lower than $e^{-\xi}$ that:
\begin{equation*}
\begin{split}
\E\big(f(\fy_1,\dots,\fy_\nobs)\big) - f(\fy_1,\dots,\fy_\nobs)  \geq \sqrt{\frac{\xi}{2} \sum_{i = 1}^{\nobs}c_i^2}.
\end{split}
\end{equation*}

\subsection{Perturbation bounds}

In this section, we use some results from operator perturbations theory to obtain an exponential bound on a truncated inverse of $ \rcov$, that is used in the proof of Theorem \ref{theorem: Asymptotic distribution of the truncated Hotelling-Lawley test statistic}. We apply the following result from   \cite{zwald_convergence_2005} and \cite{koltchinskii_random_2000}  rewritten in the particular case where only the $T$ first eigenvalues are assumed to be simple.
  
\begin{theorem}[Theorem 2 from \cite{zwald_convergence_2005}, Lemma 5.2 from \cite{koltchinskii_random_2000}]
\label{theorem: Theorem 2 from Zwald and Blanchard}
Let $A \in \HSk$ be a symmetric positive operator such that its   $T$ first eigenvalues are positive and simple: $\ecoval_1>\ecoval_2>\dots \ecoval_\tmax > \ecoval_{\tmax+1} \geq \ecoval_{\tmax+1}  \geq \ldots \geq 0$. 
The spectral gaps of $A$ at dimension  $t$ are denoted $\delta_t = (\ecoval_t - \ecoval_{t+1})/2$  and $\widetilde {\delta}_t = \delta_t \wedge \delta_{t-1}$. Let $B \in \HSk$ a symmetric operator such that $A+B$ is a positive operator  such that the $T$ first eigenvalues are also positive and simple. If for some $t \leq \tmax$  we have $\left \| B \right \| _{\HSk} \leq    \widetilde{\delta}_t/2$, then 
\begin{equation*}
\begin{split}
\left \| \Pi_t(A) - \Pi_t(A+B) \right \| _{\HSk} \leq \frac{2 \left \| B \right \| _{\HSk}}{\widetilde {\delta}_t},
\end{split}
\end{equation*}
where $\Pi_t(A)$ denotes the orthogonal projector onto the one-dimensional subspace of $\Hk$ spanned by the $t^{th}$ eigenfunction of $A$. 
\end{theorem}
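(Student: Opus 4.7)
The plan is to prove this by the Riesz projection (resolvent) formula, which is the standard route for Davis--Kahan / $\sin\Theta$-type bounds in Hilbert--Schmidt norm. The key observation is that under the assumption $\|B\|_{\HSk} \leq \widetilde{\delta}_t/2$, one can find a contour in $\mathbb{C}$ that isolates exactly the eigenvalue $\lambda_t$ from the rest of the spectrum of $A$ \emph{and} from the spectrum of $A+B$, and along this contour the resolvents of both operators are uniformly bounded.

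Concretely, I would first fix the contour $\Gamma_t$ to be the circle in $\mathbb{C}$ of radius $\widetilde{\delta}_t$ centered at $\lambda_t$. By definition of $\widetilde{\delta}_t = \delta_t \wedge \delta_{t-1}$, this circle separates $\lambda_t$ from every other eigenvalue of $A$, and $\mathrm{dist}(z,\mathrm{spec}(A)) \geq \widetilde{\delta}_t$ for every $z \in \Gamma_t$. A Weyl/Hoffman--Wielandt-type inequality (as used in \eqref{eq:bhatia_hoffman-wielandt} in this paper) combined with $\|B\|_{\HSk} \leq \widetilde{\delta}_t/2$ guarantees that $\Gamma_t$ also encloses exactly one eigenvalue of $A+B$, so that the Riesz formulas
\begin{equation*}
\Pi_t(A) = -\frac{1}{2\pi i} \oint_{\Gamma_t}(A - z I)^{-1}\,dz, \qquad \Pi_t(A+B) = -\frac{1}{2\pi i} \oint_{\Gamma_t}(A+B - z I)^{-1}\,dz
\end{equation*}
both hold and produce rank-one projectors on the respective eigenspaces.

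Next I would apply the second resolvent identity $(A+B-zI)^{-1} - (A-zI)^{-1} = -(A+B-zI)^{-1}\,B\,(A-zI)^{-1}$ to write the difference of projectors as a single contour integral of a triple product. For $z \in \Gamma_t$ the operator-norm bound $\|(A-zI)^{-1}\|_{\mathrm{op}} \leq 1/\widetilde{\delta}_t$ is immediate. For the perturbed resolvent I would factor $A+B-zI = (A-zI)(I + (A-zI)^{-1}B)$ and use a Neumann series: the smallness condition $\|B\|_{\HSk} \leq \widetilde{\delta}_t/2$ implies $\|(A-zI)^{-1}B\|_{\mathrm{op}} \leq 1/2 < 1$, so $(I+(A-zI)^{-1}B)^{-1}$ exists with norm at most $2$, giving $\|(A+B-zI)^{-1}\|_{\mathrm{op}} \leq 2/\widetilde{\delta}_t$.

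Finally, using the standard inequality $\|CBD\|_{\HSk} \leq \|C\|_{\mathrm{op}}\|B\|_{\HSk}\|D\|_{\mathrm{op}}$ together with the arc length $|\Gamma_t| = 2\pi\widetilde{\delta}_t$, the contour bound becomes
\begin{equation*}
\left\|\Pi_t(A+B) - \Pi_t(A)\right\|_{\HSk}
\leq \frac{1}{2\pi}\cdot 2\pi\widetilde{\delta}_t \cdot \frac{2}{\widetilde{\delta}_t}\cdot \|B\|_{\HSk}\cdot \frac{1}{\widetilde{\delta}_t}
= \frac{2\|B\|_{\HSk}}{\widetilde{\delta}_t},
\end{equation*}
which is exactly the claim. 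The main technical obstacle is the justification that $\Gamma_t$ also isolates a simple eigenvalue of $A+B$ (so that $\Pi_t(A+B)$ is truly rank one and coincides with the Riesz projector along $\Gamma_t$); this is where the hypothesis $\|B\|_{\HSk} \leq \widetilde{\delta}_t/2$ enters crucially through Weyl's perturbation inequality, and it is also what makes the Neumann series converge uniformly on $\Gamma_t$.
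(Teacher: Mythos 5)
Your argument is correct: the contour of radius $\widetilde{\delta}_t$ about $\ecoval_t$ does isolate a single simple eigenvalue of both $A$ and $A+B$ under the hypothesis $\|B\|_{\HSk}\leq\widetilde{\delta}_t/2$, the resolvent bounds $1/\widetilde{\delta}_t$ and $2/\widetilde{\delta}_t$ are right, and the second-resolvent-identity computation yields exactly the stated constant. The paper does not prove this statement itself — it imports it from Zwald--Blanchard and Koltchinskii--Gin\'e — and your Riesz-projector argument is essentially the proof given in those references, so there is nothing to flag.
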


\subsection{Results on Orthogonal Projectors}

\begin{lemma}
\label{lemma: sommes de pi ij}
Let $\Px = (\Pxij{i}{j})_{i,j \in \{ 1, \dots, \nobs \}} \in \mtrx{\nobs}$ be the matrix of an orthogonal projector of rank $\rangx$, then we have:
\begin{equation*}
\begin{split}
&\sum_{i = 1}^{\nobs}\sum_{j = 1}^{\nobs}\Pxij{i}{j} ^2 = \rangx,
\\ &\sum_{i = 1}^{\nobs} \Pxij{i}{i}^2 \leq \rangx,
\\ &\sum_{i = 1}^{\nobs}(1 - \Pxij{i}{i}) ^2 \leq  \nobs - \rangx .
\end{split}
\end{equation*}
\end{lemma}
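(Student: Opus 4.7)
\textbf{Proof plan for Lemma~\ref{lemma: sommes de pi ij}.} The plan is to exploit the two defining properties of the orthogonal projector matrix $\Px$: it is symmetric ($\Px'=\Px$) and idempotent ($\Px^2=\Px$), together with the fact that its rank $\rangx$ equals its trace. All three inequalities will follow from direct algebraic manipulations of these identities, with no analytic ingredients needed.

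For the first identity, I would write $\sum_{i,j}\Pxij{i}{j}^2$ as the squared Frobenius norm $\|\Px\|_F^2 = \trmtrx(\Px'\Px)$, then use symmetry and idempotence to replace $\Px'\Px$ by $\Px^2=\Px$, so that the sum equals $\trmtrx(\Px)=\rangx$. The second inequality is then immediate: since every term $\Pxij{i}{j}^2$ is non-negative, restricting the sum to the diagonal gives
\begin{equation*}
\sum_{i=1}^{\nobs}\Pxij{i}{i}^2 \;\leq\; \sum_{i,j=1}^{\nobs}\Pxij{i}{j}^2 \;=\;\rangx.
\end{equation*}

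For the third inequality, the key preliminary observation is that every diagonal entry satisfies $\Pxij{i}{i}\in[0,1]$. I would derive this by noting that idempotence and symmetry give
\begin{equation*}
\Pxij{i}{i} \;=\; (\Px^2)_{i,i} \;=\; \sum_{j=1}^{\nobs}\Pxij{i}{j}\Pxij{j}{i} \;=\; \sum_{j=1}^{\nobs}\Pxij{i}{j}^2 \;\geq\; \Pxij{i}{i}^2,
\end{equation*}
which forces $\Pxij{i}{i}(1-\Pxij{i}{i})\geq 0$, hence $0\leq \Pxij{i}{i}\leq 1$. Since $1-\Pxij{i}{i}\in[0,1]$, we have $(1-\Pxij{i}{i})^2\leq 1-\Pxij{i}{i}$, and summing over $i$ together with $\sum_i\Pxij{i}{i}=\trmtrx(\Px)=\rangx$ gives
\begin{equation*}
\sum_{i=1}^{\nobs}(1-\Pxij{i}{i})^2 \;\leq\; \sum_{i=1}^{\nobs}(1-\Pxij{i}{i}) \;=\; \nobs-\rangx,
\end{equation*}
which is the desired bound.

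There is no real obstacle here: the entire argument reduces to standard algebraic identities for symmetric idempotent matrices, the observation that diagonal entries of such a matrix lie in $[0,1]$, and the elementary inequality $x^2\leq x$ for $x\in[0,1]$.
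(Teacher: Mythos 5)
Your proposal is correct and follows essentially the same route as the paper: both rest on the identity $\sum_j \Pxij{i}{j}^2 = \Pxij{i}{i}$ (from symmetry and idempotence) together with $\trmtrx(\Px)=\rangx$. The only cosmetic difference is in the third inequality, where the paper expands $(1-\Pxij{i}{i})^2$ and invokes the second bound, while you use $\Pxij{i}{i}\in[0,1]$ and the inequality $x^2\leq x$; both are equally elementary and yield the same constant.
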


\begin{proof}
As $\Px' \Px'  = \Px$, we directly have $\sum_{j = 1}^{\nobs}\Pxij{i}{j} ^2 = \Pxij{i}{i}$ and by computing the trace we find that:
\begin{equation*}
\begin{split}
\sum_{i = 1}^{\nobs}\sum_{j = 1}^{\nobs}\Pxij{i}{j} ^2 = \rangx.
\end{split}
\end{equation*}
Then:
\begin{equation*}
\begin{split}
\sum_{i = 1}^{\nobs} \Pxij{i}{i}^2 
\leq& \sum_{i = 1}^{\nobs}\sum_{j = 1}^{\nobs}\Pxij{i}{j} ^2  
\\\leq& \rangx,
\\ \sum_{i = 1}^{\nobs}(1 - \Pxij{i}{i}) ^2 
=& \sum_{i = 1}^{\nobs}(1 - 2\Pxij{i}{i} +  \Pxij{i}{i}^2) 
\\ \leq&  \nobs - \rangx.
\end{split}
\end{equation*}
\end{proof}

\section{Kernel tricks}

\label{App:KernelTricks}

In this section, we apply kernel tricks to compute explicit expressions for all the quantities presented in the paper.


\subsection{Kernel trick for the truncated kernel Hotelling-Lawley statistic}

\subsubsection{Diagonalization of the residual covariance operator. }

We start with the determination of the eigenfunction and associated eigenvalues of the residual covariance operator $\rcov$. Observe that $\residuals\in\Hk^\nobs$ and $\rcov = \nobs^{-1} \residuals^\star \residuals = \nobs^{-1	} \sum_{i = 1}^{\nobs}\residual_i \otimes \residual_i \in \HSk$, where $\residuals = \Pxp \eY \in\Hk^\nobs$  (see Section~\ref{subs:opHkn} for more details on operators defined on $\in\Hk^\nobs$).

We then define $\rcovkt =  \nobs^{-1} \residuals \residuals^\star = \Big( \nobs^{-1}\left \langle \residual_i , \residual_j \right \rangle_{\Hk} \Big)_{i,j \in \{ 1, \dots, \nobs \}}\in\mtrx{\nobs}$. The operator $\Pxp$ being linear, it can be easily checked that
\begin{equation*}
\begin{split}
\rcovkt = \frac{1}{\nobs} \Pxp \gram \Pxp,
\end{split}
\end{equation*}
where $\gram = (\kernel(\dy_i,\dy_j))_{i,j \in \{1,\dots,\nobs \} }$ is the gram matrix of $\dY$ with respect to $\kernel(\cdot,\cdot)$. Thus the matrix $\rcovkt$ can be numerically computed and diagonalized. The following proposition highlights the link between $\rcov$ and $\rcovkt$. 

\begin{proposition}
\label{proposition: Diagonalisation de sigma epsilon}
The Hilbert-Schmidt operator $\rcov$ has the same spectrum than the matrix $\rcovkt$. Moreover, if $\rcoveckt \in \R^n$ is an unit eigenvector of $\rcovkt$ associated with the eigenvalue $\rcoval \Hk$, then an unit eigenfunction $\rcovec \in $ of $\rcov$ associated with $\rcoval$ may be obtained by :
\begin{equation}
\label{eq: formule de f en fonction de u}
\begin{split}
\rcovec = \frac{1}{\sqrt{\nobs \rcoval}}  \rcoveckt'\residuals.
\end{split}
\end{equation}
\end{proposition}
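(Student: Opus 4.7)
The proof is a direct application of the standard kernel-PCA trick that matches the nonzero spectra of the two products $\nobs\,\rcov = \residuals^\star \residuals \in \HSk$ and $\nobs\,\rcovkt = \residuals \residuals^\star \in \mtrx{\nobs}$, with eigenvectors transported from one to the other by $\residuals$ and its adjoint. I would prove the eigenfunction formula~\eqref{eq: formule de f en fonction de u} first by a direct verification, then obtain the spectrum equality as a corollary.

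For the eigenfunction formula, let $\rcoveckt \in \R^\nobs$ be a unit eigenvector of $\rcovkt$ with $\rcovkt \rcoveckt = \rcoval \rcoveckt$ and $\rcoval > 0$. Using the notation of \eqref{HkcontreH}, set $h := \rcoveckt' \residuals = \sum_{j=1}^{\nobs} (\rcoveckt)_j\, \residual_j \in \Hk$. Expanding $\rcov h = \nobs^{-1}\sum_i \hip{\residual_i,\, h}\, \residual_i$ and recognising $\nobs^{-1}\hip{\residual_i,\,\residual_j}$ as the $(i,j)$ entry of $\rcovkt$, one obtains $\rcov h = \sum_i (\rcovkt \rcoveckt)_i\, \residual_i = \rcoval\, h$, so $h$ is an eigenfunction for $\rcoval$. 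For the normalisation, the matrix identification \eqref{eq:gstarg} gives $\hn{h}^2 = \rcoveckt'(\residuals\residuals^\star)\rcoveckt = \nobs\,\rcoveckt'\,\rcovkt\,\rcoveckt = \nobs\rcoval$, so dividing $h$ by $\sqrt{\nobs\rcoval}$ yields the unit eigenfunction stated in~\eqref{eq: formule de f en fonction de u}.

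For the spectrum claim, the construction above already shows that every nonzero eigenvalue of $\rcovkt$ is an eigenvalue of $\rcov$. Conversely, if $\rcov\rcovec = \rcoval\rcovec$ with $\rcoval > 0$, then $\rcovec = \rcoval^{-1}\rcov\rcovec$ lies in $\Span(\residual_1,\dots,\residual_\nobs)$, so $\rcovec = a'\residuals$ for some $a \in \R^\nobs$. A symmetric computation to the one above gives $(\rcovkt a - \rcoval a)'\residuals = 0$ in $\Hk$; taking squared norm via \eqref{eq:gstarg} and using that $\rcovkt$ is positive semi-definite yields $\rcovkt(\rcovkt a - \rcoval a) = 0$, whence $\rcovkt(\rcovkt a) = \rcoval(\rcovkt a)$, and $\rcovkt a$ is nonzero (as $(\rcovkt a)'\residuals = \rcoval \rcovec \neq 0$), so $\rcoval$ is an eigenvalue of $\rcovkt$.

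The only real difficulty is notational: $\residuals$ must be read simultaneously as an element of $\Hk^\nobs$, as the linear map $\HSk \to \Hk^\nobs$ of \eqref{eq:gA}, and via its adjoint \eqref{eq:adjoint}. The identifications set up in Section~\ref{subs:opHkn} make the products $\residuals^\star\residuals \in \HSk$ and $\residuals\residuals^\star \in \mtrx{\nobs}$ unambiguous, after which the whole argument reduces to two one-line calculations.
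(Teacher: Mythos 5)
Your proof is correct and follows essentially the same route as the paper's: both exploit the duality between $\residuals^\star\residuals$ and $\residuals\residuals^\star$, transporting eigenvectors via $a\mapsto a'\residuals$ and obtaining the normalisation from the Gram identity $\|a'\residuals\|_{\Hk}^2=\nobs\, a'\rcovkt a$. The only difference is cosmetic: you verify the eigenfunction formula by direct expansion and are slightly more explicit about the non-vanishing of the transported vectors, whereas the paper packages the same computation as the intertwining relation $\rcovkt\residuals=\residuals\rcov$.
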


\begin{proof}
It can be checked that $ \nobs^{-1	}(\residuals\residuals^\star) \residuals =  \nobs^{-1	}\residuals(\residuals^\star \residuals)= \residuals \rcov \in \Hk^n$, where $\residuals$ is interpreted as an application defined as in \eqref{eq:gA}. Thus we have the relation 
\begin{equation}
\label{eq:relation_duale_kernel_trick}
\rcovkt \residuals = \residuals \rcov \in \Hk^n.
\end{equation}
Then, if $\rcovec \in \Hk$ is an unit eigenfunction of $\rcov$ associated with the eigenvalue $\rcoval$, we have (according to the application defined in \eqref{HkcontreH}):
\begin{equation*}
\begin{split}
&\rcov \rcovec = \rcoval \rcovec
\\ \Leftrightarrow & \residuals \rcov \rcovec = \rcoval \residuals \rcovec \in \R^n 
\\ \Leftrightarrow &  \rcovkt \big(\residuals\rcovec\big) = \rcoval \big(\residuals \rcovec\big)
\end{split}
\end{equation*}
Thus the vector $\residuals \rcovec\in\R^\nobs$ is an eigenvector of $\rcovkt$ associated with the eigenvalue $\rcoval$, and the spectrum of $\rcov$ is included in the spectrum of $\rcovkt$. On the other hand, let $\rcoveckt \in \R^n$ be an unit eigenvector of $\rcovkt$ associated with the eigenvalue $\rcoval$. We have:
\begin{equation*}
\begin{split}
& \rcoveckt' \rcovkt= \rcoval \rcoveckt'
\\ \Leftrightarrow & \rcoveckt' \rcovkt\residuals = \rcoval  \rcoveckt' \residuals
\\ \Leftrightarrow & \rcoveckt' (\residuals \rcov) = \rcoval  \rcoveckt' \residuals
\\ \Leftrightarrow & \rcov \big(\rcoveckt' \residuals\big)= \rcoval \big(\rcoveckt'\residuals\big),
\end{split}
\end{equation*}
where the last equivalence is due to $\rcoveckt' (\residuals \rcov) = \rcoveckt' \big(\rcov \residual_i\big)_{i=1,\dots,\nobs}= \rcov \sum_{i=1}^\nobs \rcoveckt_i \residual_i = \rcov (\rcoveckt' \residuals)$.


Thus the vector $\rcoveckt'\residuals\in\Hk$ is an eigenvector of $\rcov$ associated with the eigenvalue $\rcoval$, and the spectrum of $\rcovkt$ is included in the spectrum of $\rcov$. We conclude that $\rcov$ and $\rcovkt$ share the same spectrum. 

Let $\rcoveckt$ be an unit eigenvector of $\rcovkt$ associated with the eigenvalue $\rcoval$. We have that:
\begin{equation*}
\begin{split}
\left \| \rcoveckt' \residuals \right \| _{\Hk}^2 
=& \left \langle \sum_{i = 1}^{\nobs}\rcoveckt_i \residual_i , \sum_{j = 1}^{\nobs}\rcoveckt_j \residual_j \right \rangle_{\Hk}
\\=& \sum_{i = 1}^{\nobs} \sum_{j = 1}^{\nobs}  \rcoveckt_i \left \langle \residual_i , \residual_j \right \rangle_{\Hk} \rcoveckt_j
\\=& \nobs \rcoveckt^\prime \rcovkt \rcoveckt
\\=&  \nobs \rcoval  \rcoveckt^\prime  \rcoveckt 
\\=& \nobs \rcoval
\end{split}
\end{equation*}
Thus, $(\nobs \rcoval)^{-\frac{1}{2}} \rcoveckt'\residuals \in\Hk$ is an unit eigenfunction of $\rcov$. 
\end{proof}

\newcommand{\prey}{\mathbf{K}_\tmax}
\newcommand{\aprey}{{\mathbf{K}_\tmax^{\Anc}}}
\subsubsection{Computation of the truncated kernel Hotelling-Lawley trace statistic}
\paragraph{}
Let $\rcovecskt = (\rcoveckt_1,\dots,\rcoveckt_\tmax) \in \mtrx{\nobs,\tmax}$ and $\rcovals = \operatorname{diag}(\rcoval_1,\dots,\rcoval_\tmax) \in \mtrx{\tmax}$ be the matrix containing the $\tmax$ first eigenvectors and eigenvalues of $\rcovkt$ respectively, and let 
$\mathbf{D} = \XXXprime \Lmanova^\prime (\Lmanova \XXinv \Lmanova^\prime )^{-1} \Lmanova \XXX = (d_{i,j})_{i,j\in \{1,\dots,\nobs \} } \in \mtrx{\nobs}$. $\mathbf{D}$ is an orthogonal projector, as $\mathbf{D}^{\prime}=\mathbf{D}$ and $\mathbf{D}^2=\mathbf{D}$ and we have $\Hmanova 
= \eY^\star \mathbf{D} \eY=  \sum_{i,j = 1}^{\nobs} d_{i,j} \ey{i} \otimes \ey{j}$. Then : 
\begin{equation}
\label{eq:kernel_trick_statmanova}
\begin{split}
\statmanova 
=& \trhs(\THLcov) 
\\=& \left \langle \rcovt^{-1}, \Hmanova\right \rangle_{\HSk}
\\=& \sum_{i,j = 1}^{\nobs}\sum_{t = 1}^{\tmax} d_{i,j} \rcoval_t^{-1} \left \langle \rcovec_t \otimes \rcovec_t , \ey{i} \otimes \ey{j} \right \rangle_{\HSk}
\\=& \sum_{i,j = 1}^{\nobs}\sum_{t = 1}^{\tmax} d_{i, j} \rcoval_t^{-1} \left \langle \rcovec_t,\ey{i} \right \rangle_{\Hk}\left \langle \rcovec_t,\ey{j} \right \rangle_{\Hk}
\\=&\trmtrx(\THLcovkt),
\end{split}
\end{equation}
where $\prey  
:=  \Big( \rcoval_t^{-\frac{1}{2}}\left \langle \rcovec_t,\ey{i} \right \rangle_{\Hk}\Big)
_{t \in \{1,\dots,\tmax\},i\in\{ 1, \dots, \nobs \}}\in\mtrx{\tmax,\nobs}$. From Proposition~\ref{proposition: Diagonalisation de sigma epsilon}, we find that
\begin{equation}
\label{eq:KTKY}
\prey  
 = \nobs^{-\frac{1}{2}} \rcovals^{-1} \rcovecskt^{\prime} \Pxp \gram .
\end{equation}

The following proposition shows the relation between $\THLcovkt$ and $\THLcov$. 
\newcommand{\hprey}{\Psi}
\begin{proposition}
\label{proposition: Diagonalisation de THLcov}
The Hilbert-Schmidt operator $\THLcov$ has the same spectrum than the matrix $\prey \mathbf{D} \prey^\prime \in \mtrx{\tmax}$. Moreover, if $\THLcoveckt \in \R ^\tmax$ is an unit eigenvector  of $\THLcovkt$ associated with the eigenvalue $\rcoval$, then an unit eigenfunction $\THLcovec \in\Hk$ of $\THLcov$ associated with $\rcoval$ may be obtained by :
\begin{equation}
\label{eq: formule de f en fonction de u THLcov}
\begin{split}
\THLcovec = \frac{\THLcoveckt' \hprey \Hmanova  }{\sqrt {\THLcoveckt' \prey \mathbf{D} \gram \mathbf{D} \prey' \THLcoveckt
 }} 
\end{split}
\end{equation}
where $\hprey = \nobs^{-\frac{1}{2}} \rcovals^{-1} \rcovecskt^{\prime} \Pxp \eY \in \Hk^T$.
\end{proposition}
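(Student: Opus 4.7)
The plan is to reduce the spectral analysis of $\THLcov = \rcovt^{-1}\Hmanova$, an operator on the infinite-dimensional space $\Hk$, to that of a $\tmax \times \tmax$ matrix, by exploiting the fact that $\rcovt^{-1}$ has rank at most $\tmax$. First I will introduce the ``square root'' $\rcovt^{-1/2} := \sum_{t=1}^{\tmax}\rcoval_t^{-1/2}\rcovec_t\otimes\rcovec_t$ together with the auxiliary operator $\widetilde{\THLcov} := \rcovt^{-1/2}\Hmanova\rcovt^{-1/2}$. Setting $A = \rcovt^{-1/2}$ and $B = \rcovt^{-1/2}\Hmanova$, we have $\THLcov = AB$ and $\widetilde{\THLcov} = BA$; since both products are of finite rank, the classical $AB/BA$ spectrum identity ensures that $\THLcov$ and $\widetilde{\THLcov}$ share the same non-zero eigenvalues, with the same algebraic multiplicities.

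Because $\rcovt^{-1/2}$ vanishes on $V^\perp$ with $V := \operatorname{Span}(\rcovec_1,\dots,\rcovec_{\tmax})$, the operator $\widetilde{\THLcov}$ is self-adjoint, of rank at most $\tmax$, and has image contained in $V$; its non-zero spectrum therefore coincides with that of its matrix representation on the orthonormal basis $(\rcovec_1,\dots,\rcovec_{\tmax})$ of $V$. A direct calculation gives $\langle\rcovec_t,\widetilde{\THLcov}\rcovec_s\rangle_{\Hk} = \rcoval_t^{-1/2}\rcoval_s^{-1/2}\langle\rcovec_t,\Hmanova\rcovec_s\rangle_{\Hk}$, and plugging in the factorization $\Hmanova = \eY^{\star}\mathbf{D}\eY$ established in the proof of Lemma~\ref{lemma: Borne sur H} yields $\langle\rcovec_t,\Hmanova\rcovec_s\rangle_{\Hk} = \sum_{i,j} d_{i,j}\langle\rcovec_t,\ey{i}\rangle_{\Hk}\langle\ey{j},\rcovec_s\rangle_{\Hk}$. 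By definition of $\prey$ this is precisely $(\prey\mathbf{D}\prey^{\prime})_{t,s}$, which establishes the spectrum identification.

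For the eigenfunction formula, suppose $\THLcoveckt \in \R^{\tmax}$ is a unit eigenvector of $\prey\mathbf{D}\prey^{\prime}$ with eigenvalue $\xi \neq 0$. By the matrix identification above, $\tilde g := \sum_{t=1}^{\tmax}\THLcoveckt_t\rcovec_t$ is an eigenfunction of $\widetilde{\THLcov}$ at eigenvalue $\xi$, and the standard $AB/BA$ transfer then lifts $\tilde g$ to the corresponding eigenfunction of $\THLcov$. Substituting the kernel-trick identity $\rcovec_t = (\nobs\rcoval_t)^{-1/2}\rcoveckt_t^{\prime}\residuals$ from Proposition~\ref{proposition: Diagonalisation de sigma epsilon} and grouping the coefficients into $\hprey$ produces the $\Hk$-valued numerator displayed in \eqref{eq: formule de f en fonction de u THLcov}.

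It remains to identify the normalization constant. Writing $g = \hprey^{\star}\THLcoveckt \in \Hk$, I would use $\Hmanova = \eY^{\star}\mathbf{D}\eY$ to compute $\|\Hmanova g\|_{\Hk}^{2} = (\eY g)^{\prime}\mathbf{D}\gram \mathbf{D}(\eY g)$; the vector $\eY g \in \R^{\nobs}$ with entries $\langle\ey{j},g\rangle_{\Hk}$ equals $\prey^{\prime}\THLcoveckt$ by definition of $\prey$, yielding the denominator $\sqrt{\THLcoveckt^{\prime}\prey\mathbf{D}\gram \mathbf{D}\prey^{\prime}\THLcoveckt}$. The main obstacle I anticipate is not analytical but combinatorial: carefully tracking adjoints, transposes, and dimension changes across the chain $\Hk \leftrightarrow \HSk \leftrightarrow \Hk^{\nobs} \leftrightarrow \R^{\tmax} \leftrightarrow \R^{\nobs}$, and in particular verifying that the matrix representation of $\widetilde{\THLcov}|_V$ in the basis $(\rcovec_t)$ is exactly $\prey\mathbf{D}\prey^{\prime}$ rather than a conjugate by $\rcovals^{\pm 1/2}$ or a transpose.
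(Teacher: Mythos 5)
Your identification of the spectrum is correct and follows a genuinely different route from the paper's: you symmetrize, comparing $\THLcov=AB$ with $BA=\rcovt^{-1/2}\Hmanova\rcovt^{-1/2}$ for $A=\rcovt^{-1/2}$ and $B=\rcovt^{-1/2}\Hmanova$, and you read off the matrix of $BA$ in the orthonormal basis $(\rcovec_1,\dots,\rcovec_\tmax)$, which is indeed $\mathbf{K}_\tmax\mathbf{D}\mathbf{K}_\tmax'$. The paper never symmetrizes; it factorizes $\rcovt^{-1}=\Psi^\star\Psi$ and $\mathbf{K}_\tmax=\Psi\eY^\star$ and works through an intertwining relation between $\THLcov$ and $\mathbf{K}_\tmax\mathbf{D}\mathbf{K}_\tmax'$. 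Both arguments give equality of the nonzero spectra, and yours is arguably cleaner for that part.

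The gap is in the eigenfunction step. The $AB/BA$ lift you invoke sends a unit eigenvector $\tilde g=\sum_t \THLcoveckt_t\rcovec_t$ of $BA$ to $A\tilde g=\rcovt^{-1/2}\tilde g=\sum_t \THLcoveckt_t\rcoval_t^{-1/2}\rcovec_t=\THLcoveckt'\Psi$. No regrouping of coefficients turns this into the displayed numerator $\THLcoveckt'\Psi\Hmanova=\Hmanova(\THLcoveckt'\Psi)$: the two expressions differ by an application of $\Hmanova$ and are not proportional in general. Concretely, $\THLcoveckt'\Psi$ lies in $\Span(\rcovec_1,\dots,\rcovec_\tmax)$, which contains the range of $\THLcov$ and therefore every eigenfunction associated with a nonzero eigenvalue, whereas $\Hmanova(\THLcoveckt'\Psi)$ lies in the range of $\Hmanova$ and satisfies the eigenvalue equation for the reversed product $\Hmanova\rcovt^{-1}$ rather than for $\rcovt^{-1}\Hmanova$ (take $\Hk=\R^2$, $\tmax=1$, $\rcovt^{-1}$ the projector onto $e_1$ and $\Hmanova$ non-diagonal to see that the two candidates are genuinely different and that only the first one is an eigenfunction of $\THLcov$). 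Your normalization computation is likewise the norm of $\Hmanova(\THLcoveckt'\Psi)$, not of the function $\THLcoveckt'\Psi$ that your construction actually produces, whose squared norm is $\THLcoveckt'\rcovals^{-1}\THLcoveckt$. So as written, your argument establishes the spectrum claim and exhibits a valid eigenfunction of $\THLcov$, but the final assertion that it "produces the displayed numerator" is false and cannot be patched, since the displayed expression solves the eigenproblem for $\Hmanova\rcovt^{-1}$; you must either reproduce the paper's intertwining derivation of that expression and flag the order-of-composition issue explicitly, or state and normalize the formula corresponding to your own lift.
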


\begin{proof}
On the one hand, from \eqref{eq:KTKY} we find that $\prey =\nobs^{-\frac{1}{2}} \rcovals^{-1} \rcovecskt^{\prime} \Pxp  \left( \eY \eY^\star \right) =  \hprey \eY^\star$, where $\hprey \eY^\star$ is seen as the linear application from $\Hk^n $ to $\Hk^T$ defined for $\boldsymbol{h} \in \Hk^n$  by $(\hprey \eY^\star)( \boldsymbol{h})  := \hprey ( \eY^\star \boldsymbol{h}) = (\eY^\star \boldsymbol{h} (\hprey_1) , \dots, \eY^\star \boldsymbol{h} (\hprey_\tmax) )  \in \Hk ^\tmax  $   according the applications defined in and \eqref{eq:gA} and \eqref{eq:adjoint}.

On the other hand, we have
\begin{eqnarray*}
\rcovt^{-1} &=&
 \sum_{t = 1}^{\tmax} (\rcoval_{t}^{-1/2} \rcovec_t)  \otimes (\rcoval_{t}^{-1/2} \rcovec_t) \\
&=& \left(n^{-1/2} \rcovals^{-1} \rcovecskt^{\prime} \residuals  \right)^\star
\left( n^{-1/2} \rcovals^{-1} \rcovecskt^{\prime} \residuals \right) \hskip 1cm \textrm{according to Proposition \ref{proposition: Diagonalisation de sigma epsilon}}\\
&=& \hprey^\star \hprey.
\end{eqnarray*}
Thus, $\THLcov = (\hprey^\star \hprey) (\eY^\star \mathbf{D} \eY)$ and $\THLcovkt = \hprey \eY^\star \mathbf{D} \eY \hprey^\star$ according to the properties of the adjoint operator. In particular, we have that $\hprey^\star \THLcovkt =  \hprey^\star \hprey \eY^\star \mathbf{D} \eY \hprey^\star$ and thus 
\begin{equation*}
\hprey^\star \THLcovkt= \THLcov \hprey^\star .
\end{equation*}
By taking the adjoint on each term and composing by $\Hmanova$, it gives
\begin{equation}
\label{eq:HLdualite}
\begin{split}
\hprey \eY^\star \mathbf{D} \eY \THLcov = \THLcovkt \hprey \eY^\star \mathbf{D} \eY.
\end{split}
\end{equation}
Also note that according to \eqref{eq:gstarg} and Proposition~\ref{proposition: Diagonalisation de sigma epsilon},
\begin{equation*}
\begin{split}
\hprey \hprey^\star &= \rcovals^{-\frac{1}{2}} \Big( \left \langle \rcovec_t , \rcovec_{t^\prime} \right \rangle_{\Hk} \Big)_{t,t^\prime \in \{ 1, \dots, \tmax \}} \rcovals^{-\frac{1}{2}} \\
&= \rcovals^{-1}.
\end{split}
\end{equation*}

Then, if $\THLcovec \in \Hk$ is an unit eigenfunction of $\THLcov$ associated with the eigenvalue $\THLcoval$, we have $\THLcov \THLcovec = \THLcoval \THLcovec$ and with \eqref{eq:HLdualite} it gives that
\begin{equation*}
  \hprey \eY^\star \mathbf{D} \eY \THLcov \THLcovec = \THLcoval \hprey \eY^\star \mathbf{D} \eY \THLcovec
\end{equation*}
Note that this inequality is in $\R^T$ according to the application defined in Equation \eqref{HkcontreH}. Finally we obtain  
\begin{equation*}
\THLcovkt \Big(\hprey \eY^\star \mathbf{D} \eY\THLcovec\Big) = \THLcoval \Big(\hprey \eY^\star \mathbf{D} \eY \THLcovec\Big)  
\end{equation*}
which means that the vector $\hprey \eY^\star \mathbf{D} \eY \THLcovec = \hprey  \Hmanova   \THLcovec \in\R^\tmax  $ is an eigenvector of $\THLcovkt$ associated with the eigenvalue $\THLcoval$. The spectrum of $\THLcov$ is thus included in the spectrum of $\THLcovkt$. On the other hand, let $\THLcoveckt \in \R^\tmax $ be an unit eigenvector of $\THLcovkt$ associated with the eigenvalue $\THLcoval$. We have $\THLcoveckt' \THLcovkt  = \THLcoval \THLcoveckt' $ and   since $\hprey \eY^\star \mathbf{D} \eY \in \R^\tmax$, then  
\begin{equation*}
 \THLcoveckt' \THLcovkt \hprey \eY^\star \mathbf{D} \eY = \THLcoval \THLcoveckt' 
\hprey \eY^\star \mathbf{D} \eY,
\end{equation*}
then using $ \prey =  \hprey \eY^\star$, $\rcovt^{-1} = \hprey^\star \hprey$ and $\Hmanova 
= \eY^\star \mathbf{D} \eY$ it gives
\begin{equation*}
 \THLcoveckt' (\hprey \eY^\star \mathbf{D} \eY \THLcov) = \THLcoval \Big( \THLcoveckt' 
\hprey \eY^\star \mathbf{D} \eY \Big).
\end{equation*}
Finally we obtain
\begin{equation*}
\THLcov \Big(\THLcoveckt' \hprey \eY^\star \mathbf{D} \eY\Big) = \THLcoval \Big( \THLcoveckt' 
\hprey \eY^\star \mathbf{D} \eY \Big),
\end{equation*}
see the proof of Proposition \ref{proposition: Diagonalisation de sigma epsilon} for the details of the last equivalence. Thus the vector $\THLcoveckt' 
\hprey \eY^\star \mathbf{D} \eY   =   \THLcoveckt' 
\hprey \Hmanova \in\Hk$ is an eigenvector of $\THLcov$ associated with the eigenvalue $\THLcoval$, and the spectrum of $\THLcovkt$ is included in the spectrum of $\THLcov$. We conclude that $\THLcov$ and $\THLcovkt$ share the same spectrum. Let $\THLcoveckt \in \R^\tmax$ be an unit eigenvector of $\THLcovkt$ associated with the eigenvalue $\THLcoval$. We have that:  
\begin{equation*}
\begin{split}
\left \| \THLcoveckt'  \hprey \Hmanova \right \| _{\Hk}^2 
=&    \langle \THLcoveckt' \hprey  \eY^\star \mathbf{D} \eY ,  \THLcoveckt'  \hprey \eY^\star \mathbf{D} \eY \rangle_\Hk  \\
=&    \langle \THLcoveckt' \prey \mathbf{D} \eY ,  \THLcoveckt' \prey  \mathbf{D} \eY \rangle_\Hk  \\
=&  \langle z '   \eY ,  z'  \eY \rangle_\Hk \\ 
=&   z ' \gram z   , 
\end{split}
\end{equation*}
where $z  = (\THLcoveckt' \prey \mathbf{D})' \in \R ^\nobs$. Thus, $(\THLcoveckt' \prey \mathbf{D} \gram \mathbf{D} \prey' \THLcoveckt
)^{-\frac{1}{2}} \THLcoveckt' \hprey \Hmanova  $ is an unit eigenfunction of $\THLcov$.


\end{proof}

Let $Y_0$ a new point in $\mathcal Y$ and $\ey{0}$ its featurization in $\Hk$. We can finally apply a kernel trick to compute the projection of $\ey{0}$ on the eigenfunction $\THLcovec$ of $\THLcov$ given by \eqref{eq: formule de f en fonction de u THLcov}. The coordinate of the projection of $\ey{0}$ on the eigenspace of $\THLcovec$  is given by
\begin{equation*}
\begin{split}
 \langle  \THLcovec ,  \ey{0}  \rangle_\Hk   
 & = \frac 1 {\sqrt {\THLcoveckt' \prey \mathbf{D} \gram \mathbf{D} \prey' \THLcoveckt
 }}  \langle   \THLcoveckt' \hprey \Hmanova    ,  \ey{0}  \rangle_\Hk \\ 
 & = \frac 1 {\sqrt {\THLcoveckt' \prey \mathbf{D} \gram \mathbf{D} \prey' \THLcoveckt
 }}  \langle   \THLcoveckt' \hprey  \eY^\star \mathbf{D} \eY  ,  \ey{0}  \rangle_\Hk \\
 & = \frac 1 {\sqrt {\THLcoveckt' \prey \mathbf{D} \gram \mathbf{D} \prey' \THLcoveckt  }} \THLcoveckt' \prey \mathbf{D}  \: \boldsymbol{k}(\dY, Y_0)
\end{split}
\end{equation*}
where $\boldsymbol{k}(\dY, Y_0) = (   k( Y_1, Y_0), \dots, k( Y_n, Y_0))'$.

\subsection{Kernel trick for the Nyström statistic}
\label{sec:The Nystrom method}

\newcommand{\lPx}{\mathbf{P}_{\lX}}
\newcommand{\lPxp}{\mathbf{P}_{\lX}^{\perp}}

\subsubsection{Computation of the landmarks and anchors}
To compute the Nystr\"om approximation of the truncated kernel Hotelling-Lawley trace statistic \eqref{eq:nystrom_statistic}, we need a kernel trick to determine the eigenfunctions and eigenvalues of $\arcov$. 

Let $\lY=(\ly_1,\dots,\ly_ \nlandmarks)$ be the $\nlandmarks<\nobs$ landmarks sampled from $\dY$. 
We denote $\flandmarks:\{ 1, \dots, \nlandmarks \} \longrightarrow \{ 1, \dots, \nobs \}$ such that $\lY=(\dy_{\flandmarks(1)},\dots,\dy_{\flandmarks(\nlandmarks)})$ and the associated explanatory variables are $\lX = (\lx{1},\dots,\lx{\nlandmarks})^\prime \in\mtrx{\nlandmarks,\nx}$. Let $\lproj\in\mtrx{\nlandmarks,\nobs}$ such that the $i^{th}$ column of $\lproj^\prime$ is the vector of $\R^\nobs$ full of zero except in $\flandmarks(i)$ where it is equal to $1$, then we have $\lY = \lproj \dY$ and $\lX = \lproj \dX$.

Let $\elY=(\ely{1},\dots,\ely{\nlandmarks})$ be the embeddings of the landmarks in $\Hk$. We define the Gram matrices $\gramly = (\left \langle \ely{i},\ely{j}\right \rangle_{\Hk}))_{i,j\in\{ 1, \dots, \nlandmarks \}} \in \mtrx{\nlandmarks}$, $\gramyz = (\left \langle \ey{i},\ely{j}\right \rangle_{\Hk}))_{i\in\{ 1, \dots, \nobs \},j\in\{ 1, \dots, \nlandmarks \}} \in \mtrx{\nobs,\nlandmarks}$ and $\gramzy = \gramyz^\prime$. We have $\elY=\lproj\eY$, $\gramly = \elY\elY^\star$, $\gramyz = \eY\elY^\star$ and $\gramzy = \elY \eY^\star$. 

We assume that $\lX$ is full rank and denote $\lPx = \lX (\lX^\prime \lX)^{-1} \lX^\prime$ and $\lPxp = (\identity_\nlandmarks - \lPx)$, then we have the vector of landmark residuals $\lresiduals=  (\lresidual_1,\dots,\lresidual_\nlandmarks)'$ in $\Hk^\nlandmarks$ such that :
\begin{equation}
\label{eq:landmarks_residuals_expression}
\begin{split}
\lresiduals = \lPxp \elY.   
\end{split}
\end{equation}
The landmark residual covariance operator is such that $\lrcov = \nlandmarks^{-1}\lresiduals^\star\lresiduals$. The $\nanchors$ anchors $ \anc_1,\dots,\anc_\nanchors $ are defined as the orthonormal set of eigenfunctions of $\lrcov$ associated with its highest and non-increasing set of eigenvalues encoded in the diagonal matrix $\lrcovals = \diag(\lrcoval_1,\dots,\lrcoval_\nanchors)\in\mtrx{\nanchors}$. We also define  $\lrcovkt = \nlandmarks^{-1} \lresiduals\lresiduals^\star = \nlandmarks^{-1} \lPxp \gramly \lPxp$. The following proposition highlights the link between $\lrcov$ and $\lrcovkt$: 
\begin{proposition}
\label{proposition: Diagonalisation de landmark residual covariance}
The Hilbert-Schmidt operator $\lrcov$ has the same spectrum than the matrix $\lrcovkt$. Moreover, if $\lrcoveckt$ is an unit eigenvector of $\lrcovkt$ associated with the eigenvalue $\lrcoval$, then an unit eigenfunction $\lrcovec$ of $\lrcov$ associated with $\lrcoval$ may be obtained by :
\begin{equation}
\label{eq: formule de f en fonction de u landmarks}
\begin{split}
\lrcovec = \frac{1}{\sqrt{\nlandmarks \lrcoval}}  \lrcoveckt{}' \lresiduals 
\end{split}
\end{equation}
\end{proposition}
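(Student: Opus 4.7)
The proof mirrors that of Proposition~\ref{proposition: Diagonalisation de sigma epsilon}, with $(\lresiduals, \lrcov, \lrcovkt)$ playing the role of $(\residuals, \rcov, \rcovkt)$ and $q$ replacing $n$. The starting point is the same key duality relation between the operator acting on the Hilbert space and its matrix counterpart:
\begin{equation*}
\lrcovkt \lresiduals = \lresiduals \lrcov \in \Hk^\nlandmarks,
\end{equation*}
which follows directly from the associativity of the composition $(\lresiduals \lresiduals^\star)\lresiduals = \lresiduals(\lresiduals^\star \lresiduals)$ applied to the definitions $\lrcov = \nlandmarks^{-1}\lresiduals^\star \lresiduals$ and $\lrcovkt = \nlandmarks^{-1}\lresiduals\lresiduals^\star$.

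From this identity, I would deduce the spectrum equality in two steps. First, assume $\lrcovec \in \Hk$ is a unit eigenfunction of $\lrcov$ associated with a nonzero eigenvalue $\lrcoval$. Applying $\lresiduals$ (interpreted as the application from $\Hk$ to $\R^\nlandmarks$ defined as in \eqref{HkcontreH}) to the equation $\lrcov \lrcovec = \lrcoval \lrcovec$ and using the duality gives $\lrcovkt (\lresiduals \lrcovec) = \lrcoval (\lresiduals \lrcovec)$, so $\lresiduals \lrcovec \in \R^\nlandmarks$ is an eigenvector of $\lrcovkt$ with the same eigenvalue. Conversely, if $\lrcoveckt \in \R^\nlandmarks$ satisfies $\lrcovkt \lrcoveckt = \lrcoval \lrcoveckt$ with $\lrcoval \neq 0$, then applying $\lrcoveckt{}'$ on the left and using $\lrcoveckt{}'(\lresiduals \lrcov) = \lrcov(\lrcoveckt{}' \lresiduals)$ (by linearity of $\lrcov$, arguing as in the original proof) yields $\lrcov(\lrcoveckt{}' \lresiduals) = \lrcoval (\lrcoveckt{}' \lresiduals)$, so $\lrcoveckt{}'\lresiduals \in \Hk$ is an eigenfunction of $\lrcov$ associated with $\lrcoval$. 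This establishes that $\lrcov$ and $\lrcovkt$ share the same nonzero spectrum.

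For the normalization, assuming $\lrcoveckt$ is a unit eigenvector of $\lrcovkt$, I compute
\begin{equation*}
\|\lrcoveckt{}'\lresiduals\|_\Hk^2 = \sum_{i,j=1}^\nlandmarks \lrcoveckt_i \lrcoveckt_j \langle \lresidual_i, \lresidual_j\rangle_\Hk = \nlandmarks\, \lrcoveckt{}' \lrcovkt \lrcoveckt = \nlandmarks \lrcoval \lrcoveckt{}' \lrcoveckt = \nlandmarks \lrcoval,
\end{equation*}
which justifies the normalization factor $1/\sqrt{\nlandmarks \lrcoval}$ in formula~\eqref{eq: formule de f en fonction de u landmarks}.

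There is no real obstacle here: the argument is a direct transposition of the one already carried out in the proof of Proposition~\ref{proposition: Diagonalisation de sigma epsilon}, the only substitutions being $\residuals \to \lresiduals$, $n \to \nlandmarks$, and using that $\lresiduals = \lPxp \elY$ inherits the properties needed for the duality identity. The only minor point to check explicitly is that the duality holds not only for $\lrcov$ but for the projected landmark residuals, which is automatic since $\lPxp$ is an orthogonal projector on $\R^\nlandmarks$ and, by Lemma~\ref{lem:proj_est_une_proj}, also acts as an orthogonal projector on $\Hk^\nlandmarks$.
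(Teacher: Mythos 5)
Your proof is correct and follows exactly the route the paper takes: the paper's own proof of Proposition~\ref{proposition: Diagonalisation de landmark residual covariance} consists of the single remark that the argument is identical to that of Proposition~\ref{proposition: Diagonalisation de sigma epsilon}, and your write-up is a faithful transposition of that argument with $(\lresiduals,\lrcov,\lrcovkt,\nlandmarks)$ in place of $(\residuals,\rcov,\rcovkt,\nobs)$, including the correct duality identity and normalization computation.
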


\begin{proof}
The proof is exactly the same than the proof of Proposition \ref{proposition: Diagonalisation de sigma epsilon}. 
\end{proof}

We denote $\lrcovecskt = (\lrcoveckt_1,\dots,\lrcoveckt_\nanchors)\in\mtrx{\nlandmarks,\nanchors}$ the matrix where the $\nanchors$ columns form an orthonormal set of eigenvectors of $\lrcovkt$ associated with the eigenvalues $\lrcoval_1,\dots,\lrcoval_\nanchors$, then for $i\in\{ 1, \dots, \nanchors \}$, we have :
\begin{equation}
\label{eq:anchors_expression}
\begin{split}
\ancMat =(\anc_1,\dots,\anc_\nanchors)^\prime =  \frac{1}{\sqrt{\nlandmarks}} {\lrcovals{}}^{-\frac{1}{2}} {\lrcovecskt}^{\prime} \lresiduals \in \Hk ^\nanchors.
\end{split}
\end{equation}

\subsubsection{Diagonalisation of the Nyström residual covariance operator}

Let $\aresiduals = (\aresidual_1,\dots,\aresidual_\nobs)$ be the vector of $\Hk^\nobs$ containing the projections of the residuals onto $\Span(\anc_1,\dots,\anc_\nanchors)$. We have $\aresidual_i = \sum_{s = 1}^{\nanchors}\left \langle \anc_s , \residual_i \right \rangle_{\Hk}\anc_s$ and
$\arcov=\frac{1}{\nobs} \sum_{i=1}^\nobs \aresidual_i \otimes \aresidual_i$.

We now derive a kernel trick to determine the orthonormal eigenfunctions $\arcovec{1},\dots,\arcovec{\nanchors}$ of $\arcov$ associated with the eigenvalues $\arcoval{1},\dots,\arcoval{\nanchors}$.

We define $\arcovkt \in \mtrx{\nanchors}$ with general term $\langle \anc_s , \rcov \anc_{s'} \rangle_{\Hk}$. The following proposition highlights the link between $\arcov$ and $\arcovkt$.


\begin{proposition}
\label{proposition: Diagonalisation de nystrom residual covariance}
The Hilbert-Schmidt operator $\arcov$ has the same spectrum than the matrix $\arcovkt$. Moreover, if $\arcoveckt$ is an unit eigenvector of $\arcovkt$ associated with the eigenvalue $\arcoval{}$, then an unit eigenfunction $\arcovec{}$ of $\arcov$ associated with $\arcoval{}$ may be obtained by :
\begin{equation}
\label{eq: formule de f en fonction de u nystrom residual}
\begin{split}
\arcovec{} = (\arcoveckt{})' \ancMat .
\end{split}
\end{equation}
\end{proposition}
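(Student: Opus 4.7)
The plan is to mirror the strategy used in the proof of Proposition~\ref{proposition: Diagonalisation de sigma epsilon}, while exploiting the fact that $\arcov$ has rank at most $\nanchors$ and acts on the explicit finite-dimensional subspace spanned by the anchors. The key preliminary identity is
\begin{equation*}
\arcov = \Panc \, \rcov \, \Panc,
\end{equation*}
where $\Panc = \sum_{s = 1}^{\nanchors} \anc_s \otimes \anc_s$ is the orthogonal projection on $\Span(\anc_1, \dots, \anc_\nanchors)$. This follows directly from $\aresidual_i = \Panc \residual_i$ together with the elementary relation $(\Panc \residual_i) \otimes (\Panc \residual_i) = \Panc (\residual_i \otimes \residual_i) \Panc$.

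Since $\operatorname{Im}(\arcov) \subset \Span(\anc_1, \dots, \anc_\nanchors)$, every nonzero eigenfunction of $\arcov$ must lie in this span, while $\arcov$ vanishes on its orthogonal complement. I would therefore parameterize any candidate eigenfunction as $f = c' \ancMat = \sum_{s = 1}^{\nanchors} c_s \anc_s$ with $c \in \R^\nanchors$. Using $\Panc \anc_s = \anc_s$ and the orthonormality of the anchors,
\begin{equation*}
\arcov \anc_s = \Panc \rcov \anc_s = \sum_{s' = 1}^{\nanchors} \langle \anc_{s'}, \rcov \anc_s \rangle_{\Hk} \anc_{s'} = \sum_{s' = 1}^{\nanchors} (\arcovkt)_{s',s} \anc_{s'},
\end{equation*}
so by linearity $\arcov(c' \ancMat) = (\arcovkt c)' \ancMat$. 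Because the anchors are linearly independent, the eigen-equation $\arcov f = \arcoval{} f$ is then equivalent to $\arcovkt c = \arcoval{} c$. This establishes a bijection between the nonzero eigenpairs of $\arcov$ and those of $\arcovkt$ and hence the coincidence of spectra (the zero eigenvalue appears in $\arcov$ on $\Span(\anc)^{\perp}$ and in $\arcovkt$ whenever it is not full rank, as in Proposition~\ref{proposition: Diagonalisation de sigma epsilon}).

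Finally, for the unit-norm calibration, if $\arcoveckt \in \R^\nanchors$ is a unit eigenvector of $\arcovkt$ then $f = (\arcoveckt)' \ancMat$ satisfies
\begin{equation*}
\|f\|_{\Hk}^2 = \sum_{s,s' = 1}^{\nanchors} \arcoveckt_s \, \arcoveckt_{s'} \langle \anc_s, \anc_{s'} \rangle_{\Hk} = \|\arcoveckt\|_2^2 = 1,
\end{equation*}
by orthonormality of the anchors, which yields the stated formula $\arcovec{} = (\arcoveckt{})' \ancMat$ directly. Note that, in contrast to Proposition~\ref{proposition: Diagonalisation de sigma epsilon}, no extra $(n \rcoval)^{-1/2}$ rescaling is needed here, precisely because the $\anc_s$ are already orthonormal by construction. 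The argument is essentially finite-dimensional once the $\Panc$ reduction has been performed, so there is no genuine obstacle; the only point requiring care is the bookkeeping between $\arcov$ viewed as a Hilbert--Schmidt operator on $\Hk$ and its matrix representation $\arcovkt$ in the anchor basis.
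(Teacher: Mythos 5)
Your proof is correct and follows essentially the same route as the paper's: both rest on the orthonormality of the anchors and the identity $\arcov = \ancMat^\star \arcovkt \ancMat$ (your $\arcov = \Panc \rcov \Panc$ together with $\arcov \anc_s = \sum_{s'}(\arcovkt)_{s',s}\anc_{s'}$ is the same fact, read as saying that $\arcovkt$ is the matrix of $\arcov$ restricted to $\Span(\anc_1,\dots,\anc_\nanchors)$ in the anchor basis). The paper phrases the eigenvector correspondence through the intertwining relation $\ancMat\arcov = \arcovkt\ancMat$ in both directions, but this is only a cosmetic difference from your direct matrix-representation argument, and your norm computation matches the paper's.
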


\begin{proof}
Observe that:
\begin{equation*}
\begin{split}
\arcov &=\frac{1}{\nobs}\sum_{i=1}^\nobs \Big(\sum_{s = 1}^{\nanchors}\left \langle \anc_s , \residual_i \right \rangle_{\Hk}\anc_s\Big)\otimes\Big(\sum_{s' = 1}^{\nanchors}\left \langle \anc_{s'} , \residual_i \right \rangle_{\Hk}\anc_{s'}\Big)
\\&=\frac{1}{\nobs}\sum_{s,s' = 1}^{\nanchors} \left \langle \anc_s , \Big( \sum_{i=1}^\nobs \residual_i \otimes \residual_i \Big) \anc_{s'}\right \rangle_{\Hk}\anc_s \otimes\anc_{s'}
\\&= \ancMat^\star \arcovkt \ancMat.
\end{split}
\end{equation*}
As the anchors $\anc_1,\dots,\anc_\nanchors$ form an orthonormal basis of $\Hka$, we have $\ancMat \ancMat ^\star = \big(\hip{\anc_i,\anc_j}\big)_{i,j\in\{1,\dots,\nanchors\}} = \identity_\nanchors \in \mtrx{\nanchors}$. Thanks to this relation, we directly have that $\ancMat \arcov = \arcovkt \ancMat$. Let $\arcoveckt{}$ be the eigenvector of $\arcovkt$ associated with the non-zero eigenvalue $\arcoval{}$. We have $ \arcoveckt{} '(\arcovkt \ancMat) =  \arcoveckt{} ' (\ancMat \arcov) = \arcov (\arcoveckt{}'\ancMat)$, where the last equality is obtained similarly to the proof of Proposition \ref{proposition: Diagonalisation de sigma epsilon}. We also have $(\arcoveckt{}'\arcovkt) \ancMat = \arcoval{} \arcoveckt{}'\ancMat$, thus $\arcoveckt{}'\ancMat$ is an eigenfunction of $\arcov$ associated with the eigenvalue $\arcoval{}$, and the non-zero eigenvalues of $\arcovkt$ are included in the set of non-zero eigenvalues of $\arcov$.

Now let $\arcovec{}$ be an eigenfunction of $\arcov$ associated with the non-zero eigenvalue $\arcoval{}$. As $\arcov$ is self-adjoint, for $i\in\{1,\dots,\nobs\}$ we have $\Big((\Anc \arcov)\arcovec{}\Big)_i = \langle \arcov \anc_i , \arcovec{} \rangle_\Hk = \langle \anc_i ,  \arcov \arcovec{} \rangle_\Hk$. Thus, we have $\arcovkt \Anc \arcovec{} = \Anc \arcov \arcovec{} = \arcoval{} \Anc \arcovec{}$. Then $\Anc  \arcovec{}\in\R^\nanchors$ is an eigenvector of $\arcovkt$ associated with the eigenvalue $\arcoval{}$, thus the non-zero eigenvalues of $\arcov$ are included in the set of non-zero eigenvalues of $\arcovkt$. Thus, $\arcov$ and $\arcovkt$ share the same spectrum. As we have $\hn{\arcoveckt{}' \Anc} = 1$, then $\arcovec{} = \arcoveckt{}'\Anc$. 
\end{proof}

In practice, the full expression of $\arcovkt$ is such that : 
\begin{equation*}
\label{eq:kernel_trick_nystrom_residual_covariance}
\begin{split}
\arcovkt = \frac{1}{\nobs\nlandmarks} \lrcovals^{-\frac{1}{2}} {\lrcovecskt}^\prime \lPxp \gramzy \Pxp \gramyz \lPxp \lrcovecskt \lrcovals^{-\frac{1}{2}}
\end{split}
\end{equation*}

\subsubsection{Computation of the Nystr\"om statistic}
\paragraph{}
Let $\arcovecskt = (\arcoveckt_1,\dots,\arcoveckt_\tmax) \in \mtrx{\nanchors,\tmax}$ and $\arcovals = \operatorname{diag}(\arcoval{1},\dots,\arcoval{\tmax}) \in \mtrx{\tmax}$ be the matrix containing the $\tmax$ first eigenvectors and eigenvalues of $\arcovkt$ respectively. The Nystr\"om statistic $\statmanovany$ can be written with respect to $\arcovecskt$, $\arcovals$ and $\mathbf{D}$. With the same development than in Equation \eqref{eq:kernel_trick_statmanova}, we have : 
\begin{equation*}
\begin{split}
\statmanovany
=\trmtrx(\aprey^\prime \mathbf{D} \aprey),
\end{split}
\end{equation*}
where $\aprey = (\nlandmarks {\arcovals})^{-\frac{1}{2}} {\arcovecskt}^\prime {\lrcovals}^{-\frac{1}{2}} {\lrcovecskt}^\prime \lPxp\gramzy = \Big( \arcoval{t}^{-\frac{1}{2}}\left \langle \arcovec{t},\ey{i} \right \rangle_{\Hk}\Big)_{t \in \{1,\dots,\tmax\}, i \in \{1,\dots,\nobs\}}\in\mtrx{\tmax,\nobs}$.


\subsection{Kernel trick for the diagnostics and kernel Cook's distance}
\label{app:diagnostics}

\subsubsection{Computation of the Diagnostic Plots}
For $t\in\{ 1, \dots, \tmax \}$, the $t^{th}$ kernel response plot contains the values of the $t^{th}$ column of the matrix representing the projections of the response embeddings onto the eigenfuction $\rcovec_t$:
\begin{equation*}
 \begin{split}
 \Big( \left \langle \ey{i} , \rcovec_t \right \rangle_{\Hk}\Big)_{i \in \{ 1, \dots, \nobs \},t \in \{ 1, \dots, \tmax \}} &=
\Big( \left \langle \ey{i} , \tfrac{1}{\sqrt{\nobs \rcoval_t}}  \rcoveckt'_t\residuals \right \rangle_{\Hk}\Big)_{i \in \{ 1, \dots, \nobs \},t \in \{ 1, \dots, \tmax \}} \\&=
\Big( \left \langle \ey{i} , \tfrac{1}{\sqrt{\nobs \rcoval_t}}  \rcoveckt'_t\Pxp \eY \right \rangle_{\Hk}\Big)_{i \in \{ 1, \dots, \nobs \},t \in \{ 1, \dots, \tmax \}} \\&=
\tfrac{1}{\sqrt{\nobs}}\gram \Pxp \rcovecskt \rcovals^{-\frac{1}{2}} \in \mtrx{\nobs,\tmax}.
 \end{split}
\end{equation*}
The $t^{th}$ kernel residual plot contains the values of the $t^{th}$ column of the matrix representing the projections of the residuals onto the eigenfuction $\rcovec_t$:
 \begin{equation*}
 \begin{split}
 \Big( \left \langle \residual_i , \rcovec_t \right \rangle_{\Hk}\Big)_{i \in \{ 1, \dots, \nobs \},t \in \{ 1, \dots, \tmax \}} &=
\Big( \left \langle \Pxpi{i} \eY , \tfrac{1}{\sqrt{\nobs \rcoval_t}}  \rcoveckt'_t\Pxp \eY \right \rangle_{\Hk}\Big)_{i \in \{ 1, \dots, \nobs \},t \in \{ 1, \dots, \tmax \}} \\&=
\tfrac{1}{\sqrt{\nobs}}\Pxp \gram \Pxp \rcovecskt \rcovals^{-\frac{1}{2}} \in \mtrx{\nobs,\tmax},
 \end{split}
\end{equation*}
where $\Pxpi{i}$ is the $i$-th  row of $\Pxp$. Both quantities are plotted against the values of the $t^{th}$ column of the matrix representing the projections of the predicted embeddings onto the eigenfuction $\rcovec_t$:
\begin{equation*}
 \begin{split}
 \Big( \left \langle \hey{i} , \rcovec_t \right \rangle_{\Hk}\Big)_{i \in \{ 1, \dots, \nobs \},t \in \{ 1, \dots, \tmax \}} &=
\Big( \left \langle \Pxi{i} \eY , \tfrac{1}{\sqrt{\nobs \rcoval_t}}  \rcoveckt'_t\Pxp \eY \right \rangle_{\Hk}\Big)_{i \in \{ 1, \dots, \nobs \},t \in \{ 1, \dots, \tmax \}} \\&=
\tfrac{1}{\sqrt{\nobs}}\Px \gram \Pxp \rcovecskt \rcovals^{-\frac{1}{2}} \in \mtrx{\nobs,\tmax},
 \end{split}
\end{equation*}
where $\Pxi{i}$ is the $i$-th  row of $\Px$.

\subsubsection{Computation of the kernel Cook's distance}

Let $\mathbf{g} = (g_1,\dots,g_\nLmanova)$ a random vector of $\Hk^\nLmanova$.
We define the covariance of $\mathbf{g}$ as a $\nLmanova \times \nLmanova$ matrix where the coordinates are elements of $\HSk$, such that for $i,j \in \{ 1, \dots, \nLmanova \}$, we have $\cov(\mathbf{g})_{i,j} = \cov(g_i,g_j) = \E\left( (g_i - \E(g_i)) \otimes (g_j - \E(g_j))\right)$. 

In order to extend the Cook's distance to our framework in RKHS (see fo instance \cite{diaz-garcia_sensitivity_2005}), we would like to introduce a Cook's distance of the form
\begin{equation}
\label{eq:CookA}
\mathcal{D}_{\mathbf{L}}(i) = \nLmanova^{-1} \trhs\left( (\Lmanova(\hparams - \hparams_{(i)}))^\star 
\mathbf{A}
(\Lmanova (\hparams- \hparams_{(i)}))\right),
\end{equation}
where $\mathbf{A} \in \mathcal{M}_{d,d} (\HSk)$ is a regularized inverse of $\cov(\Lmanova\hparams)$. 

Now we compute the expectation $\E (\Lmanova\hparams)$ and covariance $\cov(\Lmanova\hparams)$ of $\Lmanova\hparams$. Note that the combination of Equations \eqref{eq:linear model on embeddings} and \eqref{eq: empirical model parameters} gives 
$\Lmanova\hparams = \Lmanova\params + \Lmanova\XXX \errors$. Let $j \in \{ 1, \dots, \nLmanova \}$, we have $(\Lmanova\hparams)_j = (\Lmanova\params)_j + \lxxxj'\errors$, where $\lxxxj \in \R^\nobs$ is the $j^{th}$ column of $W=\LXXXprime$. Then we have $\E \Big((\Lmanova\hparams)_j\Big) = (\Lmanova\params)_j$. Let $i,j \in \{ 1, \dots, \nLmanova \}$, we have:
\begin{equation*}
 \begin{split}
 \cov\Big( (\Lmanova\hparams)_i,(\Lmanova\hparams)_j \Big) 
 =& \E\left ( ((\Lmanova\hparams)_i - (\Lmanova\params)_i)\otimes ((\Lmanova\hparams)_j - (\Lmanova\params)_j) \right)
 \\ =& \E \left ( (\lxxxi_{\ast, i}'\errors) \otimes (\lxxxj'\errors) \right ) 
 \\ =& \sum_{k,l = 1}^{\nobs}\lxxxi_{k,i} \lxxxi_{l,j}  \E \left ( \error_k\otimes \error_l \right )
 \\ =& \sum_{k = 1}^{\nobs}\lxxxi_{k,i} \lxxxi_{k,j}  \ecov_k \\ =& \sum_{k = 1}^{\nobs}\lxxxi_{k,i} \lxxxi_{k,j}  \ecovt + \sum_{k = 1}^{\nobs}\lxxxi_{k,i} \lxxxi_{k,j}  \ecovtci{k}
 \\ =& (\Lmanova\XXinv\Lmanova')_{i,j} \ecovt + \sum_{k = 1}^{\nobs}\lxxxi_{k,i} \lxxxi_{k,j}  \ecovtci{k}.
 \end{split}
 \end{equation*} 
 Since this covariance is unknown, we can only propose an approximation of this last through the spectral truncation we use in the paper.  We consider the truncated covariance $\cov_\tmax(\Lmanova\hparams)$ of general term $(\Lmanova\XXinv\Lmanova')_{i,j} \ecovt$, which is approximated through $\widehat{\cov}_\tmax(\Lmanova\hparams)$ of general term $(\Lmanova\XXinv\Lmanova') _{i,j} \rcovt$. 

We are now in position to introduce the truncated Cook distance we propose in this paper: 
\begin{equation*}
\mathcal{D}_{\mathbf{L},\tmax}(i) 
: = 
 \nLmanova^{-1} \trhs\left( (\Lmanova(\hparams - \hparams_{(i)}))^\star 
(\Lmanova\XXinv\Lmanova')^{-1}  \otimes  \rcovt^{-1}
(\Lmanova (\hparams- \hparams_{(i)}))\right),
\end{equation*}

For $i \in \{ 1, \dots, \nobs \}$, let the least squares estimator $\hparams_{(i)}$   computed without  using observation $i$. By generalizing to RKHS standard calculations, see for instance \cite{diaz-garcia_sensitivity_2005}, we obtain
\begin{equation*}
\begin{split}
\Lmanova(\hparams - \hparams_{(i)}) = \left (\begin{matrix} \frac{ \lxxxi_{i,1}}{1 - \Pxij{i}{i}} \residual_i \\ \vdots \\  \frac{\lxxxi_{i,\nLmanova}}{1 - \Pxij{i}{i}} \residual_i  \end{matrix} \right) \in \Hk^{\nLmanova}.
\end{split}
\end{equation*}

According to \eqref{eq:formequadra}, we then have that: 
\begin{equation*}
\begin{split}
\mathcal{D}_{\mathbf{L}, \tmax}(i) 
=& \frac{1}{\nLmanova}\trhs\left( \sum_{j,k = 1}^{\nLmanova} \frac{ \lxxxi_{i,j}}{1 - \Pxij{i}{i}} \frac{ \lxxxi_{i,k}}{1 - \Pxij{i}{i}} (\Lmanova\XXinv\Lmanova')_{j,k}^{-1} \residual_i \otimes \rcovt^{-1} \residual_i \right)
\\=& \frac{1}{\nLmanova} \sum_{j,k = 1}^{\nLmanova} \frac{ \lxxxi_{i,j}}{1 - \Pxij{i}{i}} \frac{ \lxxxi_{i,k}}{1 - \Pxij{i}{i}} (\Lmanova\XXinv\Lmanova')_{j,k}^{-1} \trhs\left(\residual_i \otimes \rcovt^{-1}. \residual_i \right)
\end{split}
\end{equation*}
We remark that:
\begin{equation*}
\begin{split}
\sum_{j,k = 1}^{\nLmanova} \frac{ \lxxxi_{i,j}}{1 - \Pxij{i}{i}} \frac{ \lxxxi_{i,k}}{1 - \Pxij{i}{i}} (\Lmanova\XXinv\Lmanova')_{j,k}^{-1} 
= \frac{\lxxxi_{i,\ast} (\Lmanova\XXinv\Lmanova')^{-1} \lxxxi_{i,\ast}^\prime}{(1 - \Pxij{i}{i})^2}
\end{split}
\end{equation*}
with $\lxxxi_{i,\ast}$ being the $i^{th}$ row of $W$, and
\begin{equation*}
\begin{split}
\trhs\left(\residual_i \otimes \rcovt^{-1} \residual_i \right) 
=& \sum_{t = 1}^{\tmax}\rcoval_t^{-1} \left \langle \rcovec_t , \residual_i \right \rangle_{\Hk}^2.
\end{split}
\end{equation*}
We recognize the inner products contained in the matrix $\frac{1}{\sqrt{\nobs}}\Pxp \gram \Pxp \rcovecskt \rcovals^{-\frac{1}{2}}$ , thus we have:
\begin{equation*}
\begin{split}
\trhs\left(\residual_i \otimes \rcovt^{-1} \residual_i \right) 
=&\frac{1}{\nobs}(\Pxp \gram \Pxp \rcovecskt \rcovals^{-2}\rcovecskt ^\prime\Pxp \gram \Pxp )_{i,i}
\end{split}
\end{equation*}
Finally, we have an expression for the Cook distance $\mathcal{D}_{\mathbf{L}}$ associated with the $i^{th}$ observation:
\begin{equation*}
\begin{split}
\mathcal{D}_{\mathbf{L},\tmax}(i)  = \frac{\lxxxi_{i,\ast} (\Lmanova\XXinv\Lmanova')^{-1} \lxxxi_{i,\ast}^\prime}{\nLmanova\nobs(1 - \Pxij{i}{i})^2}(\Pxp \gram \Pxp \rcovecskt \rcovals^{-2}\rcovecskt ^\prime\Pxp \gram \Pxp )_{i,i}.
\end{split}
\end{equation*}

\subsection{Computational costs}
\paragraph{Computational cost of the TKHL statistic}
We compute the projection matrix $\Pxp$ for $O(\nx^3)$ operations. We compute and diagonalize the matrix $\rcovkt$ for $O(\nobs^3)$ operations. Then we compute the matrix $\prey$ and the TKHL statistic for $O(\tmax^2\nobs+ \nobs^2\tmax)$ operations. As $\nobs\gg \nx$ and $\nobs\gg\tmax$, the global computational cost of the algorithm is $O(\nobs^3)$. 

\paragraph{Computational cost of the TKHL statistic}
We compute the projection matrix $\lPxp$ for $O(\nx^3)$ operations. We compute and diagonalize the matrix 
$\lrcovkt$ for $O(\nlandmarks^3)$ operations. Then we compute and diagonalize the matrix $\arcovkt$ for 
$O(\nanchors^3 + \nanchors^2 \nlandmarks + \nanchors \nlandmarks^2 + \nanchors\nlandmarks \nobs + \nanchors\nobs^2)$. Then we compute the matrix $\aprey$ and the Nystr\"om TKHL statistic for 
$O(\tmax^2\nanchors + \tmax \nanchors^2 + \tmax \nanchors \nlandmarks + \tmax \nlandmarks^2 + \tmax \nlandmarks \nobs + \tmax \nobs^2)$ operations.  Typically, we have $\nobs\gg\nlandmarks\geq\nanchors \geq\tmax$ and $\nx$ is small, then the global computational cost of the algorithm is $O(\nobs^2(\nanchors+\tmax) + \nlandmarks^3 + \nanchors^3)$, which is lower than $O(\nobs^3)$. 

\section{Supplementary Figures}

\begin{figure}[h]
  \centering
   \begin{subfigure}{0.43\textwidth}
        \includegraphics[trim={0 0 0 0},clip, width=0.8\textwidth]{./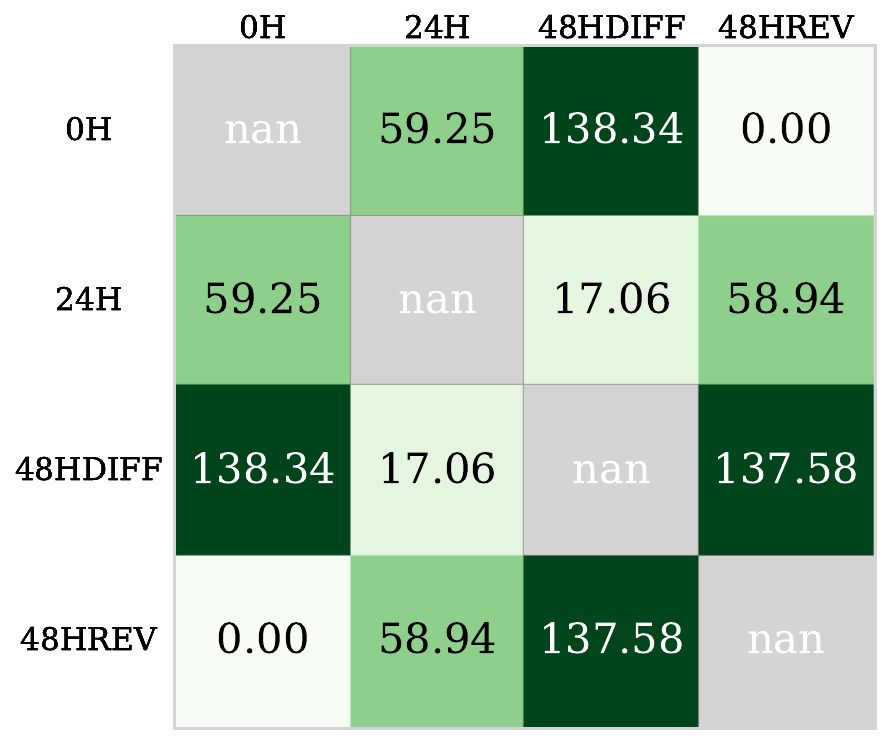}
      \label{fig:heat1}
      \caption{}
    \end{subfigure}
    \begin{subfigure}{0.43\textwidth}
      \includegraphics[trim={0 0 0 0},clip, width=0.8\textwidth]{./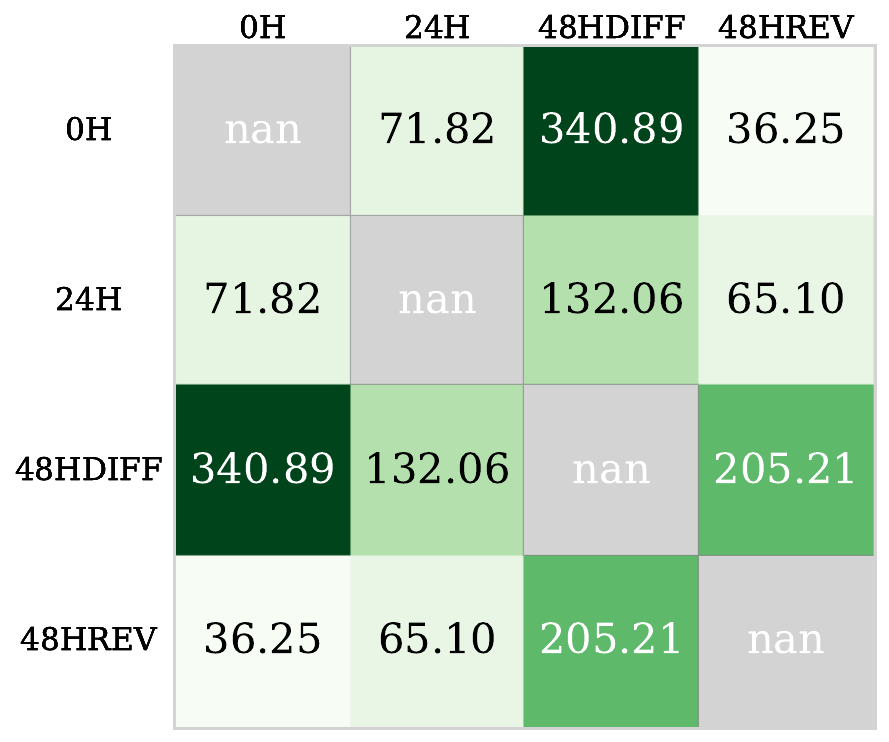}
      \label{fig:heat2}
      \caption{}
    \end{subfigure}
    \begin{subfigure}{0.43\textwidth}
      \includegraphics[trim={0 0 0 0},clip, width=0.8\textwidth]{./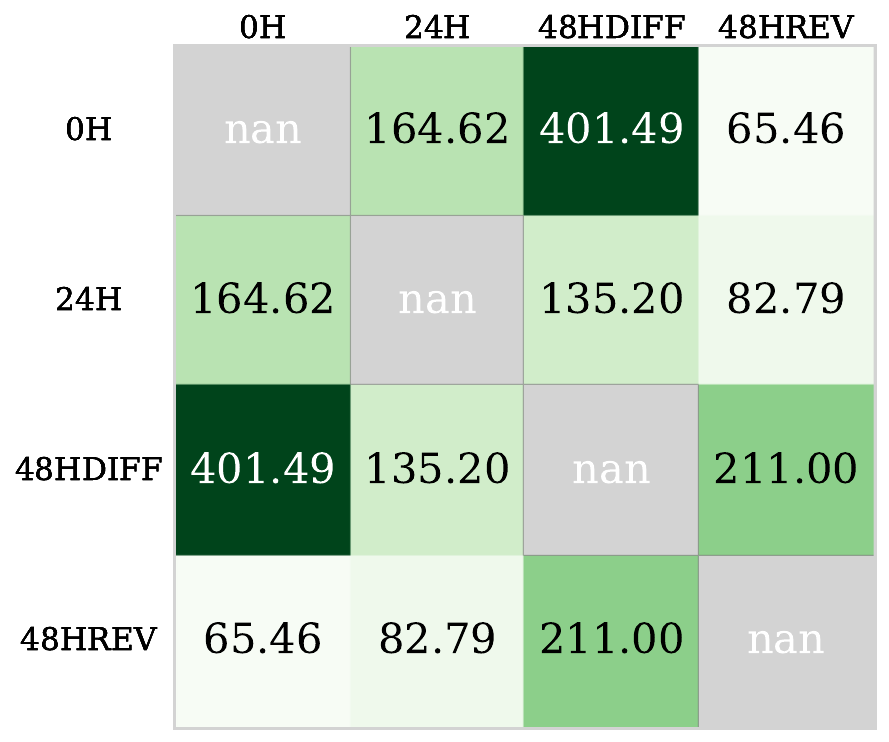}
      \label{fig:heat3}
      \caption{}
    \end{subfigure}
    \begin{subfigure}{0.43\textwidth}
      \includegraphics[trim={0 0 0 0},clip, width=0.8\textwidth]{./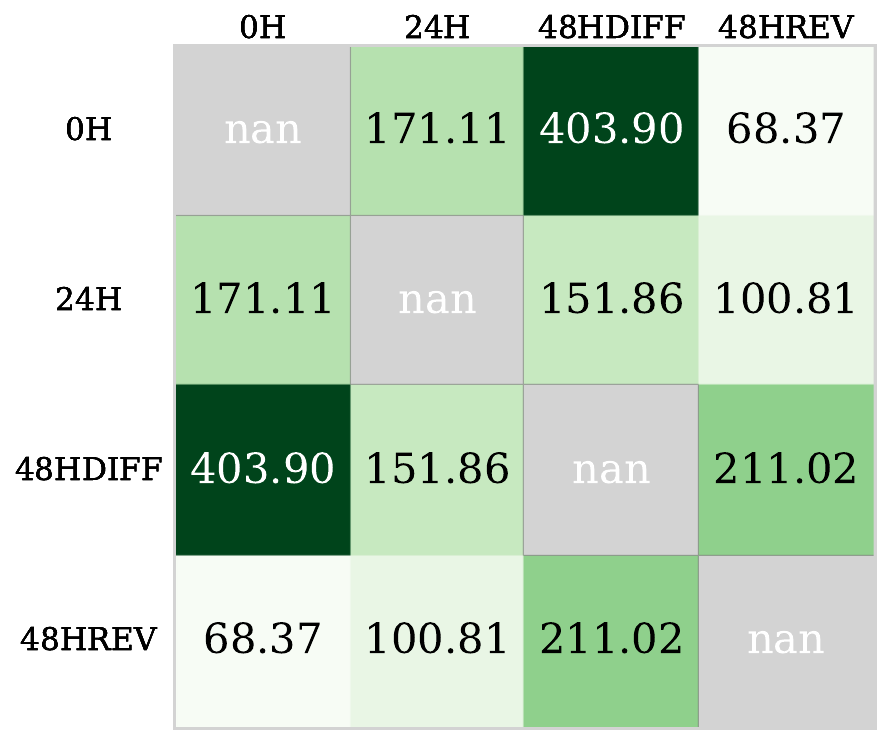}
      \label{fig:heat4}
      \caption{}
    \end{subfigure}
  \caption{Heatmaps of the values of the TKHL statistic associated with (a) $t=1$, (b) $t=2$, (c) $t=3$ and (d) $t=4$, obtained with pairwise tests for different comparisons between media.}
\label{fig:heatmaps}
\end{figure}

\end{document}